\documentclass[11pt]{article}
\usepackage{amsmath,amssymb,amsthm}
\usepackage[margin=1.1in]{geometry}
\usepackage{hyperref}

\newtheorem{theorem}{Theorem}[section]
\newtheorem{lemma}[theorem]{Lemma}
\newtheorem{proposition}[theorem]{Proposition}
\newtheorem{corollary}[theorem]{Corollary}
\theoremstyle{definition}

\newtheorem{remark}[theorem]{Remark}

\newcommand{\Z}{\mathbb Z}
\newcommand{\C}{\mathbb C}
\newcommand{\R}{\mathbb R}
\newcommand{\Ht}{\widetilde H}
\newcommand{\Ker}{\operatorname{Ker}}

\newcommand{\sig}{\operatorname{sig}}
\newcommand{\frc}[1]{\{#1\}}
\newcommand{\e}[1]{e\!\left(#1\right)}   %
\newcommand{\PB}{\mathit{PB}}            %

\title{%
A closed signature formula for the Katz--Long--Moody Hermitian form%
}
\author{Haru Negami\\[2pt]
\small Chiba University, 1-33 Yayoi-cho, Inage-ku, Chiba 263-8522, Japan\\
\small \texttt{negamiharu@gmail.com}}
\date{}

\begin{document}
\maketitle

\begin{abstract}
The Katz--Long--Moody construction associates to a representation
of $F_n\rtimes B_n$ and a nonzero parameter $\lambda$ a new
representation of the same group. On the pure braid group
$\PB_{n+1}\cong F_n\rtimes\PB_n$ it corresponds to Haraoka's
multiplicative middle convolution for KZ-type equations. For unitary input and $|\lambda|=1$,
$\lambda\ne1$, the construction equips the quotient representation
with a canonical non-degenerate invariant Hermitian form.
We give a closed formula for its signature in terms of the
eigenangles of the input local monodromies, their ordered product,
and the convolution parameter. The formula accounts for the
kernel of the form before passage to the quotient and for
signature changes at resonant parameters. It determines precisely
when the induced form is definite, answering the definiteness
problem posed in the companion paper. Definiteness implies
unitarizability of the output representation; the converse holds
when that representation is irreducible.

The proof uses elementary linear algebra: a determinant identity,
explicit block-pivot formulas, and an inertia formula for sums
of Cayley transforms of unitary matrices. As applications,
we compare the rank-one construction explicitly with Haraoka's
invariant form for Pochhammer systems, recover the Gauss case
of the Beukers--Heckman interlacing criterion, and determine
the definite parameter intervals for the Hecke and
Temperley--Lieb specializations.
\end{abstract}

\medskip\noindent
\textit{Key words:} braid group representations; Katz--Long--Moody
construction; middle convolution; KZ-type equations; invariant
Hermitian forms; signature; unitarizability; Hecke and
Temperley--Lieb algebras.

\medskip\noindent
\textit{2020 Mathematics Subject Classification:} 20F36; 32G34;
33C20; 20C08; 15A63.

\section{Introduction}\label{sec:intro}

The multiplicative middle convolution of a monodromy representation
that is unitary relative to a Hermitian form is again unitary
relative to one, and that form is non-degenerate \cite{Haraoka2020},
\cite{N-KLM}. This does not make the convolved representation
unitary in the usual sense: the canonical form may be indefinite.
The first question is therefore whether this canonical form is
definite. Definiteness implies that the representation is
unitarizable; the converse requires irreducibility, since a
reducible representation may preserve a positive definite form
that is not a multiple of the canonical one
(Remark~\ref{rem:hecke-consequences}(c);
Proposition~\ref{prop:transport}(iii)). The question is then
concrete: the input is a tuple of matrices and a convolution
parameter; how does the signature of the canonical form depend on
the input, and for which parameters is it definite?

Haraoka's multiplicative middle convolution for
Knizhnik--Zamolodchikov (KZ)-type equations \cite{Haraoka2020} has an
algebraic counterpart in the
Katz--Long--Moody (KLM) construction \cite{HN}, \cite{N-KLM}. Here
$F_n\rtimes B_n$ is the semidirect product formed with respect to
the Artin representation $B_n\to\operatorname{Aut}(F_n)$, an action
of $B_n$ on $F_n=\pi_1$ of the punctured disc; we follow the
convention of \cite{N-KLM}. A KZ-type equation is a linear system
$\partial u/\partial z_i=\sum_{j\neq i}A_{ij}(z_i-z_j)^{-1}u$
$(i=0,\dots,n)$ with constant $N\times N$ matrices $A_{ij}=A_{ji}$
subject to the integrability conditions \cite[Def.~13]{N-KLM}. The KZ
equation itself was derived in conformal field theory as a differential equation satisfied by $n$-point
correlation functions \cite{KZ}. To a representation
of $F_n\rtimes B_n$ on $V$ and a nonzero parameter $\lambda$ the KLM
construction associates a new representation of the same group,
obtained by passing from $V^{\oplus n}$ to a quotient determined by
the input matrices and $\lambda$; on the pure braid group
$\PB_{n+1}\cong F_n\rtimes\PB_n$ it corresponds to the
multiplicative middle convolution for KZ-type equations, and its
rank-one case contains
the hypergeometric and Pochhammer systems and the Burau
representation. For unitary input and $|\lambda|=1$,
$\lambda\neq1$, the quotient representation carries a canonical
non-degenerate invariant Hermitian form \cite{N-KLM}, and the
companion paper gave a recursive procedure for computing its
signature. In this paper we
replace this recursive computation by a closed formula, valid also
when the input matrices have eigenvalue $1$ and when the convolution
parameter lies on a resonance wall.

The formula uses only the eigenangles of the input matrices
$g_1,\dots,g_n$, those of their ordered product $P_n=g_1\cdots g_n$,
and the parameter $\lambda$; nothing has to be diagonalized along
the way. In the
local-system interpretation, the spectrum of $P_n^{-1}$ records the
monodromy at infinity; thus the formula uses the local spectral data
at all punctures. In particular, for fixed $\lambda$, the signature is
constant among unitary tuples with the same local conjugacy classes,
even when the tuples themselves vary in a positive-dimensional family.

The signature computation is a statement about arbitrary unitary
tuples and does not require braid compatibility. We therefore state
the main theorem first at the tuple level, and then explain its
consequences for the KLM representations arising from unitary
braid-compatible input.

To state the result, let $g_1,\dots,g_n\in U(N)$, let
$\lambda=\e{l}:=\exp(2\pi il)$ with $l\in(0,1)$, and let
$\Ht=\Ht(\lambda)$ be the
invariant Hermitian form of the construction (recalled in
Section~\ref{sec:prelim}). Write $P_n:=g_1\cdots g_n$ for the
ordered product of the input matrices --- the letter $P$ with a
numerical subscript is reserved for these partial products
throughout, the pure braid group being written $\PB_n$ --- and let
$a_j^{(k)}\in[0,1)$ $(1\le j\le n,\ 1\le k\le N)$ and
$\beta_k(l)\in[0,1)$ $(1\le k\le N)$ be the eigenangles of $g_j$ and
of $\lambda P_n$, listed with algebraic multiplicity. Set
\[
K:=\bigoplus_j\Ker(g_j-1),\quad \kappa:=\dim K,\quad R:=nN-\kappa,\quad
r(l):=\dim\Ker(\lambda P_n-1),
\]
\[
m(l):=Nl+\sum_{j,k}a_j^{(k)}-\sum_k\beta_k(l)\ \in\Z ,
\]
and let $L=L(\lambda)\subseteq V^{\oplus n}$ be the
Dettweiler--Reiter subspace, cut out by $v_k=g_{k+1}v_{k+1}$
$(1\le k\le n-1)$ and $v_n=\lambda P_nv_n$; the map $v\mapsto v_n$
identifies it with $\Ker(\lambda P_n-1)$, so $\dim L=r(l)$.

\begin{theorem}[Main Theorem, tuple level; = Cor.~\ref{cor:quotsig}]
\label{thm:main-intro}
Let $g_1,\dots,g_n\in U(N)$ be arbitrary. For every $l\in(0,1)$ ---
that is, on the entire admissible parameter space
$\lambda\in S^{1}\setminus\{1\}$ --- the Hermitian form $\Ht(\lambda)$
on $V^{\oplus n}$ has kernel $K\oplus L$, and the form it induces
on $V^{\oplus n}/(K\oplus L)$ is non-degenerate, of signature
\[
\bigl(R-m(l),\;m(l)-r(l)\bigr),
\]
with $r(l)=0$ off the finitely many walls
$\lambda^{-1}\in\operatorname{spec}P_n$. In particular: the induced
form on the quotient is definite if and only if $m\in\{r,R\}$
(positive definite iff $m=r$, negative definite iff $m=R$); the
ambient form $\Ht$ is then semidefinite, with kernel $K\oplus L$ of
dimension $\kappa+r$, and is itself definite if and only if moreover
$\kappa+r=0$.
\end{theorem}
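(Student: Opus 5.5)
The approach is to compute the kernel of $\Ht(\lambda)$ directly, and then obtain the signature by block pivoting until what remains is a sum of Cayley transforms, whose inertia is read off from eigenangles. Throughout I use the explicit block form of $\Ht(\lambda)$ recalled in Section~\ref{sec:prelim}: the $(i,j)$ block is built from $(1-g_i)$, $(1-g_j)$, the partial products $g_{i+1}\cdots g_j$, and $\lambda$.

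\emph{Step 1: the kernel.} First I check that $K\oplus L\subseteq\Ker\Ht$. For $K$ this holds because each block of $\Ht$ acting on the $j$-th slot factors through $(1-g_j)$. For $L$ I substitute $v_k=g_{k+1}v_{k+1}$ and $v_n=\lambda P_nv_n$ and watch the sums telescope. The reverse inclusion is a dimension count. I will prove a determinant identity expressing $\det\Ht$, restricted to a complement of $K$, as a nonzero scalar times $\prod_j\det'(1-g_j)\cdot\det(1-\lambda P_n)$, where $\det'$ is taken on the complementary eigenspaces. This shows $\dim\Ker\Ht=\kappa+r$ exactly. The identity should follow from unipotent row and column operations that triangularize the block matrix, and it also makes plain that $r(l)=0$ off the walls $\lambda^{-1}\in\operatorname{spec}P_n$.

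\emph{Step 2: reduction to Cayley transforms.} On a complement of $K$ each $1-g_j$ is invertible, so I can pivot successively on the diagonal blocks, using explicit block-pivot (Schur complement) formulas. Sylvester's law of inertia makes the signature additive under this process. The pivots should be congruent to Cayley-type blocks $C(g_j)=i(1+g_j)(1-g_j)^{-1}$, restricted to $(\Ker(g_j-1))^\perp$. The final Schur complement should be congruent to $C(\lambda P_n)$ up to sign, and it absorbs $L$ as its kernel. The $j$-th pivot then contributes inertia determined by the signs of $\cot(\pi a_j^{(k)})$ for $a_j^{(k)}\ne0$. This by itself is not enough, because the off-diagonal couplings do not disappear; they combine to give a matrix of the shape $\sum_j C(g_j)-C(\lambda P_n)$.

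I therefore need the inertia formula for sums of Cayley transforms. For unitaries $u_1,\dots,u_n$ with product $u$, the negative index of the associated Hermitian form is $\sum a_j^{(k)}$ minus the sum of the eigenangles of the product, plus a correction $N\cdot(\text{winding})$. This is essentially the Maslov/Meyer-type additivity of the rotation-number function. I will prove it by induction on $n$, so the case $n=2$ carries the real content, and I will handle the shift by $\lambda$ through the term $Nl$. Assembled, this gives negative index $m(l)-r(l)$, and positive index $(nN-\kappa-r)-(m-r)=R-m$.

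\emph{Step 3: consistency in $l$ and the corollaries.} As an independent check, and as a fallback for the inertia formula, I will use a deformation argument. Off the walls $\Ht$ has constant rank, so the signature is locally constant. Meanwhile $m(l)$ is locally constant and jumps by $\dim\Ker(\lambda P_n-1)$ as $l$ crosses a wall, since there $\beta_k(l)=\beta_k(0)+l \bmod 1$ wraps around. At a wall, $r$ eigenvalues pass through $0$. I would verify that $\frac{d}{dl}\Ht$ restricted to $L$ is definite, of the sign that moves them from positive to negative. This pins down the signature everywhere once one limit, $l\to0^+$, has been computed; there $m\to\sum a_j^{(k)}-\sum\beta_k(0)$, and the inertia formula applies directly. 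The definiteness statements are then immediate: the quotient has dimension $R-r$, so it is definite exactly when $m=r$ or $m=R$. The main obstacle is the inertia formula in Step 2, specifically getting the integer correction and the eigenvalue-$1$ cases right. I would attack it first for commuting diagonal tuples and then extend by the conjugacy-invariance and deformation argument just described.
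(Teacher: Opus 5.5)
\textbf{There is a genuine gap in Step 2: the block pivots are misidentified, so nothing connects $\Ht$ to the Cayley inertia formula.}

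Your outline has the right ingredients: $K\oplus L\subseteq\Ker\Ht$, block Schur complements, Cayley transforms, a two-term inertia lemma iterated in $n$, and a negative definite crossing form on $L$ at the walls. The problem is the step linking $\Ht$ to the inertia lemma.
\begin{itemize}
\item The pivots are not congruent to $C(g_j)$. Already for $N=1$ the first pivot is $\Ht_{11}=4\sin\pi a_1\sin\pi(a_1+l)$. Its sign depends on whether $a_1+l>1$, not on the sign of $\cot\pi a_1$.
\item After a block $LDL^{\dagger}$ factorization there is no off-diagonal coupling left to ``combine''.
\item The $N\times N$ matrix $\sum_jC(g_j)-C(\lambda P_n)$ cannot carry a signature on an $(R-r)$-dimensional space.
\item $C(\lambda P_n)$ cannot absorb $L$ as its kernel, since it has a pole, not a zero, at the walls.
\end{itemize}
What is missing is the closed form of the pivots (Theorem~\ref{thm:pivot}). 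Write $\Ht^{[s]}=(G^{\mathrm{diag}}-1)^{\dagger}M^{[s]}(G^{\mathrm{diag}}-1)$. Then $M^{[s]}$ has constant off-diagonal blocks $\lambda^{\mp1/2}I$ and diagonal blocks $\cos\pi l-\sin\pi l\,C(g_j)$, so $M^{[s-1]}$ is a rank-$N$ update of a block-triangular matrix. Sherman--Morrison--Woodbury gives $E^{T}(M^{[s-1]})^{-1}E=\lambda^{1/2}(P_{s-1}-1)(\lambda P_{s-1}-1)^{-1}$, and hence $S_s=(g_s-1)^{\dagger}\bigl(-\sin\pi l\,[C(g_s)+C(\lambda P_{s-1})]\bigr)(g_s-1)$. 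Each pivot couples $g_s$ to the running product $\lambda P_{s-1}$. Only then does the two-term lemma with $a=g_s$, $b=\lambda P_{s-1}$ give $q(S_s)=m_s-m_{s-1}$, turning your ``induction on $n$'' into the telescoping $\widetilde q=m$. This needs all leading minors invertible ($\lambda^{-1}\notin\operatorname{spec}P_s$ for $s<n$). That can fail on the open locus and must be removed separately; the paper rotates $g_1\mapsto e(t)g_1$.

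\textbf{Further problems.}
\begin{itemize}
\item The eigenvalue-one channels are not handled. Compressing to $K^{\perp}=\bigoplus_jW_j^{\perp}$ replaces the scalar off-diagonal blocks by $\lambda^{\mp1/2}$ times the compressions $W_k^{\perp}\to W_j^{\perp}$ of the identity. So neither the pivot formula nor your determinant identity on $K^{\perp}$ follows from the same algebra. The paper instead rotates the eigenvalue $1$ away, $g_j(\varepsilon)=g_j\exp(2\pi i\varepsilon\Pi_j)$. It then uses $\Ht(l)K=0$, the crossing form $\Gamma_K=4\pi\sin(\pi l)I\succ0$ (Lemma~\ref{lem:crossing}(i)) and Haynsworth additivity to show all $\kappa$ new directions are positive, so $n_-$ stays $m$.
\item In Step~1, a determinant at a single $l$ detects singularity, not nullity. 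To get $\dim\Ker\Ht=\kappa+r$ you must add two facts. First, the determinant of the compression vanishes to order exactly $r$ in $l$ at a wall, since each factor $1-e(l+\gamma_k)$ vanishes simply. Second, for an analytic family, nullity is bounded by the vanishing order of the determinant. This is a valid route to $\Ker\Ht\subseteq K\oplus L$, which the paper cites from \cite[Thm.~9]{N-KLM} rather than proves, but your argument does not contain it.
\item Step~3 cannot be a fallback for Step~2. As $l\to0^{+}$ the form degenerates to $D^{\dagger}D$, of rank at most $N$, so the small-$l$ inertia is no easier to compute than at any other $l$. Only the wall-crossing half of Step~3, which matches Proposition~\ref{prop:wall}, stands on its own.
\item For the two-term lemma, conjugacy invariance does not make a non-commuting pair commute. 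Your deformation idea does work, more simply than the paper's transversality proof, once you observe the following. In Cayley coordinates the admissible pairs are $\{(A,B):A+B\ \text{invertible}\}$. Its components are indexed by $n_+(A+B)$, each is connected, and each contains diagonal pairs.
\end{itemize}
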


\begin{corollary}[braid-compatible input]\label{cor:main-braid}
If the tuple comes with \emph{unitary} braid data --- a
representation $\rho\colon F_n\rtimes B_n\to U(N)$ for one and the
same inner product, the input of the KLM construction --- then $K\oplus L$
is the radical of the invariant form of the KLM representation, the
KLM quotient $V^{\oplus n}/(K\oplus L)$ is an
$F_n\rtimes B_n$-representation, and
Theorem~\ref{thm:main-intro} computes the signature of its canonical
invariant form. Moreover the representation of $B_n$ on
$V^{\oplus n}/K$ does not depend on $\lambda$ off the walls
(Proposition~\ref{prop:transport}): $\lambda$ parametrizes the pencil
of invariant forms, not the representation.
\end{corollary}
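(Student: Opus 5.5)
The plan is to derive the corollary from Theorem~\ref{thm:main-intro} together with the structural facts about the KLM construction recalled in Section~\ref{sec:prelim} from \cite{N-KLM}. There are three steps: identify the radical, check that the quotient is a representation carrying the canonical form, and address the $\lambda$-independence.

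\emph{Step 1: the radical.} From \cite{N-KLM} I would recall that, for $\rho$ unitary with respect to a fixed inner product and $|\lambda|=1$, $\lambda\neq1$, the form $\Ht(\lambda)$ on $V^{\oplus n}$ is invariant under the KLM action of $F_n\rtimes B_n$. The radical of an invariant Hermitian form is automatically a subrepresentation, since if $\Ht(v,\cdot)\equiv0$ then $\Ht(gv,\cdot)=\Ht(v,g^{-1}\cdot)\equiv0$. Theorem~\ref{thm:main-intro} identifies this radical as $K\oplus L$, and that identification holds for arbitrary unitary tuples, hence in particular for the tuple $g_j=\rho(x_j)$. So the radical is exactly $K\oplus L$, for every $l\in(0,1)$, including the walls.

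\emph{Step 2: the quotient and its form.} Because $K\oplus L$ is invariant, the KLM quotient $V^{\oplus n}/(K\oplus L)$ is an $F_n\rtimes B_n$-representation. In the Dettweiler--Reiter/KLM formalism $K$ and $L$ are already known to be invariant, and Step~1 gives an independent check of this. The canonical invariant form of the KLM representation is by definition the form induced by $\Ht$ on this quotient, and it is non-degenerate because we have divided out exactly the radical. Its signature is therefore the signature $\bigl(R-m(l),\,m(l)-r(l)\bigr)$ computed in Theorem~\ref{thm:main-intro}. No new computation is needed here; the point is only that the tuple-level theorem applies verbatim to $g_j=\rho(x_j)$.

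\emph{Step 3: $\lambda$-independence off the walls.} This step is the only non-formal one, and I expect it to be the main obstacle. Off the walls we have $L=0$, so the claim concerns $V^{\oplus n}/K$. I would write the KLM action of the standard generators $\sigma_i$ of $B_n$ explicitly as block matrices on $V^{\oplus n}$, obtained by Fox calculus from the Artin action, with the entries in which $\lambda$ appears made visible. I would then look for an invertible intertwiner $T_{\lambda,\lambda'}$ between the parameters $\lambda$ and $\lambda'$. The first candidate is block-triangular, built from the partial products $P_k=g_1\cdots g_k$ and factors of the form $(\lambda P_n-1)^{-1}$; the reason for this choice is that invertibility of $\lambda P_n-1$ is precisely the off-wall condition. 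The verification then has two parts. First, $T$ intertwines the generators $\sigma_i$ on $V^{\oplus n}/K$, where the discrepancy terms should land in $K$ because they involve $g_j-1$. Second, $T$ preserves $K$ and is invertible modulo $K$. This is the content of Proposition~\ref{prop:transport}, which I would invoke. Once $T$ is available, $\lambda$ only moves the form $\Ht(\lambda)$ within the pencil of invariant forms on a fixed representation, as the corollary asserts.
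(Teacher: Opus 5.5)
Your Steps~1 and~2 follow the paper's route. Invariance of $\Ht(\lambda)$ under the KLM action is \cite[Thm.~8]{N-KLM}, the radical $K\oplus L$ is the tuple-level Proposition~\ref{prop:radical}, and the signature is Corollary~\ref{cor:quotsig} applied to $g_j=\rho(x_j)$. Deducing invariance of $K\oplus L$ from ``the radical of an invariant form is a subrepresentation'' is a clean substitute for citing \cite[Lemma~3]{N-KLM} for $L(\lambda)$; either works.

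Step~3, however, rests on a false premise.

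\emph{No $\lambda$ in the braid operators.} There are no entries ``in which $\lambda$ appears'' in the braid operators. By construction the Long--Moody operators $\rho^{\mathrm{LM}}(\sigma_i)$ on $V^{\oplus n}$ are Fox derivatives of the Artin action evaluated through $\rho$. The parameter $\lambda$ enters the KLM representation only through the action of the free generators $x_j$ (the Dettweiler--Reiter convolution), the form $\Ht(\lambda)$, and the subspace $L(\lambda)$. Since $K=\bigoplus_j\Ker(g_j-1)$ is also $\lambda$-free, the $B_n$-representation on $V^{\oplus n}/K$ is literally the same for every $\lambda$ off the walls. The intertwiner is therefore the identity. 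That is the whole content of Proposition~\ref{prop:transport}(i), whose proof is ``the statement just made''. No $T_{\lambda,\lambda'}$ built from $(\lambda P_n-1)^{-1}$ is needed, and what you call the main obstacle is not one.

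\emph{The heuristic is backwards.} The heuristic you offer for that construction does not work. A factor $g_j-1$ does not place a vector in $K$: for unitary $g_j$ one has $\operatorname{Im}(g_j-1)=\Ker(g_j-1)^{\perp}$. So a left factor $g_j-1$ pushes the $j$-th component into $W_j^{\perp}$, not into $W_j$. Right factors $g_j-1$ annihilate $K$, which is a different statement.

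Your chain of citations is still complete, because you end by invoking Proposition~\ref{prop:transport}. You should, however, understand it as a triviality of the construction, not as an intertwining computation. Note also that only the restriction to $B_n$ is $\lambda$-independent; the full $F_n\rtimes B_n$-representation does depend on $\lambda$. This is why the corollary is phrased for $B_n$.
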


Unitarity of the braid part is not a formality: for $n=2$, $N=1$,
$g_1=g_2=t$ and $\rho(\sigma_1)=2$ the data are compatible, but the
output braid generator has eigenvalues $2$ and $-2t$ and preserves no
definite form. Not every unitary tuple carries braid data either: the
non-rigid example of
Section~\ref{subsec:nonrigid} has seeds with pairwise distinct spectra
and is not an $F_n\rtimes B_n$ input; the theorem still applies to it,
the corollary does not.

For a unitary representation $\rho\colon F_n\rtimes B_n\to U(N)$,
the form in the theorem is invariant under the induced action of
$F_n\rtimes B_n$, in particular under the braid action
\cite[Thm.~8]{N-KLM}. Its definiteness therefore implies that the
KLM quotient is unitarizable, as a representation of
$F_n\rtimes B_n$ and a fortiori of $B_n$. The theorem determines the
definite locus of this particular canonical form; it does not, for
reducible quotients, classify the existence of all possible positive
definite invariant forms --- a direct sum $H_1\oplus(-H_2)$ is
indefinite while the representation preserves $H_1\oplus H_2$, and
Remark~\ref{rem:hecke-consequences}(c) exhibits an instance. For an
irreducible quotient the converse holds as well
(Proposition~\ref{prop:transport}(iii)): the invariant Hermitian
form is then unique up to a real scalar, so unitarizability as a
representation of $F_n\rtimes B_n$ is equivalent to definiteness of
the canonical form when the quotient is irreducible under
$F_n\rtimes B_n$ --- the irreducibility supplied by
\cite[\S4.3]{N-KLM} when the input is irreducible as an
$F_n$-representation, provided that the KLM quotient is nonzero ---
and likewise for $B_n$ when it is
irreducible under $B_n$; the weaker $F_n\rtimes B_n$-irreducibility
does not suffice for the $B_n$-statement, since a $B_n$-invariant
form need not be $F_n$-invariant. Throughout, ``definite locus''
refers to the canonical form.

A further distinction concerns the role of $\lambda$. For fixed
input, the braid operators on $V^{\oplus n}/K$ are independent of
$\lambda$. Away from the resonance walls, $\lambda$ therefore varies
the canonical invariant form on a fixed braid representation. At a
wall, the additional subspace $L(\lambda)$ is removed. The definite
windows described below should be understood in this sense.

The formula is uniform: the kernel locus $K\neq0$ and the resonance
walls $\lambda\in E$ require no case distinction, no perturbation off
the degenerate set, and no positivity or rigidity hypotheses on the
input beyond unitarity. The single convention that eigenangles are
taken in $[0,1)$ decides all boundary values, and it is the same
convention in which the Beukers--Heckman interlacing criterion is
stated \cite{BH} --- a coincidence that Section~\ref{sec:BH} explains.

\medskip\noindent
\textbf{Relation to \cite{N-KLM}.} The companion paper \cite{N-KLM} established
the invariance theory of $\Ht$ and gave a recursive \emph{algorithm}
computing its signature, requiring the sequential diagonalization of
pivot matrices. The present paper replaces the algorithm by a closed
formula: the inertia is read off two spectra alone --- the eigenangles
of the seeds and of $\lambda P_n$ --- with no intermediate
diagonalization and no dependence on the relative position of
eigenvectors, which enters only through
$\operatorname{spec}(\lambda P_n)$; it applies uniformly to the
kernel and resonance loci, without the partial-product invertibility
assumption used in the companion paper; and it classifies
definiteness of the canonical form completely, thereby solving
\cite[Problem~18]{N-KLM} --- the definiteness question that the
recursive algorithm of \cite{N-KLM} leaves open --- for unitary
input. The radical $\Ker\Ht=K\oplus L$, the non-degeneracy of the
induced form on the quotient, and the vanishing of $\Ht$ on $K$ are
results of \cite[Thm.~9 and Cor.~13]{N-KLM}, used here at tuple level
(and in the positive definite case of the seed form, which is the case
of \cite[Problem~18]{N-KLM}); the degenerate-locus section combines
this permanent-kernel property with two crossing forms to determine
the signature on these loci.

The proof is purely linear-algebraic, and its engine deserves separate
billing. For a unitary $u$ without eigenvalue $1$ let
$C(u):=i(1+u)(1-u)^{-1}$ be its (Hermitian) Cayley transform and
$\Theta(u)\in(0,N)$ the sum of its eigenangles in $(0,1)$.

\begin{lemma}[Cayley inertia lemma; = Lem.~\ref{lem:cayley}]
\label{lem:cayley-intro}
If $1\notin\operatorname{spec}a\cup\operatorname{spec}b\cup
\operatorname{spec}(ab)$, then
\[
n_{+}\bigl(C(a)+C(b)\bigr)=\Theta(a)+\Theta(b)-\Theta(ab).
\]
\end{lemma}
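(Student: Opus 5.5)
The plan is to reduce the statement to a factorization identity plus a continuity and wall-crossing argument on $U(N)\times U(N)$.

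\textbf{Factorization.} From $C(u)=2i(1-u)^{-1}-i$ one gets
\[
C(a)+C(b)=2i\bigl[(1-a)^{-1}+(1-b)^{-1}-1\bigr]=2i\,(1-a)^{-1}(1-ab)(1-b)^{-1}.
\]
The second equality is checked by multiplying out $(1-b)+(1-a)-(1-a)(1-b)=1-ab$. This has two consequences:
\begin{itemize}
\item $H(a,b):=C(a)+C(b)$ is a non-degenerate Hermitian matrix on the open set $\mathcal U=\{(a,b):1\notin\operatorname{spec}a\cup\operatorname{spec}b\cup\operatorname{spec}(ab)\}$, so $n_+(H)$ is locally constant on $\mathcal U$.
\item Its inverse $H^{-1}=\tfrac1{2i}(1-b)(1-ab)^{-1}(1-a)$ extends continuously across the loci where $a$ or $b$ acquires the eigenvalue $1$.
\end{itemize}
The right-hand side $\Theta(a)+\Theta(b)-\Theta(ab)$ is an integer, since $\det a\det b=\det(ab)$ forces the angle sums to agree mod $\Z$. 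It is also continuous on $\mathcal U$, because eigenangles in $(0,1)$ vary continuously while no eigenvalue crosses $1$. Hence it too is locally constant on $\mathcal U$.

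\textbf{Base case.} Next verify the formula on commuting diagonal pairs $a=\operatorname{diag}(e(\alpha_k))$, $b=\operatorname{diag}(e(\beta_k))$. There $C(e(\theta))=-\cot\pi\theta$, and
\[
-\cot\pi\alpha-\cot\pi\beta=-\frac{\sin\pi(\alpha+\beta)}{\sin\pi\alpha\,\sin\pi\beta}.
\]
This is positive exactly when $\alpha+\beta>1$. In that case $\alpha+\beta-\frc{\alpha+\beta}=1$; otherwise the difference is $0$. Summing over $k$ gives the claim.

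\textbf{Wall crossing.} Since $U(N)^2$ is connected and the complement of $\mathcal U$ is a real hypersurface, any point of $\mathcal U$ can be joined to a diagonal base point by a path that meets the complement only transversally. The crossing points can be taken to be generic, meaning that exactly one of $a,b,ab$ has a simple eigenvalue $1$ there. It therefore suffices to show that both sides jump by the same amount across each of the three wall types.

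For the right side the jump is clear. When an eigenvalue of $u\in\{a,b,ab\}$ passes through $1$ counterclockwise, its angle goes from near $1$ to near $0$, so $\Theta(u)$ drops by $1$; the reverse direction raises it by $1$.

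For the left side the two kinds of wall are handled differently.
\begin{itemize}
\item At an $ab$-wall, $H$ stays finite and exactly one eigenvalue of $H$ passes through $0$. Its sign change is governed by the first-order term $\langle \dot H v,v\rangle$, with $v=(1-b)^{-1}w$ and $w\in\Ker(1-ab)$.
\item At an $a$-wall (and symmetrically a $b$-wall), work with $H^{-1}$, which has the same inertia. Here one eigenvalue of $H^{-1}$ passes through $0$, now along $\Ker(1-a)$.
\end{itemize}

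\textbf{Main obstacle.} The hard step is the first-order computation showing that the crossing sign matches: a counterclockwise passage of $ab$ must lower $n_+$ by $1$, and a counterclockwise passage of $a$ or $b$ must raise it by $1$. I would first do this in the model case where the perturbation is $u\mapsto e(s)u$ on the relevant eigenline. There the scalar computation $\frac{d}{d\theta}(-\cot\pi\theta)>0$ fixes the sign.

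To reduce to that model, write $\dot{(ab)}=iX\,ab$ with $X$ Hermitian. Then
\[
\langle \dot H v,v\rangle=-2\langle X w,w\rangle\cdot(\text{positive factor}),
\]
and one checks that the positive factor is $\|(1-b)^{-1}w\|^{-2}$-type after using unitarity. This identification is exactly the delicate point.

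If this direct computation becomes messy, the fallback is a homotopy. Fix the direction of crossing, note that the jump in $n_+$ is locally constant along the wall, and evaluate it at a commuting point of the wall, where the base-case computation already applies.
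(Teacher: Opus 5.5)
Your architecture (local constancy of both sides on $\mathcal U$, the commuting diagonal base case, and a generic path meeting the three wall types transversally) is a legitimate alternative. The factorization, the integrality and continuity of the right side, and the base case are correct. But the step that carries the whole content of the lemma, the matching of jumps, is left open, and what you wrote for it would fail as stated.

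\textbf{What is wrong.} Your crossing targets are reversed relative to your own, correct, bookkeeping of the right side. A counterclockwise passage of an eigenvalue of $ab$ through $1$ lowers $\Theta(ab)$, hence \emph{raises} $\Theta(a)+\Theta(b)-\Theta(ab)$, so $n_+$ must go \emph{up}. A counterclockwise passage for $a$ (or $b$) must make $n_+$ go \emph{down}. This is visible already for $N=1$: $-\cot\pi\alpha\to+\infty$ as $\alpha\to1^-$ and $\to-\infty$ as $\alpha\to0^+$. Second, the kernel vector at an $ab$-wall is $v=(1-b)w$ with $w\in\Ker(1-ab)$, not $(1-b)^{-1}w$, which is not in $\Ker H$. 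Third, your formula $-2\langle Xw,w\rangle\cdot(\text{positive})$ has the wrong sign.

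\textbf{The correct computation.} With the right vector it is short and exact. Differentiate $H=2i(1-a)^{-1}(1-ab)(1-b)^{-1}$:
\begin{itemize}
\item the term hitting $(1-a)^{-1}$ vanishes on $v$, since $(1-ab)w=0$;
\item the term hitting $(1-b)^{-1}$ vanishes after pairing with $v$, because $2i(1-a)^{-1}(1-ab)=H(1-b)$ and $Hv=0$;
\item in the middle term, $\dot{(ab)}w=iXw$ and $(1-b)w=(1-a^{-1})w$ give $\langle\dot Hv,v\rangle=2\langle Xw,w\rangle$, with no extra factor.
\end{itemize}
At an $a$-wall, take $x\in\Ker(1-a)$ and $\dot a=iYa$. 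Only the derivative of the factor $(1-a)$ in $H^{-1}$ survives, and $(1-(ab)^{-1})x=(1-b^{-1})x$ gives $\langle\tfrac{d}{dt}H^{-1}x,x\rangle=-\tfrac12\langle Yx,x\rangle$. The $b$-walls follow from the symmetric factorization $H=2i(1-b)^{-1}(1-ba)(1-a)^{-1}$ together with $\Theta(ba)=\Theta(ab)$. Transversality makes these first-order terms nonzero, and their signs then match the right side. So your route does close. The fallback via commuting points of the wall is unnecessary, and it would need every component of the generic wall stratum to contain a commuting pair, which you do not argue.

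\textbf{Comparison with the paper.} The paper avoids genericity, the $a$- and $b$-walls, $H^{-1}$ and the base case altogether. It deforms only $b$, along $C(b_t)=C(b)-tI$, so $H$ simply translates by $-tI$ and $n_+$ drops by $\dim\Ker$ at each crossing. Meanwhile $-ib_t^{-1}\dot b_t=-2(B_t^2+I)^{-1}\prec0$ forces every eigenvalue of $ab_t$ to cross $1$ clockwise, multiplicities included, so the right side drops by the same amount. Both sides are $0$ for $t\gg0$.
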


Identities relating inertia counts to eigenangle data are classical
in the theory of signature non-additivity, cf.\ \cite{Wall}; we make
no claim of novelty for the lemma, and give a short self-contained
proof by a monotone-transversality argument carried out entirely in
Cayley charts.
The signature theorem then follows by a
pivot (Schur-complement) recursion whose $s$-th step contributes
exactly $m_s-m_{s-1}$ negative eigenvalues --- a telescoping of the
cocycle --- and the degenerate loci are handled by two crossing forms
of opposite definite signs,
\[
\Gamma_K=4\pi\sin(\pi l)\,I\succ0,
\qquad
\Gamma_L=-4\pi\sin(\pi l)\,\lVert v_n\rVert^2\prec0,
\]
combined with Haynsworth inertia additivity; the permanent-kernel
property $\Ht(\lambda)K=0$, identically in $\lambda$ (a one-line
consequence of the block structure, already noted in
\cite[Thm.~9]{N-KLM}), shows that the two degeneracies never
interact. No Hodge theory, no
analytic perturbation theory, and no eigenvector branches enter.

Two consistency anchors tie the theorem to existing computations. On
the classical side, suitable rank-one specializations of the KLM
representation recover the monodromy of Pochhammer (in particular
hypergeometric) systems. Two separate statements are proved there.
First, an explicit matrix identity: in rank one the form $\Ht$ and
Haraoka's invariant form \cite{Haraoka1994} are related entrywise by
a congruence up to sign, which holds for all parameters covered by
Proposition~\ref{prop:haraoka-entrywise}, resonant ones included.
Second,
under the non-resonance conditions of \cite{Haraoka1994},
Theorem~\ref{thm:main-intro} recovers Haraoka's definiteness
conditions and refines them to a signature. For the Beukers--Heckman
theory \cite{BH} the situation is different in the two cases treated.
In the Gauss case ($n=2$, $N=1$) we prove that the signature of
Theorem~\ref{thm:main-intro} is the Beukers--Heckman signature, up
to the scalar ambiguity of the invariant form. In
general rank we give a \emph{reduction}: for disjoint output data
satisfying the hypotheses of the Beukers--Heckman signature theorem,
that theorem follows from Theorem~\ref{thm:main-intro} once the local
monodromy data of the middle convolution are identified with the
hypergeometric presentation; this identification is not carried out
here, so no general-rank correspondence is claimed as proved. The
integer $m$ appears as the interlacing-defect count. Signatures of
this kind can also be approached through the Hodge theory of the
middle convolution, developed by Dettweiler--Sabbah \cite{DS}; we do
not pursue the comparison here.

On the quantum side, the companion paper \cite{N-Hecke} classified when the
KLM construction with scalar braid part produces Iwahori--Hecke and
Temperley--Lieb modules, and posed the definiteness question for
their invariant forms as a specialization of \cite[Problem~18]{N-KLM}.
Theorem~\ref{thm:main-intro} answers it completely.

\begin{corollary}[= Cor.~\ref{cor:hecke}]
\label{cor:hecke-intro}
For the scalar family with $\operatorname{spec}(g)\subseteq\{1,\e a\}$,
$d:=\dim\Ker(g-\e a)$, the invariant form on the Hecke module has
signature $\bigl(d(n-j),\,dj\bigr)$ with $j=\lfloor l+na\rfloor$ off
resonance, and $\bigl(d(n-J),\,d(J-1)\bigr)$ at the resonance
$l+na=J$; a definite window in $\lambda$ exists if and only if
$a\in(0,\tfrac1n)\cup(\tfrac{n-1}n,1)$ --- the classical
Burau--Squier interval --- uniformly in the multiplicities, while the
double-resonant $n=2$ family and the $\mu=-1$ family are indefinite
for every admissible $\lambda$.
\end{corollary}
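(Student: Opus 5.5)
The plan is to specialize Theorem~\ref{thm:main-intro} (through Corollary~\ref{cor:main-braid}, which identifies the Hecke module with the KLM quotient $V^{\oplus n}/(K\oplus L)$ and $\Ht$ with its canonical form) to the scalar family. Here all seeds equal a single $g\in U(N)$ with $\operatorname{spec}(g)\subseteq\{1,\e a\}$, $a\in(0,1)$. The whole proof is a direct evaluation of the integers $R$, $m(l)$ and $r(l)$; no new linear algebra is needed.

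\emph{Step 1: spectral data.} Each $g_j=g$ has eigenangle $a$ with multiplicity $d$ and eigenangle $0$ with multiplicity $N-d$. Hence
\[
\kappa=n(N-d),\qquad R=nd,\qquad \textstyle\sum_{j,k}a_j^{(k)}=nda .
\]
Since $P_n=g^n$, the matrix $\lambda P_n$ has eigenangle $\frc{l+na}$ with multiplicity $d$ and eigenangle $l$ with multiplicity $N-d$. Therefore
\[
m(l)=Nl+nda-d\frc{l+na}-(N-d)l=d\bigl(l+na-\frc{l+na}\bigr)=d\lfloor l+na\rfloor .
\]
Because $\lambda\ne1$, only the $e(a)$-eigenspace can contribute to $\Ker(\lambda P_n-1)$. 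So $r(l)=d$ exactly when $l+na=J\in\Z$, and $r(l)=0$ otherwise.

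\emph{Step 2: signature.} Substituting into $(R-m,\,m-r)$ gives $\bigl(d(n-j),\,dj\bigr)$ off resonance, with $j=\lfloor l+na\rfloor$. At resonance it gives $\bigl(d(n-J),\,d(J-1)\bigr)$. Every quantity carries the common factor $d$, which is why the conclusions below hold uniformly in the multiplicities (we assume $d\ge1$, so that the quotient is nonzero).

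\emph{Step 3: definite window.} By the definiteness criterion of the theorem, the form is positive definite iff $m=r$ and negative definite iff $m=R$.
\begin{itemize}
\item Off resonance, $m=r$ means $j=0$, i.e.\ $l+na<1$, and some $l\in(0,1)$ achieves this iff $na<1$. Likewise $m=R$ means $j=n$, i.e.\ $l\ge n(1-a)$, achievable iff $a>\tfrac{n-1}{n}$.
\item At resonance, positivity forces $J=1$, i.e.\ $l=1-na$, which again requires $na<1$. Negativity forces $J=n$, i.e.\ $l=n(1-a)\in(0,1)$, which requires $a>\tfrac{n-1}n$.
\end{itemize}
So a definite window exists iff $a\in(0,\tfrac1n)\cup(\tfrac{n-1}n,1)$.

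\emph{Step 4: the exceptional families (main obstacle).} The real work is matching the parameters of these families, defined later in the paper and in \cite{N-Hecke}, to the pair $(a,d)$. I expect the $\mu=-1$ family to correspond to $\e a=-1$, i.e.\ $a=\tfrac12$, and the double-resonant $n=2$ family also to have $a=\tfrac12$. In both cases $a\notin(0,\tfrac1n)\cup(\tfrac{n-1}n,1)$ for $n\ge2$ (at $n=2$ both endpoints equal $\tfrac12$ and are excluded). Step~3 then gives indefiniteness for every admissible $\lambda$.

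If the parametrization in \cite{N-Hecke} instead makes the braid eigenvalue $\mu$ enter through a different normalization of $a$, I would redo Step~1 with the corresponding eigenangles. The computation is structurally the same, and only this identification needs care.
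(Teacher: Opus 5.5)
\textbf{Steps 1--3 are correct and are the paper's own argument.} They match Corollary~\ref{cor:hecke} and Remark~\ref{rem:hecke-consequences}(a): the same evaluation $R=nd$, $m(l)=d\lfloor l+na\rfloor$, $r\in\{0,d\}$, fed into Corollary~\ref{cor:quotsig}, and the same reading of the definite window. Your treatment of the $\mu=-1$ family is also fine. Applying the window criterion at $a=\tfrac12$ gives the same conclusion as the paper, which checks directly that $j$ and $J$ never reach the extreme values.

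\textbf{The gap is in Step 4, for the double-resonant $n=2$ family.} That family (\cite[Thm.~4.4(iv)(b)]{N-Hecke}, as recalled in Remark~\ref{rem:hecke-consequences}(c)) is not the scalar family at $a=\tfrac12$. It has $\operatorname{spec}(g)\setminus\{1\}=\{\nu,-\nu\}$, two distinct nontrivial eigenvalues, and is taken at $\lambda=\nu^{-2}$. So the hypothesis $\operatorname{spec}(g)\subseteq\{1,\e a\}$ fails, and neither your signature formula nor your window criterion applies to it.

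Your identification also contradicts the family's name. For $n=2$, $a=\tfrac12$ one has $l+2a=l+1\notin\Z$ for every $l\in(0,1)$, so that family would never be resonant at all. The fix is therefore not a renormalization of $a$; you must redo Step~1 with two nontrivial channels:
\begin{itemize}
\item[(1)] Write $\nu=\e b$ and $-\nu=\e{b'}$ with $b'=\frc{b+\tfrac12}$, and multiplicities $m_\nu,m_{-\nu}\ge1$. Then $l=\frc{-2b}$.
\item[(2)] The numbers $l+2b$ and $l+2b'$ are the integers $1$ and $2$, in some order. Hence both channels lie on the wall, with $r=m_\nu+m_{-\nu}$ and $R=2r$.
\item[(3)] The eigenvalue-one channel cancels out of $m$. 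With the $[0,1)$-convention this gives $m=J_\nu m_\nu+J_{-\nu}m_{-\nu}$, where $\{J_\nu,J_{-\nu}\}=\{1,2\}$.
\item[(4)] Corollary~\ref{cor:quotsig} then gives signature $(m_\nu,m_{-\nu})$ or $(m_{-\nu},m_\nu)$. This is indefinite because the family requires both multiplicities to be positive.
\end{itemize}
With this replacement for the double-resonant case, your proof is complete.
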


The definite windows yield unitary Temperley--Lieb representations.
Of the two always-indefinite families, the $\mu=-1$ family is
non-unitarizable outright, its Jordan block being the obstruction,
whereas the double-resonant $n=2$ family is unitarizable after a
channelwise sign change (Remark~\ref{rem:hecke-consequences}(c)): the
indefiniteness of the canonical form is not, by itself, a failure of
unitarity. This feeds directly into the study of
unitary braid representations and their images \cite{Jones}, \cite{Wenzl},
\cite{FLW}, \cite{LR}, and into the recent interest in braid representations
that are unitary with respect to \emph{indefinite} forms \cite{GLPS};
sesquilinear structure is also the entry point for discreteness
constructions as in \cite{Scherich}.

\medskip\noindent
\textbf{Organization.} Section~\ref{sec:prelim} recalls the KLM
construction and the form $\Ht$ and fixes the $[0,1)$-conventions and
the integer $m(l)$. Section~\ref{sec:cayley} proves the Cayley
inertia lemma. Section~\ref{sec:pivot} establishes the
determinant identity and the pivot recursion under the PI condition;
Section~\ref{sec:open} proves the signature theorem on the open
locus. Section~\ref{sec:degenerate} treats the degenerate loci and
proves Theorem~\ref{thm:main-intro}. Section~\ref{sec:BH} is the
Beukers--Heckman/Haraoka dictionary, the Hecke/Temperley--Lieb
specialization (\S\ref{sec:hecke}) and a non-rigid example.
Section~\ref{sec:discussion} discusses indefinite seed forms and open
problems.

\section{Preliminaries: the KLM construction and its Hermitian form}
\label{sec:prelim}
The following notation is in force throughout the paper.
Throughout, $j\in\{1,\dots,n\}$ indexes the seeds $g_j$ and
$k\in\{1,\dots,N\}$ indexes the eigenvalues of a fixed unitary.
\begin{itemize}
\item[(a)] \emph{Scalars.} $e(x):=\exp(2\pi ix)$; $\frc{x}\in[0,1)$
denotes the fractional part. Eigenvalues of unitaries are always
listed with algebraic multiplicity, so that sums over eigenangles ---
such as $\Theta(u)$ and $m(l)$ below --- are multiset sums.
\item[(b)] \emph{Tuple data.} $g_1,\dots,g_n\in U(N)$ with eigenangles
$a_j^{(k)}\in(0,1)$ ($j$: which seed, $k$: which eigenvalue);
$\lambda=e(l)$, $l\in(0,1)$, $\lambda^{1/2}=e(l/2)$;
$P_s:=g_1\cdots g_s$ with $P_0:=I$, eigenangles
$\gamma^{(s)}_k\in[0,1)$ of $P_s$ and
$\beta^{(s)}_k=\frc{l+\gamma^{(s)}_k}$ of $\lambda P_s$. The point
$\lambda=1$ is excluded from the ambient space throughout
($l\in(0,1)$).
\item[(c)] \emph{Winding integers.}
$m_s:=Nl+\sum_{j\le s,k}a_j^{(k)}-\sum_k\beta^{(s)}_k\in\Z$ (so
$m_0=0$, since $\beta^{(0)}_k=l$ for all $k$), and $m:=m_n$.
\item[(d)] \emph{Inertia counts.} For a Hermitian matrix $A$,
$n_{+}(A)$ denotes the number of positive and $q(A)$ the number of
negative eigenvalues, both with multiplicity, and
$\widetilde q:=q(\Ht)$.
\item[(e)] \emph{Loewner order.} $A\succ0$ ($A\succeq0$) means
positive (semi)definite, $A\prec0$, $A\preceq0$ the negative
versions, and $A\preceq B$ means $B-A\succeq0$.
\item[(f)] \emph{The open locus and the walls.} The proof first runs
on an open dense locus on which every factorization below exists; the
degenerate loci are added in Section~\ref{sec:degenerate} below.
Define the wall set $E$ through the equivalences
\[
\lambda\in E
\iff \lambda^{-1}\in\operatorname{spec}P_n
\iff 1\in\operatorname{spec}(\lambda P_n)
\iff \dim\Ker(\lambda P_n-1)\ge1 ,
\]
and let
$\mathcal P^{\circ}:=\{K=0,\ \lambda\notin E\}
\subset U(N)^{n}\times(S^{1}\setminus\{1\})$ be the open
non-degenerate locus of tuples-with-parameter
$(g_1,\dots,g_n;\lambda)$: no $g_j$ has eigenvalue $1$ (all
$a_j^{(k)}\in(0,1)$, i.e.\ $K=0$ --- equivalently, every $g_j-1$,
and hence the block-diagonal $G^{\mathrm{diag}}-1$ below, is
invertible), and $\lambda\notin E$. On $\mathcal P^{\circ}$ the form
$\Ht$ is non-degenerate and $\widetilde q$, $m$ are locally
constant.
\end{itemize}
Two cautions, kept apart from the definitions: $\gamma^{(s)}_k=0$
can occur even though no $g_j$ has eigenvalue $1$ --- the
determinant identity and the pivot formulas below are insensitive to
this, and the PI condition concerns $\beta^{(s)}_k\neq0$ only; and
the PI condition is \emph{not} part of the definition of
$\mathcal P^{\circ}$ and may fail there.

For a unitary $u$ with $1\notin\operatorname{spec}u$ write
$\Theta(u):=\sum_k\theta_k\in(0,N)$ for the sum of its eigenangles
$\theta_k\in(0,1)$, and
\begin{equation}\label{eq:cayley-def}
C(u):=i(1+u)(1-u)^{-1},
\qquad
C(e(\theta))
=i\,\frac{2e^{i\pi\theta}\cos\pi\theta}{-2i\,e^{i\pi\theta}\sin\pi\theta}
=-\cot\pi\theta ,
\end{equation}
the Cayley transform: a Hermitian matrix whose spectral map
$\theta\mapsto-\cot\pi\theta$ is an increasing bijection
$(0,1)\to(-\infty,\infty)$.

\paragraph{Degenerate loci.} The proof of the signature theorem
first runs on $\mathcal P^{\circ}$; Section~\ref{sec:degenerate}
then allows the seeds to have eigenvalue $1$ and $\lambda$ to lie on a
wall. The following conventions extend the ones above, agree with
them on $\mathcal P^{\circ}$, and fix once and for all the values on
the degenerate loci. Eigenangles are taken in $[0,1)$:
$a_j^{(k)}\in[0,1)$ for $g_j$, and $\beta_k=\beta_k(l)\in[0,1)$ for
$\lambda P_n$. Set $W_j:=\Ker(g_j-1)$, $\kappa_j:=\dim W_j$,
\[
K:=W_1\oplus\dots\oplus W_n\subseteq V^{\oplus n},\qquad
\kappa:=\dim K=\textstyle\sum_j\kappa_j,\qquad
R:=nN-\kappa,
\]
$r=r(l):=\dim\Ker(\lambda P_n-1)=\#\{k;\ \beta_k(l)=0\}$ (the
letter $r$ is for the \emph{resonance} multiplicity --- $l$ being
reserved for the angle of $\lambda$ --- and the KLM quotient below
has dimension $R-r$), and, uniformly
in $l\in(0,1)$,
\[
m=m(l):=Nl+\sum_{j,k}a_j^{(k)}-\sum_k\beta_k(l)\ \in\Z ;
\]
integrality holds on the degenerate loci as well, since
$e(m)=\lambda^{N}\det P_n\cdot\det(\lambda P_n)^{-1}=1$. On
$\mathcal P^{\circ}$ this is the previous $m$; on the walls the
$[0,1)$-convention selects a specific value, which
Proposition~\ref{prop:wall} identifies as the right-sided limit. The
subspace $L\subseteq V^{\oplus n}$ is the Dettweiler--Reiter subspace,
characterized by $v_k=g_{k+1}v_{k+1}$ $(1\le k\le n-1)$,
$v_n=\lambda P_nv_n$ (\cite[Eq.~(9)]{N-KLM}); $v\mapsto v_n$ identifies
$L$ with $\Ker(\lambda P_n-1)$, so $\dim L=r$. For $\lambda\neq1$ one
has
\begin{equation}\label{eq:radical}
\Ker\Ht=K\oplus L .
\end{equation}
The inclusion $\supseteq$ and the directness are
Lemmas~\ref{lem:permker} and \ref{lem:radical} (the directness is
also shown in the proof of \cite[Thm.~16(iii)]{N-KLM}); the reverse
inclusion is cited from \cite[Thm.~9]{N-KLM}, whose tuple-level
nature is noted there before Corollary~13 (which applies it to the
truncated tuples), in Proposition~\ref{prop:radical} below.
Nothing in Sections~\ref{sec:cayley}--\ref{sec:degenerate} uses the
$F_n\rtimes B_n$ structure: all statements concern arbitrary unitary
tuples.

\paragraph{The setting of \cite{N-KLM}, and the form $\Ht$.} In
\cite{N-KLM} the input is a representation of $F_n\rtimes B_n$ that
is unitary relative to a non-degenerate Hermitian matrix $H$, which
may be indefinite \cite[Def.~7]{N-KLM}; the invariance of $\Ht$ and
the identification of its kernel \cite[Thms.~8, 9]{N-KLM} hold in
that generality, with
$\Ht_{jk}=\lambda^{jk}H(g_j^{-1}-\lambda_{jk}I)(g_k-1)$, where
$\lambda^{jk}=\lambda^{-1/2}$ for $j\le k$ and $\lambda^{1/2}$ for
$j>k$, and $\lambda_{jk}=\lambda$ for $j=k$ and $1$ otherwise. The
present paper restricts this setting to positive definite $H$ ---
which is exactly the input condition of the definiteness question
\cite[Problem~18]{N-KLM}, so that the restriction and the claim to
solve that problem are compatible. We assume $H>0$ throughout and
choose an $H$-orthonormal basis of $V$, so that $H=I$ and the $g_j$
are unitary in the usual sense; in that basis the blocks of $\Ht$ are
\begin{equation}\label{eq:blocks}
\Ht_{jk}=
\begin{cases}
\lambda^{-1/2}\,(g_j^{-1}-1)(g_k-1), & j<k,\\[1pt]
\lambda^{-1/2}\,(g_j^{-1}-\lambda)(g_j-1), & j=k,\\[1pt]
\lambda^{1/2}\,(g_j^{-1}-1)(g_k-1), & j>k.
\end{cases}
\end{equation}

\section{The Cayley inertia lemma}\label{sec:cayley}

Recall the Cayley transform \eqref{eq:cayley-def} and
$\Theta(u)=\sum_k\theta_k\in(0,N)$ from the head of the section; the
letter $\sigma$ is reserved for the braid generators $\sigma_i$, and
$n_{+}$ counts positive eigenvalues with multiplicity. Three
classical facts about \eqref{eq:cayley-def} are used below. $C(u)$ is
Hermitian: using $u^{\dagger}=u^{-1}$ and the commutativity of
functions of $u$,
$C(u)^{\dagger}=-i(1+u^{-1})(1-u^{-1})^{-1}=-i(u+1)(u-1)^{-1}=C(u)$.
Its eigenvalues are $-\cot(\pi\theta_k)$: $u$ is unitarily
diagonalizable and $C$ is a rational function of $u$, so $C(u)$ is
diagonal in the same eigenbasis with the scalar values of
\eqref{eq:cayley-def}, increasing in the eigenangle. And $C$ is a
bijection from $\{u\in U(N)\mid 1\notin\operatorname{spec}u\}$ onto
the Hermitian matrices, with inverse $u=(C+i)^{-1}(C-i)$; the proof
below uses this inverse to pull straight-line deformations on the
Hermitian side back to the unitary side.

\begin{lemma}\label{lem:cayley}
Let $a,b\in U(N)$ with
$1\notin\operatorname{spec}a\cup\operatorname{spec}b
\cup\operatorname{spec}(ab)$. Then the number of positive eigenvalues of
$C(a)+C(b)$ is
\[
n_{+}\bigl(C(a)+C(b)\bigr)=\Theta(a)+\Theta(b)-\Theta(ab).
\]
\end{lemma}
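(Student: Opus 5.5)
The plan rests on an algebraic factorization followed by a monotone deformation.

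\emph{Step 1 (factorization).} Write $A:=(1-a)^{-1}$ and $B:=(1-b)^{-1}$. From $(1+u)(1-u)^{-1}=2(1-u)^{-1}-1$ we get $C(a)+C(b)=2i(A+B-1)$. A one-line computation gives $(1-a)(A+B-1)(1-b)=(1-b)+(1-a)-(1-a)(1-b)=1-ab$, hence
\[
C(a)+C(b)=2i\,(1-a)^{-1}(1-ab)(1-b)^{-1}.
\]
Consequently $\dim\Ker\bigl(C(a)+C(b)\bigr)=\dim\Ker(1-ab)$ whenever $1\notin\operatorname{spec}a\cup\operatorname{spec}b$. In particular the Hermitian matrix $C(a)+C(b)$ is non-degenerate exactly when $1\notin\operatorname{spec}(ab)$.

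\emph{Step 2 (deformation).} Fix $a$ and deform $b$ on the Hermitian side by the straight line $C(b_s):=C(b)-sI$ for $s\in[0,\infty)$, pulling back via the inverse Cayley map $b_s=(C(b)-s+i)^{-1}(C(b)-s-i)$. Along this path:
\begin{itemize}
\item[(i)] $b_s$ never has eigenvalue $1$. Its eigenangles decrease continuously to $0$, so $\Theta(b_s)$ is continuous and tends to $0$.
\item[(ii)] $C(a)+C(b_s)$ is strictly decreasing in the Loewner order. Hence $n_{+}$ can only drop, and by Step 1 it drops at a parameter $s_0$ by exactly $\dim\Ker(1-ab_{s_0})$.
\item[(iii)] The eigenvalues of $ab_s$ move strictly clockwise. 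Indeed $b_s=e^{-iX_s}$-type motion, $\dot b_s=b_s\cdot(-i\,Y_s)$ with $Y_s\succ0$; this follows from differentiating the Cayley inverse, where the derivative of $-\cot\pi\theta$ is positive. Conjugating, $\frac{d}{ds}(ab_s)=(ab_s)(-iY_s)$ with $Y_s\succ0$, and the standard first-order perturbation argument (restrict $Y_s$ to each eigenspace of $ab_s$) shows that every eigenangle of $ab_s$ strictly decreases.
\end{itemize}
Thus $\Theta(ab_s)$ is continuous except at the parameters where an eigenangle of $ab_s$ passes through $0$. At such a parameter it jumps from $0^+$ to $1^-$, i.e.\ by $+1$ per eigenvalue, and the number of such eigenvalues is $\dim\Ker(1-ab_{s_0})$. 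Therefore $\Theta(a)+\Theta(b_s)-\Theta(ab_s)$ drops by the same amount as $n_{+}$ at each crossing. Between crossings both sides are constant integers: the left side by non-degeneracy, the right side because it is continuous and integer-valued, since $e\bigl(\Theta(a)+\Theta(b)-\Theta(ab)\bigr)=\det a\det b/\det(ab)=1$. Hence the difference $n_{+}-(\Theta(a)+\Theta(b_s)-\Theta(ab_s))$ is constant in $s$. Only finitely many crossings occur on a compact $s$-interval, by analyticity of the path.

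\emph{Step 3 (limit).} As $s\to\infty$, $b_s\to 1$ with $\Theta(b_s)\to 0$, and $C(a)+C(b_s)\to-\infty$ in the Loewner order, so $n_{+}=0$ for large $s$. Meanwhile $ab_s\to a$. Since $a$ has no eigenvalue $1$, the eigenangles of $ab_s$ converge to those of $a$ inside $(0,1)$, so $\Theta(ab_s)\to\Theta(a)$. The right-hand side therefore tends to $\Theta(a)+0-\Theta(a)=0$, the constant difference is $0$, and evaluating at $s=0$ gives the lemma.

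The main obstacle is Step 2(iii): proving rigorously that the eigenvalues of $ab_s$ cross $1$ strictly clockwise, with the number of crossing eigenvalues (with multiplicity) matching $\dim\Ker(1-ab_s)$, including the case of multiple eigenvalues. I would first handle this via the Loewner-monotone picture on the Hermitian side. Near a crossing, take the Cayley transform of $ab_s$ rotated slightly, i.e.\ $C(e(-\epsilon)ab_s)$, and show it is Loewner-increasing in $s$. That monotonicity gives both the crossing direction and the multiplicity count at once, avoiding eigenvector branches.
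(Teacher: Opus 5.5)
Your proposal is correct and is essentially the paper's proof. It uses the same factorization $C(a)+C(b)=2i(1-a)^{-1}(1-ab)(1-b)^{-1}$ and the same straight-line deformation $C(b)-sI$ pulled back through the inverse Cayley map. It compares $n_{+}$ with $\Theta(a)+\Theta(b_s)-\Theta(ab_s)$ through integrality and matching jumps, and it uses the same fixed-base-point Cayley chart near each crossing to get direction and multiplicity without eigenvector branches.

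There is one sign slip in your last paragraph. Put $W_s:=e(-\epsilon)ab_s$, so that $\dot W_s=W_s(-iY_s)$ with $Y_s\succ0$. Then the chart $C(W_s)$ is Loewner-\emph{decreasing} in $s$, not increasing: for $x=(1-W_s)z$ one gets $x^{\dagger}\dot C x=-2z^{\dagger}Y_sz<0$. Decreasing is what matches your clockwise motion and the $0^{+}\to1^{-}$ jump.
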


\begin{remark}[the resonant case $1\in\operatorname{spec}(ab)$]
\label{rem:cayley-resonant}
The hypotheses on $a$ and $b$ are essential ($C$ has a pole at $1$),
but the one on $ab$ is not: $C(a)+C(b)$ is defined whenever
$1\notin\operatorname{spec}a\cup\operatorname{spec}b$. Let
$\Theta_0(ab)$ be the sum of the eigenangles of $ab$ taken in
$[0,1)$, $r:=\dim\Ker(ab-1)$ and
$M:=\Theta(a)+\Theta(b)-\Theta_0(ab)$. Then
\[
\operatorname{In}\bigl(C(a)+C(b)\bigr)=(M-r,\;N-M,\;r).
\]
Indeed, run the deformation of the proof below, $M_t=C(a)+C(b)-tI$,
$b_t=C^{-1}(C(b)-tI)$, for small $t>0$: $\Ker M_0$ has dimension $r$
(it corresponds to $\Ker(ab-1)$ through the chart, see the proof),
its $r$ eigenvalues move to $-t<0$, and the remaining eigenvalues keep
their signs, so $n_{+}(M_0)=n_{+}(M_t)$. On the unitary side
$1\notin\operatorname{spec}(ab_t)$ for $t>0$, the $r$ eigenangles of
$ab_t$ that vanish at $t=0$ leave $0$ through $1^{-}$ (this is the
monotonicity of the chart used in the proof), and
$\Theta(b_t)\to\Theta(b)$; the lemma at $t>0$ then gives
$n_{+}(M_0)=\Theta(a)+\Theta(b)-\Theta_0(ab)-r=M-r$. The extreme case
$b=a^{-1}$ ($r=N$, $M=N$) gives the zero matrix.
\end{remark}

In the pivot argument the lemma is invoked off the walls, where the
hypothesis on $ab$ holds: with $a=g_s$, $b=\lambda P_{s-1}$, so that
$ab=\lambda g_sP_{s-1}$ is similar to $\lambda P_s$ (the spectra
coincide), the three conditions hold on the locus $K=0$, PI, on which
the pivot argument runs; this is contained in, and for $n\ge2$
strictly smaller than, the open locus $\mathcal P^{\circ}$ of
Theorem~\ref{thm:rankN}, whose statement is then extended by
continuity (Section~\ref{sec:degenerate}). Tuples with eigenvalue $1$,
which dominate the applications (the Burau seeds foremost), are
reached by the one-sided limits of Section~\ref{sec:degenerate}, not
by extending the Cayley transform; Remark~\ref{rem:cayley-resonant}
serves only as a consistency check on the walls.

For $N=1$ this is
$-\cot\pi x-\cot\pi y=-\sin\pi(x+y)/(\sin\pi x\sin\pi y)>0
\iff\lfloor x+y\rfloor=1$.

\begin{proof}
Set $B_t:=C(b)-tI$ and $b_t:=C^{-1}(B_t)=(B_t-i)(B_t+i)^{-1}$ for
$t\in[0,\infty)$: since $B_t$ is Hermitian, $b_t$ is unitary with
$1\notin\operatorname{spec}b_t$ for every $t$, and $b_0=b$. Put
$U_t:=ab_t$ and
\[
M_t:=C(a)+C(b_t)=C(a)+C(b)-tI
=-2i\,(a-1)^{-1}(U_t-1)(b_t-1)^{-1},
\]
the last equality by the sum identity used in
Theorem~\ref{thm:pivot}. In this factorization the outer factor
$(a-1)^{-1}$ and the inner factor $(b_t-1)^{-1}$ are invertible for
every $t$, so the kernel is governed by the middle factor alone:
$M_t$ is not invertible iff $1\in\operatorname{spec}U_t$, and
$\ker M_t=(b_t-1)\ker(U_t-1)$, of the same dimension ($a$ enters
through $U_t=ab_t$); and since the
eigenvalues of $M_t$ are $\mu_j-t$, $\mu_j$ the eigenvalues of
$C(a)+C(b)$, the crossing set
$Z:=\{t\ge0\mid 1\in\operatorname{spec}U_t\}=\{\mu_j\}\cap[0,\infty)$
is finite, with $0\notin Z$ by hypothesis. Off $Z$ define
\[
F(t):=n_{+}(M_t),\qquad
G(t):=\Theta(a)+\Theta(b_t)-\Theta(ab_t).
\]

\emph{Both vanish for $t\gg0$.} For $t>\max_j\mu_j$ we have
$M_t\prec0$, so $F=0$. Since $C(b_t)=C(b)-tI$ by construction, its eigenvalues are
$\beta_k-t$ ($\beta_k\in\operatorname{spec}C(b)$, eigenbasis fixed);
comparing with \eqref{eq:cayley-def}, the eigenangles
$\theta_k(t)\in(0,1)$ of $b_t$ solve $-\cot\pi\theta_k=\beta_k-t$,
hence decrease to $0^{+}$ without leaving $(0,1)$; thus $\Theta(b_t)$ is continuous on all of
$[0,\infty)$ and tends to $0$, while $ab_t\to a$ and
$\Theta(ab_t)\to\Theta(a)$. Moreover
$e\bigl(G(t)\bigr)=\det a\cdot\det b_t\cdot\det(ab_t)^{-1}=1$, so $G$
is integer-valued; being continuous off $Z$, it is locally constant
there, and $G=0$ for $t\gg0$.

\emph{Jumps of $F$.} As $t$ increases through $\mu\in Z$, $F$ drops by
exactly $m(\mu):=\dim\ker M_{\mu}=\dim\ker(U_{\mu}-1)$.

\emph{Uniform monotone transversality for $U_t$.} With
$\Omega_t:=-iU_t^{-1}\dot U_t$. Since $\dot B_t=-I$, the product rule
on $b_t=(B_t-i)(B_t+i)^{-1}$ (functions of $B_t$ commute) gives
\begin{gather*}
\dot b_t=(B_t+i)^{-2}\bigl[-(B_t+i)+(B_t-i)\bigr]=-2i\,(B_t+i)^{-2},\\
\Omega_t=-i\,b_t^{-1}\dot b_t=-2\,(B_t^{2}+I)^{-1}
\preceq-\frac{2}{1+\|B_t\|^{2}}\,I\prec0 ,
\end{gather*}
and the sign of $\Omega_t$ controls every Cayley chart of the path: for
$u_0\in S^{1}$, on any $t$-interval with
$u_0\notin\operatorname{spec}U_t$ set $W_t:=u_0^{-1}U_t$ and
$H_t:=C(W_t)$; then, differentiating and substituting
$\dot W_t=iW_t\Omega_t$,
\[
\dot H_t=-2\,(1-W_t)^{-1}W_t\Omega_t(1-W_t)^{-1},
\qquad
x^{\dagger}\dot H_t\,x=2\,z^{\dagger}\Omega_t\,z
\quad\bigl(x=(1-W_t)z\bigr),
\]
using $(1-W_t^{\dagger})(1-W_t)^{-1}W_t=-I$. Thus $\dot H_t\prec0$
wherever the chart is defined. Now fix $\mu\in Z$ and choose
$u_0\in S^{1}\setminus\bigl(\operatorname{spec}U_{\mu}\cup\{1\}\bigr)$.
The path $t\mapsto U_t$ is smooth ($b_t$ is a rational function of
$B_t$, with $B_t+i$ always invertible) and its eigenvalues move
continuously, so $u_0\notin\operatorname{spec}U_t$ persists on a
compact neighbourhood of $\mu$: there the chart is defined --- note
that it excludes $u_0$, not $1$, so the crossing of $1$ at $t=\mu$ is
watched from \emph{inside} the chart --- and the
bound is quantitative: since $\lVert(1-W_t)z\rVert\le2\lVert z\rVert$,
the chart identity gives
$x^{\dagger}\dot H_tx\le-(1+\lVert B_t\rVert^{2})^{-1}\lVert
x\rVert^{2}$, i.e.\ $\dot H_t\preceq-c_tI$ with
$c_t=(1+\lVert B_t\rVert^{2})^{-1}$ continuous and positive; taking
$c:=\min c_t>0$ over the neighbourhood, Weyl's inequality makes every
min--max eigenvalue of $H_t$ strictly decreasing there. Since the
spectral map of the chart is an increasing bijection from the cut
circle $S^{1}\setminus\{u_0\}$ to $\R$, exactly $m(\mu)$ eigenvalues
of $H_t$ pass through the chart value of $1$, at $t=\mu$ and strictly
downward; equivalently, exactly $m(\mu)$ eigenangles of $U_t$ cross
$0$, their representatives in $(0,1)$ jumping from $0^{+}$ to $1^{-}$.
Hence $\Theta(ab_t)$ jumps by $+m(\mu)$ and $G$ drops by exactly
$m(\mu)$.

$F$ and $G$ are therefore locally constant off the finite set $Z$,
drop by the same amount at every point of $Z$, and vanish for
$t\gg0$; hence $F(0)=G(0)$, which is the assertion.

\end{proof}

\section{The determinant identity and the pivot recursion}
\label{sec:pivot}
\subsection{Block minors}

Call a block matrix $B\in M_{sN}(\C)$ \emph{tri-constant} (our
terminology) if its strictly upper blocks are all equal to one
constant multiple of the identity, its strictly lower blocks to
another, and its diagonal blocks are arbitrary:
\[
B=\begin{pmatrix}
D_1 & \alpha I & \alpha I & \cdots & \alpha I\\
\beta I & D_2 & \alpha I & \cdots & \alpha I\\
\beta I & \beta I & D_3 & \ddots & \vdots\\
\vdots & \vdots & \ddots & \ddots & \alpha I\\
\beta I & \beta I & \cdots & \beta I & D_s
\end{pmatrix}.
\]

\begin{lemma}[tri-constant block determinant]\label{lem:blocktri}
Let $\alpha\neq\beta\in\C$, $D_1,\dots,D_s\in M_N(\C)$ with $D_j-\alpha I$
invertible, and let $B\in M_{sN}(\C)$ be tri-constant with blocks
$\alpha I$ ($j<k$), $\beta I$ ($j>k$), $D_j$ ($j=k$). Then, with
$T_s:=\prod_{j=s}^{1}(D_j-\alpha)^{-1}(D_j-\beta)$,
\[
\det B=\Bigl(\prod_{j}\det(D_j-\alpha)\Bigr)
\frac{\det(\beta I-\alpha T_s)}{(\beta-\alpha)^{N}} .
\]
\end{lemma}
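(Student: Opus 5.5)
We prove the identity by reducing $B$ to a block bidiagonal matrix and then computing a single Schur complement, which telescopes to $\beta I-\alpha T_s$.

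\emph{Step 1: row differences.} For $j=1,\dots,s-1$ replace block row $j$ by (block row $j$) $-$ (block row $j+1$). This is a unimodular operation, so it does not change $\det B$. In row $j$ the difference vanishes in every block column $<j$ (entries $\beta I-\beta I$) and every block column $>j+1$ (entries $\alpha I-\alpha I$). The only surviving blocks are $D_j-\beta$ in column $j$ and $\alpha-D_{j+1}$ in column $j+1$. The last block row $(\beta I,\dots,\beta I,D_s)$ is left unchanged.

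\emph{Step 2: column scaling.} Right-multiply by $\operatorname{diag}\bigl((D_j-\alpha)^{-1}\bigr)$; this is where the hypothesis that each $D_j-\alpha$ is invertible is used. This step extracts the factor $\prod_j\det(D_j-\alpha)$ from $\det B$. Put $E_j:=(D_j-\beta)(D_j-\alpha)^{-1}=(D_j-\alpha)^{-1}(D_j-\beta)$; the two orders agree because both factors are functions of $D_j$. The new matrix $B'$ has blocks $E_j$ at $(j,j)$ and $-I$ at $(j,j+1)$ for $j<s$.

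Its last row has two kinds of entries. For $j<s$ the entry is $\beta(D_j-\alpha)^{-1}=\beta(E_j-I)/(\alpha-\beta)$. The final entry is $D_s(D_s-\alpha)^{-1}=I+\alpha(E_s-I)/(\alpha-\beta)$. Both rewritings use $E_j-I=(\alpha-\beta)(D_j-\alpha)^{-1}$, which is valid since $\alpha\ne\beta$.

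\emph{Step 3: eliminate the first block column.} Set $\Pi_0:=I$ and $\Pi_j:=E_j\cdots E_1$, so that $\Pi_s=T_s$. Right-multiply $B'$ by the unipotent block matrix that adds $\sum_{j\ge2}(\text{column } j)\,\Pi_{j-1}$ to column $1$; this has determinant $1$.

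In rows $j<s$ the new first-column entry is $0$ in row $1$ by construction. For $2\le j<s$ it is $E_j\Pi_{j-1}-\Pi_j=0$. In the last row the entry is
\[
S=\sum_{j<s}\frac{\beta(\Pi_j-\Pi_{j-1})}{\alpha-\beta}+\Pi_{s-1}+\frac{\alpha(\Pi_s-\Pi_{s-1})}{\alpha-\beta}.
\]
The sum telescopes to $\beta(\Pi_{s-1}-I)/(\alpha-\beta)$. Collecting the three terms gives
\[
S=\frac{\alpha T_s-\beta I}{\alpha-\beta}=\frac{\beta I-\alpha T_s}{\beta-\alpha}.
\]

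\emph{Step 4: the remaining determinant.} The resulting matrix has first block column $(0,\dots,0,S)^{T}$. Block rows $1,\dots,s-1$ restricted to columns $2,\dots,s$ are block upper triangular with diagonal blocks $-I$. Moving block column $1$ to the end and expanding, the determinant is $\pm\det S$.

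The sign is $(-1)^{N^2(s-1)}$ from the column permutation times $(-1)^{N(s-1)}$ from $\det(-I)^{s-1}$. Since $N^2\equiv N\pmod 2$, these cancel and the sign is $+1$. Using $\det S=\det(\beta I-\alpha T_s)/(\beta-\alpha)^N$, this gives the stated formula.

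The main point requiring care is the sign and ordering bookkeeping in Steps 3–4: checking that $\Pi_j$ accumulates in the order matching $T_s=\prod_{j=s}^{1}$, and that the permutation sign cancels. The algebra itself is routine.
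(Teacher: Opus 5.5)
Your proof is correct, but it takes a different route from the paper's.

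\textbf{The paper's route.} The paper writes $B=B_0+\alpha EE^{T}$ with $E=(I,\dots,I)^{T}$ and $B_0$ block lower triangular (diagonal blocks $D_j-\alpha$, strictly lower blocks $(\beta-\alpha)I$). It solves $B_0Y=E$ by forward substitution to get $E^{T}B_0^{-1}E=(\beta-\alpha)^{-1}(I-T_s)$. It then collapses the $sN\times sN$ determinant to an $N\times N$ one with Sylvester's identity $\det(I+XY)=\det(I+YX)$.

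\textbf{Your route.} You do explicit Gaussian elimination instead:
\begin{itemize}
\item differencing adjacent block rows removes the constant off-diagonal blocks;
\item the column scaling extracts $\prod_j\det(D_j-\alpha)$;
\item one column sweep with the partial products $\Pi_j$ telescopes the bottom block to $S=(\beta I-\alpha T_s)/(\beta-\alpha)$.
\end{itemize}
I checked each step. This includes the degenerate case $s=1$, where Steps~1 and~3 are vacuous. It also includes the ordering $\Pi_s=E_s\cdots E_1=T_s$ and the sign cancellation $(-1)^{N^2(s-1)}(-1)^{N(s-1)}=1$.

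\textbf{Comparison.} Your version is fully elementary and needs no determinant lemma, but it requires sign and permutation bookkeeping that the paper's route avoids. The paper's framing as a rank-$N$ update pays off later: the same quantity $E^{T}B_0^{-1}E$ and the recursion $T_j=(D_j-\alpha)^{-1}(D_j-\beta)T_{j-1}$ are reused, via Sherman--Morrison--Woodbury, to get the explicit pivots in Theorem~\ref{thm:pivot}. Your elimination computes only the determinant, so it does not by itself supply that.

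\textbf{One small slip.} In Step~4, the block formed by rows $1,\dots,s-1$ and columns $2,\dots,s$ is block \emph{lower} bidiagonal, not upper: $-I$ sits on the diagonal and $E_j$ just below it. Its determinant is still $(-1)^{N(s-1)}$, so your conclusion is unaffected.
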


\begin{proof}
$B=B_0+\alpha EE^{T}$, $E=(I,\dots,I)^{T}$, $B_0$ block lower triangular
with diagonal $D_j-\alpha$, strictly lower blocks $(\beta-\alpha)I$.
Solving $B_0Y=E$ by blockwise forward substitution and setting
$T_j:=I-(\beta-\alpha)(Y_1+\dots+Y_j)$, $T_0=I$: row $j$ reads
$(D_j-\alpha)Y_j=T_{j-1}$, whence
$T_j=\bigl[I-(\beta-\alpha)(D_j-\alpha)^{-1}\bigr]T_{j-1}
=(D_j-\alpha)^{-1}(D_j-\beta)\,T_{j-1}$; and, rearranging the
definition of $T_s$,
\[
E^{T}B_0^{-1}E=\sum_jY_j=(\beta-\alpha)^{-1}(I-T_s).
\]
By Sylvester's determinant identity
$\det(I_m+XY)=\det(I_n+YX)$ (see, e.g., \cite{HJ}; or take the two Schur
complements of $\bigl(\begin{smallmatrix}I&-X\\Y&I
\end{smallmatrix}\bigr)$), applied with $X=\alpha B_0^{-1}E$,
$Y=E^{T}$, the $sN\times sN$ determinant collapses to $N\times N$:
\[
\det B=\det\bigl(B_0+(\alpha E)E^{T}\bigr)
=\det B_0\,\det\bigl(I_{sN}+\alpha B_0^{-1}EE^{T}\bigr)
=\det B_0\,\det\bigl(I_N+\alpha E^{T}B_0^{-1}E\bigr),
\]
and here
\[
I_N+\alpha E^{T}B_0^{-1}E
=I_N+\tfrac{\alpha}{\beta-\alpha}(I-T_s)
=\frac{\beta I-\alpha T_s}{\beta-\alpha}:
\]
the determinant of this $N\times N$ matrix contributes
the factor $(\beta-\alpha)^{-N}$.
\end{proof}

Let $\Ht^{[s]}$ denote the leading principal $sN\times sN$ block of
$\Ht$; since the block $(j,k)$ of $\Ht$ depends only on $g_j,g_k$ and
$\lambda$, this is also the form of the truncated tuple
$(g_1,\dots,g_s)$, and $\Ht^{[n]}=\Ht$. On the locus
$a_j^{(k)}\in(0,1)$ for all $j,k$, set
$G^{\mathrm{diag}}:=\operatorname{diag}(g_1,\dots,g_s)$
(the letter $G$ alone is reserved for the middle-convolution output,
as in \cite{N-KLM}), so that $G^{\mathrm{diag}}-1$ is the invertible
block-diagonal matrix with blocks $g_j-1$; using
$(g_j-1)^{\dagger}=(g_j^{-1}-1)$, one factors
$\Ht^{[s]}=(G^{\mathrm{diag}}-1)^{\dagger}M^{[s]}(G^{\mathrm{diag}}-1)$,
where $M^{[s]}$ --- the off-diagonal
cancellation $(g_j^{-1}-1)^{-1}\lambda^{\mp1/2}(g_j^{-1}-1)
=\lambda^{\mp1/2}I$ is what makes it work --- is tri-constant with
$\alpha=\lambda^{-1/2}$, $\beta=\lambda^{1/2}$ and
$D_j=\lambda^{-1/2}(g_j^{-1}-\lambda)(g_j^{-1}-1)^{-1}$; the identities
\begin{equation}\label{eq:DalphaDbeta}
\begin{gathered}
D_j-\alpha=\lambda^{-1/2}(1-\lambda)(g_j^{-1}-1)^{-1},
\qquad
D_j-\beta=\lambda^{-1/2}(1-\lambda)\,g_j^{-1}(g_j^{-1}-1)^{-1},\\
(D_j-\alpha)^{-1}(D_j-\beta)=g_j^{-1}
\end{gathered}
\end{equation}
(functions of $g_j$ commute, and the scalar factors cancel in the
quotient) give $T_s=g_s^{-1}\cdots g_1^{-1}=P_s^{-1}$. Computing
$\det P_s$ once by multiplicativity over the factors and once from
the spectrum of $P_s$ itself gives
$e\bigl(\sum_{j\le s,k}a_j^{(k)}\bigr)=e\bigl(\sum_k\gamma^{(s)}_k
\bigr)$ --- no eigenvalue-level correspondence, only the angle sums,
and only mod~$\Z$; this is the integrality $w_s\in\Z$ of the
following theorem.

\begin{theorem}\label{thm:detN}
With $w_s:=\sum_{j\le s,k}a_j^{(k)}-\sum_k\gamma^{(s)}_k\in\Z$,
\[
\det\Ht^{[s]}
=(-1)^{w_s}4^{sN}
\Bigl(\prod_{j\le s,k}\sin\pi a_j^{(k)}\Bigr)(\sin\pi l)^{N(s-1)}
\prod_k\sin\pi\bigl(l+\gamma^{(s)}_k\bigr).
\]
In particular $\det\Ht^{[s]}\neq0$ iff
$\lambda^{-1}\notin\operatorname{spec}P_s$: the rank-$N$ PI condition is
the non-vanishing of the principal block minors, and
$\operatorname{sign}\det\Ht=(-1)^{m}$, so $\widetilde q\equiv m\pmod2$
on all of $\mathcal P^{\circ}$.
\end{theorem}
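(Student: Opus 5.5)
The plan is to feed the factorization $\Ht^{[s]}=(G^{\mathrm{diag}}-1)^{\dagger}M^{[s]}(G^{\mathrm{diag}}-1)$ into Lemma~\ref{lem:blocktri}, with $\alpha=\lambda^{-1/2}$, $\beta=\lambda^{1/2}$ and $T_s=P_s^{-1}$, and then convert every factor into sines. This gives
\[
\det\Ht^{[s]}=\prod_{j\le s}\lvert\det(g_j-1)\rvert^{2}\cdot\prod_{j\le s}\det(D_j-\alpha)\cdot\frac{\det(\lambda^{1/2}I-\lambda^{-1/2}P_s^{-1})}{(\lambda^{1/2}-\lambda^{-1/2})^{N}} .
\]
We work first on the locus $K=0$, where all $a_j^{(k)}\in(0,1)$ and the factorization exists. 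Afterwards we extend to arbitrary tuples by continuity, since both sides are continuous in $(g_1,\dots,g_s)$. The left side is polynomial in the entries. On the right, $\sin\pi a$ vanishes exactly when the eigenvalue is $1$, so the sign convention $(-1)^{w_s}$ needs care at the boundary, but the value there is $0$ on both sides whenever some $a_j^{(k)}=0$.

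The scalar evaluations proceed one eigenvalue at a time. Each eigenvalue $e(a)$ of $g_j$ contributes $\lvert e(a)-1\rvert^2=4\sin^2\pi a$ to the first product. By \eqref{eq:DalphaDbeta}, $\det(D_j-\alpha)=\lambda^{-N/2}(1-\lambda)^N\det(g_j^{-1}-1)^{-1}$, and each eigenvalue contributes
\[
\frac{1}{e(-a)-1}=\frac{e(a/2)}{-2i\sin\pi a}.
\]
Also $1-\lambda=-2i\,e(l/2)\sin\pi l$, so $\lambda^{-1/2}(1-\lambda)=-2i\sin\pi l$. The denominator of the quotient is $(2i\sin\pi l)^N$. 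For each eigenvalue $e(\gamma)$ of $P_s$, the numerator factor is
\[
e(l/2)-e(-l/2-\gamma)=e(-\gamma/2)\cdot 2i\sin\pi(l+\gamma).
\]
Collecting terms, the moduli give $4^{sN}\prod\sin\pi a_j^{(k)}\,(\sin\pi l)^{N(s-1)}\prod_k\sin\pi(l+\gamma_k^{(s)})$. The powers of $i$ and $\pm2$ are $(-2i)^{sN}(-2i)^{-sN}$ from the $D$ factors and $(2i)^N(2i)^{-N}$ from the quotient, so they cancel. Together with $4^{sN}$ from the first product, the factor $2^{2sN}$ is as claimed.

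The remaining phase is $e\bigl(\tfrac12\sum_{j,k}a_j^{(k)}-\tfrac12\sum_k\gamma^{(s)}_k\bigr)=e(w_s/2)=(-1)^{w_s}$. This phase bookkeeping is the only delicate point: the half-angle choices have to be made consistently, with $\lambda^{1/2}=e(l/2)$ and the angles taken in $[0,1)$, so that the phase is exactly $e(w_s/2)$ rather than off by a sign. I expect this to be the main obstacle, and would check it against the case $N=1$, $s=1$, where $\Ht^{[1]}=\lambda^{-1/2}(g^{-1}-\lambda)(g-1)$ can be computed directly.

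For the consequences, $\sin\pi a_j^{(k)}\ne0$ on $K=0$ and $\sin\pi l\neq0$, so $\det\Ht^{[s]}=0$ iff some $l+\gamma_k^{(s)}\in\Z$, i.e.\ iff $\lambda^{-1}\in\operatorname{spec}P_s$. For the sign at $s=n$ we rewrite $\sin\pi(l+\gamma_k)=(-1)^{\lfloor l+\gamma_k\rfloor}\sin\pi\beta_k$ with $\sin\pi\beta_k>0$ off the walls. Then the total sign exponent is $w_n+\sum_k\lfloor l+\gamma_k\rfloor$. Since $\beta_k=l+\gamma_k-\lfloor l+\gamma_k\rfloor$, this exponent equals $Nl+\sum a-\sum\beta=m$, so $\operatorname{sign}\det\Ht=(-1)^m$. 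Finally, since $\Ht$ is nondegenerate Hermitian on $\mathcal P^{\circ}$, $\operatorname{sign}\det=(-1)^{\widetilde q}$, which gives $\widetilde q\equiv m\pmod 2$.
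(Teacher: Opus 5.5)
Your proposal is correct and follows the paper's proof essentially step for step: the factorization $\Ht^{[s]}=(G^{\mathrm{diag}}-1)^{\dagger}M^{[s]}(G^{\mathrm{diag}}-1)$, then Lemma~\ref{lem:blocktri} with $T_s=P_s^{-1}$, then the eigenvalue-by-eigenvalue half-angle evaluations that leave the phase $e(w_s/2)$, and finally the sign count via $\sum_k\lfloor l+\gamma_k\rfloor$ giving $(-1)^m$. Two small points: the step $e(w_s/2)=(-1)^{w_s}$ uses $w_s\in\Z$, which you should justify by computing $\det P_s$ both as $\prod_j\det g_j$ and from the spectrum of $P_s$; and the continuity extension is unnecessary, since the statement lives on the locus where $G^{\mathrm{diag}}-1$ is invertible, and off that locus both sides vanish directly (the right factors $(g_k-1)$ kill $\bigoplus_{j\le s}\Ker(g_j-1)$, and some $\sin\pi a_j^{(k)}=0$).
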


\begin{proof}
Factor
$\Ht^{[s]}=(G^{\mathrm{diag}}-1)^{\dagger}M^{[s]}(G^{\mathrm{diag}}-1)$,
so
\[
\det\Ht^{[s]}=\bigl|\det(G^{\mathrm{diag}}-1)\bigr|^{2}\det M^{[s]},
\qquad
\bigl|\det(G^{\mathrm{diag}}-1)\bigr|^{2}
=\prod_{j\le s,k}\bigl|e(a_j^{(k)})-1\bigr|^{2}
=4^{sN}\prod_{j\le s,k}\sin^{2}\pi a_j^{(k)} .
\]
For $\det M^{[s]}$ apply Lemma~\ref{lem:blocktri} with
$\alpha=\lambda^{-1/2}$, $\beta=\lambda^{1/2}$, so
$\beta-\alpha=2i\sin\pi l$. By \eqref{eq:DalphaDbeta} and
$e(\pm a)-1=\pm2i\,e(\pm a/2)\sin\pi a$,
\[
\det(D_j-\alpha)
=\frac{(-2i\sin\pi l)^{N}}{\det(g_j^{-1}-1)}
=\frac{(-2i\sin\pi l)^{N}}
{\prod_k(-2i)\,e\bigl(-a_j^{(k)}/2\bigr)\sin\pi a_j^{(k)}}
=(\sin\pi l)^{N}\,
\frac{e\bigl(\tfrac12\sum_ka_j^{(k)}\bigr)}
{\prod_k\sin\pi a_j^{(k)}},
\]
while $T_s=P_s^{-1}$ gives
\[
\det(\beta I-\alpha T_s)
=\prod_k\Bigl(\lambda^{1/2}-\lambda^{-1/2}e\bigl(-\gamma^{(s)}_k
\bigr)\Bigr)
=(2i)^{N}\,e\Bigl(-\tfrac12\sum_k\gamma^{(s)}_k\Bigr)
\prod_k\sin\pi\bigl(l+\gamma^{(s)}_k\bigr).
\]
Multiplying the three ingredients of Lemma~\ref{lem:blocktri}, every
factor $2i$ cancels and
\[
\det M^{[s]}
=(\sin\pi l)^{N(s-1)}\,e\bigl(\tfrac{w_s}2\bigr)\,
\frac{\prod_k\sin\pi(l+\gamma^{(s)}_k)}
{\prod_{j\le s,k}\sin\pi a_j^{(k)}},
\qquad
e\bigl(\tfrac{w_s}2\bigr)=(-1)^{w_s}
\]
by the integrality of $w_s$ established above; combining with
$\bigl|\det(G^{\mathrm{diag}}-1)\bigr|^{2}$ gives the display.
Non-vanishing: $\sin\pi(l+\gamma^{(s)}_k)=0$ iff $\beta^{(s)}_k=0$
iff $\lambda^{-1}\in\operatorname{spec}P_s$. Finally, at $s=n$ on
$\mathcal P^{\circ}$ the factor $\sin\pi(l+\gamma_k)$ is negative
exactly when $l+\gamma_k>1$; writing $\#$ for the number of such $k$,
one has $\sum_k\beta_k=Nl+\sum_k\gamma_k-\#$, hence $m=w_n+\#$ and
$\operatorname{sign}\det\Ht=(-1)^{w_n}(-1)^{\#}=(-1)^{m}$. Since $\Ht$
is non-degenerate and Hermitian on $\mathcal P^{\circ}$,
$\operatorname{sign}\det\Ht=(-1)^{\widetilde q}$ with $\widetilde q$
its number of negative eigenvalues; thus $\widetilde q\equiv m\pmod2$.
This determinant argument determines only the parity: the exact count
is obtained from the pivot decomposition and the Cayley inertia lemma
(Theorem~\ref{thm:pivot} and Lemma~\ref{lem:cayley}).
\end{proof}

\subsection{An explicit formula for the pivots}

Assume the PI condition $\lambda^{-1}\notin\operatorname{spec}P_s$
($1\le s\le n$), i.e.\ every leading principal block $\Ht^{[s]}$ is
invertible (Theorem~\ref{thm:detN}). Then the block $LDL^{\dagger}$
decomposition of $\Ht$ exists: eliminating block columns successively
produces the \emph{block pivots}
\[
S_1:=\Ht_{11},\qquad
S_s:=\Ht_{ss}-\Ht_{s,<s}\bigl(\Ht^{[s-1]}\bigr)^{-1}\Ht_{<s,s}
\quad(s\ge2),
\]
where $\Ht_{s,<s}$ is the $N\times(s-1)N$ block row
$(\Ht_{s1}\ \cdots\ \Ht_{s,s-1})$ and $\Ht_{<s,s}=\Ht_{s,<s}^{\dagger}$
the matching block column, so that each $S_s$ is $N\times N$
Hermitian; and $\Ht=L\,\operatorname{diag}(S_1,\dots,S_n)\,L^{\dagger}$
with $L$ block unit lower triangular. This pivot recursion is the mechanism
underlying the signature algorithm of \cite{N-KLM}, which computes
$\operatorname{sig}\Ht$ by sequentially diagonalizing the $S_s$;
the explicit formula below (Theorem~\ref{thm:pivot}) replaces the
diagonalization. Two standard consequences drive everything below.
Since congruence preserves inertia (Sylvester; this is
\eqref{eq:haynsworth} iterated), and recalling
$\widetilde q:=q(\Ht)$ from the Notation,
\[
\operatorname{In}\Ht=\sum_{s=1}^{n}\operatorname{In}S_s,
\qquad\text{in particular}\qquad
\widetilde q=\sum_{s=1}^{n}q(S_s);
\]
and determinants multiply along the flag,
\[
\det\Ht^{[s]}=\prod_{j\le s}\det S_j,
\qquad\text{i.e.}\qquad
\det S_s=\frac{\det\Ht^{[s]}}{\det\Ht^{[s-1]}},
\]
which feeds Theorem~\ref{thm:detN} into that formula.

\begin{theorem}\label{thm:pivot}
$S_s=(g_s-1)^{\dagger}\,S_s^{M}\,(g_s-1)$, where
\[
S_s^{M}
=2i\sin\pi l\,(g_s-1)^{-1}(\lambda g_sP_{s-1}-1)(\lambda P_{s-1}-1)^{-1}
=-\sin\pi l\,\bigl[C(g_s)+C(\lambda P_{s-1})\bigr].
\]
\end{theorem}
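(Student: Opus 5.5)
The plan is to reduce the statement to the tri-constant structure already exploited in Lemma~\ref{lem:blocktri}, and then do a Sherman--Morrison computation instead of a determinant computation.

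\textbf{Step 1: reduce to $M^{[s]}$.} Use the factorization
$\Ht^{[s]}=(G^{\mathrm{diag}}-1)^{\dagger}M^{[s]}(G^{\mathrm{diag}}-1)$, with $G^{\mathrm{diag}}$ block diagonal. The Schur complement of the leading $(s-1)N$ block then transforms by the last diagonal factor, which gives $S_s=(g_s-1)^{\dagger}S_s^M(g_s-1)$. Here $S_s^M$ is the Schur complement of $M^{[s-1]}$ in $M^{[s]}$. That leading block is invertible because $\det M^{[s-1]}\neq0$, by PI at $s-1$ and Theorem~\ref{thm:detN}; for $s=1$ there is nothing to complement. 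In $M^{[s]}$ the last block row is $\beta E^{T}$ and the last block column is $\alpha E$, with $E=(I,\dots,I)^T$. Hence
\[
S_s^M=D_s-\alpha\beta\,E^{T}(M^{[s-1]})^{-1}E=D_s-E^{T}(M^{[s-1]})^{-1}E ,
\]
since $\alpha\beta=1$.

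\textbf{Step 2: evaluate $E^{T}(M^{[s-1]})^{-1}E$.} Reuse the proof of Lemma~\ref{lem:blocktri}. Write $M^{[s-1]}=B_0+\alpha EE^{T}$ with
\[
X:=E^{T}B_0^{-1}E=(\beta-\alpha)^{-1}(I-T),\qquad T=P_{s-1}^{-1}.
\]
The push-through identity $E^{T}(B_0+\alpha EE^T)^{-1}E=X(I+\alpha X)^{-1}$ then gives
\[
E^{T}(M^{[s-1]})^{-1}E=(I-T)(\beta I-\alpha T)^{-1}=(P_{s-1}-1)(\beta P_{s-1}-\alpha)^{-1}.
\]
The inverse here exists precisely under PI at $s-1$, and also for $s=1$ since $\lambda\ne1$. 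Substituting $\alpha=\lambda^{-1/2}$, $\beta=\lambda^{1/2}$ gives
\[
S_s^M=\lambda^{-1/2}\bigl[(1-\lambda g_s)(1-g_s)^{-1}-\lambda(P_{s-1}-1)(\lambda P_{s-1}-1)^{-1}\bigr],
\]
with $D_s$ rewritten as $\lambda^{-1/2}(1-\lambda g_s)(1-g_s)^{-1}$.

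\textbf{Step 3: identify with Cayley transforms.} First record the sum identity, which holds for noncommuting unitaries $a,b$ without eigenvalue $1$. From $C(u)=-i+2i(1-u)^{-1}$ one gets
\[
C(a)+C(b)=2i(1-a)^{-1}\bigl[(1-b)+(1-a)-(1-a)(1-b)\bigr](1-b)^{-1}=2i(1-a)^{-1}(1-ab)(1-b)^{-1}.
\]
The ordering of $ab$ comes from expanding $(1-a)(1-b)$. This is the identity quoted in the proof of Lemma~\ref{lem:cayley}. Applied with $a=g_s$ and $b=\lambda P_{s-1}$, the claimed equality of the two expressions for $S_s^M$ is exactly this identity times $-\sin\pi l$, using $2i\sin\pi l\cdot(-1)^{2}\cdot(-2i)^{-1}\cdot(\ldots)$ sign bookkeeping.

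It remains to match Step 2's bracket with $2i\sin\pi l\,(g_s-1)^{-1}(\lambda g_sP_{s-1}-1)(\lambda P_{s-1}-1)^{-1}$. Write both terms over the common right factor $(\lambda P_{s-1}-1)^{-1}$ and the left factor $(g_s-1)^{-1}$. Since $(1-\lambda g_s)(1-g_s)^{-1}=\lambda+(1-\lambda)(1-g_s)^{-1}$ and $\lambda(P-1)(\lambda P-1)^{-1}=1+(1-\lambda)(\lambda P-1)^{-1}$, the bracket equals
\[
(\lambda-1)\bigl[1-(1-g_s)^{-1}+(1-\lambda P)^{-1}\bigr].
\]
The same partial-fraction manipulation as in the sum identity, now with a $-1$ in place of $+1$, factors this as $(\lambda-1)(g_s-1)^{-1}(\lambda g_sP-1)(\lambda P-1)^{-1}$ up to sign. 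Finally, $\lambda^{-1/2}(\lambda-1)=2i\sin\pi l$.

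The main obstacle is purely this scalar and sign bookkeeping, in particular keeping the noncommutative order $g_s$ before $P_{s-1}$, which is what makes $\lambda g_sP_{s-1}$, rather than $\lambda P_s$, appear. I would verify the final constant against the $N=1$, $s=1$ case: there $S_1=\Ht_{11}$ directly from \eqref{eq:blocks}, which gives $-\sin\pi l\,[C(g_1)+C(\lambda)]$.
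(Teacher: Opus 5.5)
Your proof follows the paper's argument step for step. You use the same reduction of $S_s$ to the Schur complement of $M^{[s-1]}$ in the tri-constant $M^{[s]}$, and the same Sherman--Morrison/push-through collapse $E^{T}(M^{[s-1]})^{-1}E=\lambda^{1/2}(P_{s-1}-1)(\lambda P_{s-1}-1)^{-1}$. You also use $\alpha\beta=1$ and the Cayley sum identity in the same way; unlike the paper, which only quotes that identity, you derive it. Steps 1 and 2 are correct as written.

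Step 3 has a sign slip that breaks the factorization as written. From your two partial-fraction identities the bracket is
\[
\lambda+(1-\lambda)(1-g_s)^{-1}-1-(1-\lambda)(\lambda P_{s-1}-1)^{-1}
=(\lambda-1)\bigl[1-(1-g_s)^{-1}-(1-\lambda P_{s-1})^{-1}\bigr],
\]
not $(\lambda-1)\bigl[1-(1-g_s)^{-1}+(1-\lambda P_{s-1})^{-1}\bigr]$. With your sign, the scalar numerator over $(1-g)(1-b)$, where $b=\lambda P_{s-1}$, is $1-2g+gb$. That does not factor through $1-gb$, so no manipulation ``up to sign'' rescues it.

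With the correct sign there is nothing left to hedge. Since $C(u)=-i+2i(1-u)^{-1}$, the bracket equals $-\frac{\lambda-1}{2i}\bigl[C(g_s)+C(\lambda P_{s-1})\bigr]$. Then $\lambda^{-1/2}(\lambda-1)=2i\sin\pi l$ gives $S_s^{M}=-\sin\pi l\,\bigl[C(g_s)+C(\lambda P_{s-1})\bigr]$ exactly. Your identity $C(a)+C(b)=2i(1-a)^{-1}(1-ab)(1-b)^{-1}=-2i(a-1)^{-1}(ab-1)(b-1)^{-1}$ (three sign flips) then yields the factored form. Your vague sign bookkeeping in that step should simply be this line.

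So, once corrected, your route produces the Cayley form first and the factored form second. The paper goes the other way: it expands $\lambda^{-1/2}(g_s^{-1}-\lambda)(\lambda P_{s-1}-1)-\lambda^{1/2}(g_s^{-1}-1)(P_{s-1}-1)=(\lambda^{1/2}-\lambda^{-1/2})(g_s^{-1}-\lambda P_{s-1})$ directly. Either order works, and the $N=1$, $s=1$ sanity check is then unnecessary.
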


\begin{proof}
Write $M^{[s-1]}=B_0+\alpha EE^{T}$ and $Y:=E^{T}B_0^{-1}E$ (the
letter $S$ is reserved for the pivots). The
Sherman--Morrison--Woodbury identity \cite{HJ},
\[
(B_0+\alpha EE^{T})^{-1}
=B_0^{-1}-\alpha\,B_0^{-1}E\,(I+\alpha Y)^{-1}E^{T}B_0^{-1},
\]
sandwiched between $E^{T}$ and $E$, collapses to an $N\times N$
resolvent of $Y$ (functions of $Y$ commute):
\[
E^{T}(M^{[s-1]})^{-1}E
=Y-\alpha Y(I+\alpha Y)^{-1}Y
=Y(I+\alpha Y)^{-1}\bigl[(I+\alpha Y)-\alpha Y\bigr]
=Y(I+\alpha Y)^{-1}.
\]
From the proof of Lemma~\ref{lem:blocktri},
$Y=(\beta-\alpha)^{-1}(I-T_{s-1})$, hence
$I+\alpha Y=(\beta-\alpha)^{-1}(\beta I-\alpha T_{s-1})$ and the
scalars cancel:
\[
E^{T}(M^{[s-1]})^{-1}E=(I-T_{s-1})(\beta I-\alpha T_{s-1})^{-1}
=\lambda^{1/2}(P_{s-1}-1)(\lambda P_{s-1}-1)^{-1},
\]
the last step by $T_{s-1}=P_{s-1}^{-1}$, $\alpha=\lambda^{-1/2}$,
$\beta=\lambda^{1/2}$, multiplying numerator and denominator by
$\lambda^{1/2}P_{s-1}$;
and since $\alpha\beta=1$, $S_s^{M}=D_s-E^{T}(M^{[s-1]})^{-1}E$.
Writing both terms over the common denominators --- the denominator
$(g_s^{-1}-1)^{-1}$ of $D_s$ is a function of $g_s$ alone and may be
pulled out to the left, the denominator $(\lambda P_{s-1}-1)^{-1}$
of the resolvent to the right ---
\begin{multline*}
S_s^{M}=(g_s^{-1}-1)^{-1}
\bigl[\lambda^{-1/2}(g_s^{-1}-\lambda)(\lambda P_{s-1}-1)\\
-\lambda^{1/2}(g_s^{-1}-1)(P_{s-1}-1)\bigr]
(\lambda P_{s-1}-1)^{-1};
\end{multline*}
in the bracket the mixed terms $\propto g_s^{-1}P_{s-1}$ and the
constant terms cancel,
\[
\lambda^{-1/2}(g_s^{-1}-\lambda)(\lambda P_{s-1}-1)
-\lambda^{1/2}(g_s^{-1}-1)(P_{s-1}-1)
=(\lambda^{1/2}-\lambda^{-1/2})\bigl(g_s^{-1}-\lambda P_{s-1}\bigr),
\]
which gives the first displayed form; the second follows from
$C(u)+C(v)=-2i(u-1)^{-1}(uv-1)(v-1)^{-1}$.

\end{proof}

Note that $S_s$ is not invertible iff
$\lambda^{-1}\in\operatorname{spec}(g_sP_{s-1})
=\operatorname{spec}P_s$, and has poles exactly at the previous wall
$\lambda^{-1}\in\operatorname{spec}P_{s-1}$: the wall structure of the
theorem is carried entirely by the pivots.

\section{The signature theorem on the open locus}\label{sec:open}
\begin{theorem}\label{thm:rankN}
On $\mathcal P^{\circ}$, $\operatorname{sig}\Ht=(nN-m,\,m)$.
\end{theorem}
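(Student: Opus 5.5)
First, I will prove the statement on the PI sublocus, where $\lambda^{-1}\notin\operatorname{spec}P_s$ for every $1\le s\le n$. Then I will extend it to all of $\mathcal P^{\circ}$ by a constancy argument.

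On the PI sublocus the block $LDL^{\dagger}$ decomposition exists, and inertia is additive along the pivots: $\widetilde q=\sum_s q(S_s)$. By Theorem~\ref{thm:pivot}, $S_s$ is congruent, via the invertible $g_s-1$, to
\[
S_s^M=-\sin\pi l\,\bigl[C(g_s)+C(\lambda P_{s-1})\bigr].
\]
Since $\sin\pi l>0$, this gives $q(S_s)=n_+\bigl(C(g_s)+C(\lambda P_{s-1})\bigr)$.

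Next I apply Lemma~\ref{lem:cayley} with $a=g_s$ and $b=\lambda P_{s-1}$. Its hypotheses hold on the PI sublocus:
\begin{itemize}
\item $1\notin\operatorname{spec}g_s$ because $K=0$;
\item $1\notin\operatorname{spec}(\lambda P_{s-1})$ by PI at $s-1$ (for $s=1$, $b=\lambda I$ with $l\in(0,1)$);
\item $1\notin\operatorname{spec}(ab)$ because $ab=\lambda g_sP_{s-1}$ is conjugate to $\lambda P_{s-1}g_s=\lambda P_s$, which has no eigenvalue $1$ by PI at $s$.
\end{itemize}
The lemma then gives
\[
q(S_s)=\sum_k a_s^{(k)}+\Theta(\lambda P_{s-1})-\Theta(\lambda P_s).
\]
By definition $\Theta(\lambda P_s)=\sum_k\beta^{(s)}_k$, and $\Theta(\lambda P_0)=Nl$. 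So the right-hand side equals $m_s-m_{s-1}$. Summing over $s$ telescopes to $\widetilde q=m_n-m_0=m$. Because $\Ht$ is non-degenerate on $\mathcal P^{\circ}$, $n_+=nN-m$.

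To remove the PI assumption, I use that on $\mathcal P^{\circ}$ the form $\Ht$ is non-degenerate and depends continuously on $(g;\lambda)$. Hence $\widetilde q$ is locally constant there. The integer $m$ is also locally constant on $\mathcal P^{\circ}$: the $\beta_k=\{l+\gamma_k\}$ are continuous in the data as long as no $\beta_k$ reaches $0$, which is exactly the condition $\lambda\notin E$, and the $a_j^{(k)}$ stay in $(0,1)$ when $K=0$.

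So it suffices to show that the PI sublocus is dense in every connected component of $\mathcal P^{\circ}$, or more simply that each point of $\mathcal P^{\circ}$ has PI points arbitrarily close to it. At a fixed tuple, failure of PI means $\lambda^{-1}\in\bigcup_{s<n}\operatorname{spec}P_s$, which is a finite set of $\lambda$ values. Perturbing $\lambda$ slightly within the open set $\mathcal P^{\circ}$ therefore reaches PI points. Local constancy of both sides then transports the identity $\widetilde q=m$ to the original point.

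The main obstacle is the index bookkeeping in the Cayley step: checking that the spectra of $\lambda g_sP_{s-1}$ and $\lambda P_s$ agree, and that the $\Theta$'s are taken in $(0,1)$, consistent with the $[0,1)$ convention for $\beta^{(s)}$. The case $\gamma^{(s)}_k=0$ is harmless, since PI only requires $\beta^{(s)}_k\ne0$.
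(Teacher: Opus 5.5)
Your proof is correct and follows the paper's. The paper also works on the PI locus via the pivot decomposition, Theorem~\ref{thm:pivot} and Lemma~\ref{lem:cayley} with $a=g_s$, $b=\lambda P_{s-1}$, telescopes to $\widetilde q=m$, and then removes PI by local constancy of $\widetilde q$ and $m$ on $\mathcal P^{\circ}$. The one difference is cosmetic: you perturb $\lambda$ at a fixed tuple, while the paper rotates $g_1\mapsto e(t)g_1$ at fixed $\lambda$. Both multiply every $\lambda P_s$ by the same phase, so they leave the PI-failure set and keep $m$ fixed in exactly the same way.
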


The assumptions $K=0$ and $\lambda\notin E$ defining
$\mathcal P^{\circ}$ are scaffolding for the proof, not restrictions
on the final result: both are removed in
Section~\ref{sec:degenerate} below, and the final form of the
theorem, Corollary~\ref{cor:quotsig}, holds for arbitrary unitary
tuples and every $\lambda\neq1$ --- eigenvalue-one channels
contribute $0$ to $\sum a$, the summand $\kappa=\dim K$ to the
corank, and nothing else.

The proof rests on three computational results: the block minors
(Theorem~\ref{thm:detN}), an explicit formula for the block pivots
(Theorem~\ref{thm:pivot}), and an inertia formula for sums of Cayley
transforms (Lemma~\ref{lem:cayley}).

\subsection{Proof of Theorem \ref{thm:rankN}}

Assume first the PI condition. By Theorem~\ref{thm:pivot}, since
$\sin\pi l>0$,
\[
q(S_s)=n_{+}\bigl(C(g_s)+C(\lambda P_{s-1})\bigr)
=\Theta(g_s)+\Theta(\lambda P_{s-1})-\Theta(\lambda P_s)
=m_s-m_{s-1},
\]
using Lemma~\ref{lem:cayley} with $a=g_s$, $b=\lambda P_{s-1}$
($\operatorname{spec}(g_s\lambda P_{s-1})=\operatorname{spec}\lambda P_s$)
and $\Theta(g_s)=\sum_k a_s^{(k)}$,
$\Theta(\lambda P_{s})=\sum_k\beta_k^{(s)}$. Telescoping,
$\widetilde q=\sum_s q(S_s)=m_n=m$. At points of $\mathcal P^{\circ}$
where PI fails, rotate: replacing $g_1$ by $e(t)g_1$ multiplies every
$P_s$ by $e(t)$, so PI at level $s$ fails only for $t$ in the finite
set $\{t\equiv-l-\gamma^{(s)}_k\}$, and for all small $t\neq0$
outside finitely many values the rotated tuple lies in
$\mathcal P^{\circ}$ and satisfies PI. Along this rotation $m$ is
\emph{exactly} invariant for small $t$ (each $a_1^{(k)}\in(0,1)$
shifts by $t$ without wrap, and each
$\beta^{(n)}_k=\frc{l+\gamma^{(n)}_k+t}$ shifts by $t$ without
crossing $0$, since no $\beta^{(n)}_k$ vanishes at $t=0$ ---
$\lambda\notin E$ --- so $Nt$ cancels between $\sum a$ and
$\sum\beta$), while
$\widetilde q$ is locally constant since $\det\Ht\neq0$ on
$\mathcal P^{\circ}$ depends only on the level-$n$ data
(Theorem~\ref{thm:detN}); the equality at $t\neq0$ therefore passes
to $t=0$. \qed

\medskip
No Hodge theory, no wall-crossing lemma, and no connectivity of
chambers (the connected components of $\mathcal P^{\circ}$ on which
$m$ is locally constant) enter the argument; the branch $\lambda^{1/2}=e(l/2)$ is used only through
$\sin\pi l>0$ in Theorem~\ref{thm:pivot}. For $N=1$, with $A_s:=\sum_{j\le s}a_j$ ($A_0:=0$), the pivots reduce to
$4\sin\pi a_s\,\sin\pi l\,\sin\pi(l+A_s)/\sin\pi(l+A_{s-1})$ and the
argument specializes to the proof of the rank-one theorem. %
The
wall-crossing phenomenon is now a corollary rather than a lemma: crossing
the $E$-wall $\lambda^{-1}\in\operatorname{spec}P_n$ moves eigenvalues of
the last pivot through $0$ with multiplicity equal to the spectral flow of
$\lambda P_n$ through $1$, which is simultaneously the jump of $m$. This
last-pivot reading is a general-position statement --- it presupposes the
PI condition at $s<n$, so that the pivots exist; the general case,
including simultaneous walls and PI failure, is
Proposition~\ref{prop:wall} below, whose crossing-form proof requires no
genericity.

\section{Degenerate loci: the kernel $K$ and the walls $\lambda\in E$}
\label{sec:degenerate}
\label{subsec:degenerate}
We now remove the two non-degeneracy assumptions of
Theorem~\ref{thm:rankN}, allowing the seeds to have eigenvalue $1$ and
$\lambda$ to lie on a wall, with the conventions of
Section~\ref{sec:prelim} (eigenangles in $[0,1)$; $K$, $\kappa$,
$R$, $r(l)$, $m(l)$, $L$, and the radical \eqref{eq:radical}).

The $l$-dependence of \eqref{eq:blocks} is captured by a single
Hermitian pencil. Write $D:=\bigl(g_1-1\ \cdots\ g_n-1\bigr)$ for the
$N\times nN$ block row, so that $(D^{\dagger}D)_{jk}=(g_j-1)^{\dagger}(g_k-1)$
and $Dv=\sum_j(g_j-1)v_j$.

\begin{proposition}[Hermitian pencil]\label{prop:pencil}
For every unitary tuple and every $l\in(0,1)$,
\begin{equation}\label{eq:pencil}
\Ht(l)=\sin(\pi l)\,B+\cos(\pi l)\,D^{\dagger}D,
\qquad B:=\Ht(\tfrac12),
\end{equation}
where $B$ is the Hermitian matrix with blocks
$B_{jk}=-i\,(g_j-1)^{\dagger}(g_k-1)$ for $j<k$,
$B_{jk}=i\,(g_j-1)^{\dagger}(g_k-1)$ for $j>k$, and
$B_{jj}=-i\,(g_j-g_j^{\dagger})$. In particular $B$ and $D^{\dagger}D$
are independent of $l$, and $D^{\dagger}D\succeq0$ has rank at most $N$.
\end{proposition}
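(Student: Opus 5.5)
The plan is to verify the pencil identity \eqref{eq:pencil} block by block, working directly from \eqref{eq:blocks}. The only $l$-dependence sits in the scalar prefactors $\lambda^{\mp1/2}=\cos\pi l\mp i\sin\pi l$ and, on the diagonal, in the factor $(g_j^{-1}-\lambda)$. So the procedure is to write each block as $\cos(\pi l)X_{jk}+\sin(\pi l)Y_{jk}$ with $X,Y$ independent of $l$, and then read off $X=D^\dagger D$ and $Y=B$.

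\textbf{Off-diagonal blocks.} For $j<k$ we have
\[
\Ht_{jk}=(\cos\pi l-i\sin\pi l)(g_j-1)^\dagger(g_k-1),
\]
using $(g_j^{-1}-1)=(g_j-1)^\dagger$. This gives $X_{jk}=(g_j-1)^\dagger(g_k-1)=(D^\dagger D)_{jk}$ and $Y_{jk}=-i(g_j-1)^\dagger(g_k-1)$. The case $j>k$ is identical, with $+i$ in place of $-i$.

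\textbf{Diagonal blocks.} Expanding the diagonal block gives
\[
\lambda^{-1/2}(g_j^{-1}-\lambda)(g_j-1)=\lambda^{-1/2}(g_j^{-1}-1)(g_j-1)+\lambda^{-1/2}(1-\lambda)(g_j-1).
\]
Since $\lambda^{-1/2}(1-\lambda)=\lambda^{-1/2}-\lambda^{1/2}=-2i\sin\pi l$, the block equals
\[
(\cos\pi l-i\sin\pi l)(2-g_j-g_j^\dagger)-2i\sin\pi l\,(g_j-1).
\]
Its cosine part is $2-g_j-g_j^\dagger=(g_j-1)^\dagger(g_j-1)=(D^\dagger D)_{jj}$. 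Its sine part is $-i(2-g_j-g_j^\dagger)-2i(g_j-1)=-i(g_j-g_j^\dagger)$, which is the stated $B_{jj}$.

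\textbf{Remaining claims.} Setting $l=\tfrac12$ in the identity just proved gives $\cos=0$ and $\sin=1$, so $B=\Ht(\tfrac12)$. This shows $B$ is Hermitian, because it is a value of the Hermitian family $\Ht$; it can also be checked directly, since the $(k,j)$ block is the adjoint of the $(j,k)$ block, the signs $\pm i$ flip under adjunction, and $-i(g-g^\dagger)$ is Hermitian. Finally, $D^\dagger D\succeq0$ holds automatically, and its rank is at most $\operatorname{rank}D\le N$ because $D$ has $N$ rows.

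There is no real obstacle here; the only care needed is the bookkeeping of the diagonal block, namely splitting $g_j^{-1}-\lambda=(g_j^{-1}-1)+(1-\lambda)$ so that the $\cos$ part matches $D^\dagger D$ exactly.
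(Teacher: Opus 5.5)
Your proposal is correct and follows essentially the same route as the paper: expand $\lambda^{\mp1/2}=\cos\pi l\mp i\sin\pi l$ in \eqref{eq:blocks}, split the diagonal block via $g_j^{-1}-\lambda=(g_j^{-1}-1)+(1-\lambda)$ with $\lambda^{-1/2}(1-\lambda)=-2i\sin\pi l$, and identify the $\sin\pi l$ coefficient with $\Ht(\tfrac12)$ by setting $l=\tfrac12$. You also spell out what the paper leaves implicit: the diagonal cosine part matching $(D^{\dagger}D)_{jj}$, the Hermiticity of $B$, and the bound $\operatorname{rank}D^{\dagger}D\le N$.
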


\begin{proof}
Direct from \eqref{eq:blocks}: with $\lambda^{\mp1/2}=\cos\pi l\mp
i\sin\pi l$ the off-diagonal blocks are
$(\cos\pi l\mp i\sin\pi l)(g_j-1)^{\dagger}(g_k-1)$, and the diagonal
block splits as
$\lambda^{-1/2}(g_j^{-1}-1)(g_j-1)+\lambda^{-1/2}(1-\lambda)(g_j-1)$
with $\lambda^{-1/2}(1-\lambda)=-2i\sin\pi l$; since
$(g_j^{-1}-1)(g_j-1)+2(g_j-1)=g_j-g_j^{-1}$, the coefficient of
$\sin\pi l$ on the diagonal is $-i(g_j-g_j^{\dagger})$. Evaluating
\eqref{eq:blocks} at $l=\tfrac12$ ($\lambda=-1$,
$\lambda^{-1/2}=-i$) gives the same blocks, so the coefficient of
$\sin\pi l$ is $\Ht(\tfrac12)$.
\end{proof}

\begin{corollary}[monotonicity of the normalized form]\label{cor:pencil}
On $(0,1)$ the normalized form is non-increasing in the Loewner
order:
\[
\frac{\Ht(l)}{\sin\pi l}=B+\cot(\pi l)\,D^{\dagger}D,
\qquad
\frac{d}{dl}\,\frac{\Ht(l)}{\sin\pi l}=-\frac{\pi}{\sin^{2}\pi l}\,D^{\dagger}D\preceq0 .
\]
Consequently, if $\Ht(l_0)v=0$ then
$\bigl\langle\tfrac{d}{dl}\big|_{l_0}\Ht(l)\,v,v\bigr\rangle
=-\dfrac{\pi}{\sin\pi l_0}\,\lVert Dv\rVert^{2}$.
\end{corollary}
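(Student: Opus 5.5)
The statement to prove is Corollary~\ref{cor:pencil}; it is a direct consequence of the pencil identity \eqref{eq:pencil} of Proposition~\ref{prop:pencil}. The plan is to divide \eqref{eq:pencil} by $\sin\pi l$, which is positive on $(0,1)$. This gives the normalized form
\[
\frac{\Ht(l)}{\sin\pi l}=B+\cot(\pi l)\,D^{\dagger}D .
\]
Since $B$ and $D^{\dagger}D$ do not depend on $l$, only the scalar coefficient varies. Using $\frac{d}{dl}\cot\pi l=-\pi/\sin^{2}\pi l$, we get
\[
\frac{d}{dl}\,\frac{\Ht(l)}{\sin\pi l}=-\frac{\pi}{\sin^{2}\pi l}\,D^{\dagger}D .
\]
Because $D^{\dagger}D\succeq0$ and the scalar factor is negative, this derivative is $\preceq0$. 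Integrating the derivative between any $l_1<l_2$ in $(0,1)$ shows that the normalized form is non-increasing in the Loewner order.

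For the crossing-form identity, differentiate the product $\Ht(l)=\sin(\pi l)\cdot\bigl(\Ht(l)/\sin\pi l\bigr)$ at $l_0$:
\[
\frac{d}{dl}\Big|_{l_0}\Ht(l)=\pi\cos(\pi l_0)\,\frac{\Ht(l_0)}{\sin\pi l_0}
-\frac{\pi}{\sin\pi l_0}\,D^{\dagger}D .
\]
Now pair this with $v$, where $\Ht(l_0)v=0$. The first term vanishes identically. The second term gives $-\frac{\pi}{\sin\pi l_0}\langle D^{\dagger}Dv,v\rangle=-\frac{\pi}{\sin\pi l_0}\lVert Dv\rVert^{2}$.

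As a sanity check, one can instead differentiate \eqref{eq:pencil} directly. This gives $\pi\cos(\pi l)B-\pi\sin(\pi l)D^{\dagger}D$, and using $\sin(\pi l_0)Bv=-\cos(\pi l_0)D^{\dagger}Dv$ recovers the same value, since $\cos^2+\sin^2=1$.

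No step here is a real obstacle; the only care needed is the sign convention, namely that $\sin\pi l>0$ on the admissible range.
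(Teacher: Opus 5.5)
Your proof is correct and follows the paper. The first part is read off from \eqref{eq:pencil} exactly as you do. For the crossing identity, the paper's argument is precisely your ``sanity check'': it pairs \eqref{eq:pencil} with $v$ to get $\langle Bv,v\rangle=-\cot(\pi l_0)\lVert Dv\rVert^{2}$, then substitutes this into $\pi\cos(\pi l_0)\langle Bv,v\rangle-\pi\sin(\pi l_0)\lVert Dv\rVert^{2}$. Your main route applies the product rule to $\Ht(l)=\sin(\pi l)\cdot\Ht(l)/\sin(\pi l)$, so the cosine term dies by $\Ht(l_0)v=0$; this is an equivalent, slightly tidier repackaging that reuses the first half of the corollary.
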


\begin{proof}
The derivative is immediate from \eqref{eq:pencil}. For the second
claim, $\Ht(l_0)v=0$ and \eqref{eq:pencil} give, after pairing with $v$,
\begin{gather*}
\langle Bv,v\rangle=-\cot(\pi l_0)\lVert Dv\rVert^{2},\\
\langle\Ht'(l_0)v,v\rangle=\pi\cos(\pi l_0)\langle Bv,v\rangle
-\pi\sin(\pi l_0)\lVert Dv\rVert^{2}
=-\frac{\pi}{\sin\pi l_0}\lVert Dv\rVert^{2}.
\end{gather*}
\end{proof}

\begin{lemma}[permanent kernel]\label{lem:permker}
$\Ht(l)\,K=0$ for every $l\in(0,1)$. Consequently the $K$-block rows
and columns of $\Ht(l)$ vanish identically in $l$, every derivative
$\partial_l^{\,j}\Ht(l)$ annihilates $K$, and the form descends, for
all $l$ simultaneously, to $V^{\oplus n}/K$ with the same positive and
negative inertia.
\end{lemma}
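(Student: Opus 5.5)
The plan is to verify the claim directly from the block formula \eqref{eq:blocks}. The key observation is that every block $\Ht_{jk}$ has the right-hand factor $(g_k-1)$. Its left-hand factor is $(g_j^{-1}-1)=(g_j-1)^{\dagger}$ when $j\neq k$, and $(g_j^{-1}-\lambda)$ when $j=k$; the scalar $\lambda^{\mp1/2}$ is irrelevant. Let $v=(0,\dots,w_k,\dots,0)$ with $w_k\in W_k=\Ker(g_k-1)$ placed in slot $k$. Then the $j$-th block of $\Ht(l)v$ is $\Ht_{jk}w_k$, which contains the factor $(g_k-1)w_k=0$, both for $j\neq k$ and for the diagonal block $j=k$. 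Since $K=W_1\oplus\dots\oplus W_n$ is spanned by such vectors, linearity gives $\Ht(l)K=0$ for every $l\in(0,1)$. The identity holds pointwise in $l$, and at no step is non-degeneracy, the PI condition or $\lambda\notin E$ used.

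For the consequences, first apply Hermitian symmetry. From $\Ht(l)K=0$ we get $K^{\dagger}\Ht(l)=0$ as well, i.e.\ $\langle \Ht(l)x,v\rangle=\langle x,\Ht(l)v\rangle=0$ for $v\in K$. So in a basis adapted to $V^{\oplus n}=K^{\perp}\oplus K$, or more simply in blockwise coordinates adapted to $V=W_j^{\perp}\oplus W_j$ in each slot, the rows and columns belonging to $K$ vanish identically in $l$.

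Second, the derivatives. Either differentiate the identity $\Ht(l)v=0$, whose left side is a smooth (indeed trigonometric) function of $l$ for fixed $v$, or use the pencil \eqref{eq:pencil}. In the pencil, $D v=\sum_j(g_j-1)v_j=0$ for $v\in K$, so $D^{\dagger}Dv=0$. Also $Bv=\Ht(\tfrac12)v=0$ by the case $l=\tfrac12$. Hence $\partial_l^{\,j}\Ht(l)=\pi^j\bigl(\sin^{(j)}(\pi l)B+\cos^{(j)}(\pi l)D^{\dagger}D\bigr)$ kills $K$ for every $j$.

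Finally, the descent. Because $K$ lies in the radical of $\Ht(l)$ for every $l$, the sesquilinear form $([x],[y])\mapsto\langle\Ht(l)x,y\rangle$ is well defined on $V^{\oplus n}/K$. Its Gram matrix in the basis induced from a complement of $K$ is the $K^{\perp}$-block of $\Ht(l)$, and $\Ht(l)$ is the direct sum of this block with the zero block on $K$. So $n_{+}$ and $q$ are unchanged by the passage to the quotient; only the nullity drops by $\kappa$. The subspace $K$ does not depend on $l$, so the identification of the quotient is simultaneous for all $l$.

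None of these steps is a real obstacle. The only point that needs care is the diagonal block $j=k$, where the left factor is $(g_j^{-1}-\lambda)$ and not $(g_j-1)^{\dagger}$. The argument must therefore rest on the right factor $(g_k-1)$, which every block shares, and not on Hermitian structure of the left factor. Hermitian symmetry is then used only to transfer the vanishing from columns to rows.
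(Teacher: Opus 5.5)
Your proposal is correct and follows the paper's proof: the shared right factor $(g_k-1)$ in every block of \eqref{eq:blocks} kills $w_k\in W_k$, and Hermitian symmetry transfers the vanishing from columns to rows. Your added details on the derivatives (via the pencil \eqref{eq:pencil}, or by differentiating the identity $\Ht(l)v=0$) and on the descent to $V^{\oplus n}/K$ are correct elaborations of steps the paper leaves implicit.
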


\begin{proof}
Every block \eqref{eq:blocks} carries the right factor $(g_k-1)$,
which kills $w_k\in W_k$; Hermiticity kills the rows.
\end{proof}

\begin{lemma}[the radical, tuple level]\label{lem:radical}
Let $\lambda\neq1$. Then $\Ht L=0$ and $K\cap L=0$; together with
Lemma~\ref{lem:permker}, $K\oplus L\subseteq\Ker\Ht$. (Equality is
Proposition~\ref{prop:radical} below.)
\end{lemma}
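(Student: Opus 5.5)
The plan is to verify both claims directly from the block formula \eqref{eq:blocks}, using the recursion that defines $L$ to telescope the sums. No inertia or pencil argument is needed. Fix $v\in L$ and extend the recursion by one step with the convention $v_0:=g_1v_1$. The defining relations $v_{k-1}=g_kv_k$ for $2\le k\le n$, together with $v_n=\lambda P_nv_n$, give $v_k=g_{k+1}\cdots g_nv_n$ for every $k$, and hence $v_0=P_nv_n=\lambda^{-1}v_n$. The key identity is
\[
(g_k-1)v_k=v_{k-1}-v_k\qquad(1\le k\le n),
\]
which makes the off-diagonal sums in $(\Ht v)_j$ telescope:
\[
\textstyle\sum_{k>j}(g_k-1)v_k=v_j-v_n,\qquad
\sum_{k<j}(g_k-1)v_k=v_0-v_{j-1}=\lambda^{-1}v_n-g_jv_j .
\]

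Next I would write the $j$-th block component of $\Ht v$ and multiply it by $\lambda^{1/2}$. With $w:=v_j$ and $u:=v_n$, the blocks \eqref{eq:blocks} give
\[
\lambda^{1/2}(\Ht v)_j=(g_j^{-1}-1)\bigl[(w-u)+\lambda(\lambda^{-1}u-g_jw)\bigr]+(g_j^{-1}-\lambda)(g_j-1)w .
\]
In the bracket the $u$-terms cancel, so it reduces to $(1-\lambda g_j)w$. Expanding the products, all functions of $g_j$ and hence commuting, gives
\[
(g_j^{-1}-1)(1-\lambda g_j)w+(1-g_j^{-1}-\lambda g_j+\lambda)w ,
\]
and the eight monomials cancel in pairs. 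Therefore $\Ht v=0$ and $\Ht L=0$. This bookkeeping, in particular keeping the $\lambda^{\pm1/2}$ factors and the convention for $v_0$ consistent at the endpoints $j=1$ and $j=n$, is the only step that needs care. Both endpoints are covered uniformly by the telescoping formulas: for $j=1$ the lower sum is empty, and indeed $\lambda^{-1}v_n-g_1v_1=v_0-v_0=0$; for $j=n$ the upper sum is empty, and indeed $v_n-v_n=0$.

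For $K\cap L=0$, take $v$ in both subspaces. Then $g_kv_k=v_k$ for each $k$, so the $L$-relations become $v_k=g_{k+1}v_{k+1}=v_{k+1}$. Hence all components equal a single vector $x$ fixed by every $g_k$. It follows that $P_nx=x$, and the last $L$-relation gives $x=\lambda P_nx=\lambda x$. Since $\lambda\ne1$, we get $x=0$ and so $v=0$.

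Finally, I would combine this with Lemma~\ref{lem:permker}, which gives $\Ht K=0$. The sum $K+L$ is direct and is annihilated by $\Ht$, which yields $K\oplus L\subseteq\Ker\Ht$.
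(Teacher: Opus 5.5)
Your proof is correct and follows the paper's route. The paper obtains $\Ht L=0$ from exactly this telescoping of $(g_k-1)v_k=v_{k-1}-v_k$ in the blocks \eqref{eq:blocks}, though it only sketches the check and defers to \cite[Thm.~9, Step~1]{N-KLM}; you carry out the cancellation explicitly, including the endpoints $j=1$ and $j=n$, and your convention $v_0:=g_1v_1$ is equivalent to the paper's $v_0:=\lambda^{-1}v_n$. Your argument for $K\cap L=0$ is the same as the paper's.
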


\begin{proof}
For $v\in L$ the parametrizing identities $v_{j-1}=g_jv_j$
($1\le j\le n$, with $v_0:=\lambda^{-1}v_n$) and
$x_j:=(g_j-1)v_j=v_{j-1}-v_j$ hold; they identify $L$ with
$\Ker(\lambda P_n-1)$ via $v\mapsto v_n$ and are used again in
Lemma~\ref{lem:crossing}(ii). That every block row of $\Ht v$ then
vanishes is part of \cite[Thm.~9]{N-KLM} (Step~1 of its proof); the computation there uses only
the tuple, and a two-line telescoping check from \eqref{eq:blocks}
reproduces it. Directness: $w\in K\cap L$ has $g_jw_j=w_j$ and
$w_{j-1}=g_jw_j=w_j$, so all components equal $w_n$ and
$g_jw_n=w_n$ for every $j$; hence
$w_n=\lambda P_nw_n=\lambda w_n$, and $w=0$ since $\lambda\neq1$.
\end{proof}

The two degeneracies live in the two factors of the joint parameter
space $U(N)^{n}\times(S^{1}\setminus\{1\})$: $K$ records
eigenvalue $1$ of the seeds (tuple factor), $L$ records eigenvalue
$1$ of $\lambda P_n$ (circle factor). Accordingly we probe
$\Ker\Ht=K\oplus L$ by one one-parameter deformation in each
factor --- the $\varepsilon$-direction below rotates the eigenvalue
$1$ of the seeds at fixed $l$, the $l$-direction moves the
convolution parameter at a fixed tuple. Lemma~\ref{lem:permker} is
what makes these two directions sufficient: $\Ht(l)K=0$ identically
in $l$, so the $K$-degeneracy is invisible to the $l$-family and no
mixed two-parameter unfolding is needed.

\begin{lemma}[crossing forms]\label{lem:crossing}
For fixed $v\in V^{\oplus n}$ put $x_j:=(g_j-1)v_j$ and
$\zeta:=\lambda^{1/2}=e(l/2)$ (a scalar). Then, as a function of
$l$,
\begin{equation}\label{eq:ABC}
\langle\Ht(l)v,v\rangle
=\zeta^{-1}A+\zeta\,B+(\zeta^{-1}-\zeta)\,C,
\qquad
\begin{aligned}
A&:=\textstyle\sum_{j\le k}x_j^{\dagger}x_k,\\
B&:=\textstyle\sum_{j>k}x_j^{\dagger}x_k,\\
C&:=\textstyle\sum_{j}v_j^{\dagger}x_j,
\end{aligned}
\end{equation}
with $A,B,C$ independent of $l$. Moreover:
\begin{itemize}
\item[(i)] (\emph{Tuple direction}: $\varepsilon$, at fixed $l$, on $K$.) Let $\Pi_j$ be the
orthogonal projection onto $W_j$, let
$g_j(\varepsilon):=g_j\exp(2\pi i\varepsilon\Pi_j)$ --- rotating the
eigenvalue $1$ to $e(\varepsilon)$ and fixing the remaining spectral
data --- and let $\Ht(\varepsilon)$ be the form of the tuple
$(g_j(\varepsilon))$ at fixed $l$. Then, for $v\in K$,
\[
\frac{d}{d\varepsilon}\Big|_{\varepsilon=0}
\langle\Ht(\varepsilon)v,v\rangle
=4\pi\sin(\pi l)\,\lVert v\rVert^{2},
\]
i.e.\ the crossing form on $K$ is $\Gamma_K=4\pi\sin(\pi l)\,I_K\succ0$.
\item[(ii)] (\emph{Parameter direction}: $l$, at a fixed tuple, on $L$.) For $v\in L$ at $\lambda_0=e(l_0)$,
\[
\frac{d}{dl}\Big|_{l_0}\langle\Ht(l)v,v\rangle
=-4\pi\sin(\pi l_0)\,\lVert v_n\rVert^{2},
\]
and this crossing form $\Gamma_L$ is negative definite on $L$, since
$v_n$ determines $v$; indeed all components of $v\in L$ have equal
norm, so $\lVert v\rVert^{2}=n\lVert v_n\rVert^{2}$ and, in the
inner product of $L\subseteq V^{\oplus n}$,
$\Gamma_L=-\tfrac{4\pi}{n}\sin(\pi l_0)\,I_L$ exactly.
\end{itemize}
\end{lemma}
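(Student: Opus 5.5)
The plan is to expand $\langle\Ht v,v\rangle$ blockwise from \eqref{eq:blocks} in terms of $x_j=(g_j-1)v_j$ and then specialize.

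\textbf{Step 1: the identity \eqref{eq:ABC}.}
\begin{itemize}
\item For $j<k$, the $(j,k)$ term is $\lambda^{-1/2}v_j^\dagger(g_j^{-1}-1)(g_k-1)v_k=\zeta^{-1}x_j^\dagger x_k$, using $(g_j-1)^\dagger=g_j^{-1}-1$.
\item For $j>k$ the same computation gives $\zeta x_j^\dagger x_k$.
\item For the diagonal, write $g_j^{-1}-\lambda=(g_j^{-1}-1)+(1-\lambda)$. This gives $\zeta^{-1}x_j^\dagger x_j+\zeta^{-1}(1-\zeta^2)v_j^\dagger x_j$, and $\zeta^{-1}(1-\zeta^2)=\zeta^{-1}-\zeta$.
\end{itemize}
Summing the three cases yields \eqref{eq:ABC}. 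Here $A,B,C$ are visibly independent of $l$.

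\textbf{Step 2: part (i).} Fix $l$ and $v\in K$, so $g_jv_j=v_j$ and $v_j\in W_j$.
\begin{itemize}
\item Since $\Pi_jv_j=v_j$, we get $g_j(\varepsilon)v_j=g_j e(\varepsilon)v_j=e(\varepsilon)v_j$. Hence $x_j(\varepsilon)=(e(\varepsilon)-1)v_j$, with derivative $2\pi i v_j$ at $0$.
\item $A$ and $B$ are sesquilinear in the $x$'s, and every $x$ vanishes at $\varepsilon=0$, so their $\varepsilon$-derivatives vanish at $\varepsilon=0$.
\item $C(\varepsilon)=\sum_j(e(\varepsilon)-1)\lVert v_j\rVert^2$ has derivative $2\pi i\lVert v\rVert^2$.
\end{itemize}
The derivative of $\langle\Ht(\varepsilon)v,v\rangle$ is therefore
\[
(\zeta^{-1}-\zeta)\,2\pi i\,\lVert v\rVert^2=(-2i\sin\pi l)(2\pi i)\lVert v\rVert^2=4\pi\sin(\pi l)\lVert v\rVert^2 .
\]

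\textbf{Step 3: part (ii), reducing $A,B,C$.} Take $v\in L$. By Lemma~\ref{lem:radical}, $\Ht(l_0)v=0$, so Corollary~\ref{cor:pencil} gives the derivative as $-\frac{\pi}{\sin\pi l_0}\lVert Dv\rVert^2$. Here $Dv=\sum_jx_j=\sum_j(v_{j-1}-v_j)=v_0-v_n$ telescopes, with $v_0=\lambda_0^{-1}v_n$. Thus $\lVert Dv\rVert^2=|\lambda_0^{-1}-1|^2\lVert v_n\rVert^2=4\sin^2(\pi l_0)\lVert v_n\rVert^2$. This gives $-4\pi\sin(\pi l_0)\lVert v_n\rVert^2$.

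As a cross-check, one can differentiate \eqref{eq:ABC} directly, using $d\zeta/dl=\pi i\zeta$. The only real work there is evaluating $A,B,C$ on $L$ through the telescoping relations, but the pencil route avoids it entirely.

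\textbf{Step 4: norms on $L$.} The relations $v_{j-1}=g_jv_j$ with $g_j$ unitary give $\lVert v_{j-1}\rVert=\lVert v_j\rVert$. So all $n$ components have equal norm and $\lVert v\rVert^2=n\lVert v_n\rVert^2$. This gives $\Gamma_L=-\tfrac{4\pi}{n}\sin(\pi l_0)I_L$, which is negative definite.

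\textbf{Main obstacle.} The main obstacle is bookkeeping: the branch signs in $\zeta^{-1}-\zeta=-2i\sin\pi l$, and the index convention $v_0=\lambda^{-1}v_n$ in the telescoping sum.
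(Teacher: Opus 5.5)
Your proposal is correct and is essentially the paper's proof. For (ii) you take the same route: the radical gives $\Ht(l_0)v=0$, then Corollary~\ref{cor:pencil}, then the telescoping $Dv=v_0-v_n$, whose $j=1$ term uses $g_1v_1=P_nv_n=\lambda_0^{-1}v_n$. For (i) you evaluate \eqref{eq:ABC} on the deformed tuple with $x_j(\varepsilon)=(e(\varepsilon)-1)v_j$, so only the $C$-term survives to first order, whereas the paper differentiates block by block with the product rule; this is the same computation, organized through the identity you proved first.
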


\begin{proof}
For \eqref{eq:ABC}: unitarity gives
$v_j^{\dagger}(g_j^{-1}-1)=x_j^{\dagger}$, and the diagonal block
splits as
$\lambda^{-1/2}(g_j^{-1}-\lambda)(g_j-1)
=\lambda^{-1/2}(g_j^{-1}-1)(g_j-1)+\lambda^{-1/2}(1-\lambda)(g_j-1)$
with $\lambda^{-1/2}(1-\lambda)=\zeta^{-1}-\zeta$.

(i) Each block of $\Ht(\varepsilon)$ is
$\lambda^{\mp1/2}\bigl(g_j(\varepsilon)^{-1}-c\bigr)\bigl(g_k(\varepsilon)-1\bigr)$
with $c\in\{1,\lambda\}$. Differentiating at $\varepsilon=0$ and
evaluating on $v\in K$: the term differentiating the left factor ends
in $(g_k-1)v_k=0$; in the remaining term,
$\tfrac{d}{d\varepsilon}g_k(\varepsilon)\big|_0v_k
=2\pi i\,g_k\Pi_kv_k=2\pi i\,v_k$, while on the left
$v_j^{\dagger}(g_j^{-1}-1)=\bigl((g_j-1)v_j\bigr)^{\dagger}=0$ kills
every off-diagonal block and
$v_j^{\dagger}(g_j^{-1}-\lambda)=(1-\lambda)v_j^{\dagger}$ on the
diagonal. What survives is
$\sum_j\lambda^{-1/2}(1-\lambda)\cdot2\pi i\,\lVert v_j\rVert^{2}
=4\pi\sin(\pi l)\lVert v\rVert^{2}$.

(ii) Let $v\in L$ and set $v_0:=\lambda_0^{-1}v_n$. The
characterization of $L$ gives $v_{j-1}=g_jv_j$ for $2\le j\le n$, and
also for $j=1$: $g_1v_1=P_nv_n=\lambda_0^{-1}v_n=v_0$. Hence
$x_j=v_{j-1}-v_j$, all norms agree by unitarity,
$\lVert v_j\rVert=\lVert v_n\rVert$ $(0\le j\le n)$, and, telescoping,
\[
Dv=\sum_jx_j=v_0-v_n=(\lambda_0^{-1}-1)v_n,
\qquad
\lVert Dv\rVert^{2}=\lvert1-\lambda_0\rvert^{2}\lVert v_n\rVert^{2}
=4\sin^{2}(\pi l_0)\lVert v_n\rVert^{2}.
\]
Since $\Ht(l_0)v=0$ (Lemma~\ref{lem:radical}),
Corollary~\ref{cor:pencil} gives
\[
\langle\Ht'(l_0)v,v\rangle=-\frac{\pi}{\sin\pi l_0}\lVert Dv\rVert^{2}
=-4\pi\sin(\pi l_0)\lVert v_n\rVert^{2}.
\]
Definiteness: if $v_n=0$ then $v_k=g_{k+1}\cdots g_nv_n=0$ for all
$k$.
\end{proof}

Both propositions below use the same elementary inertia scheme: if a
Hermitian matrix is written in blocks with $A_{22}$ invertible, then
the congruence
\begin{equation}\label{eq:haynsworth}
\begin{pmatrix}A_{11}&A_{12}\\ A_{12}^{\dagger}&A_{22}\end{pmatrix}
=\begin{pmatrix}I&A_{12}A_{22}^{-1}\\ 0&I\end{pmatrix}
\begin{pmatrix}S&0\\ 0&A_{22}\end{pmatrix}
\begin{pmatrix}I&0\\ A_{22}^{-1}A_{12}^{\dagger}&I\end{pmatrix},
\qquad
S:=A_{11}-A_{12}A_{22}^{-1}A_{12}^{\dagger},
\end{equation}
adds inertias (Sylvester's law; Haynsworth). No eigenvector branches or
analytic perturbation theory enter. Throughout,
$\operatorname{In}=(n_+,n_-,n_0)$ denotes the full inertia triple and
$\sig=(n_+,n_-)$ the signature pair; kernels are always listed
separately.

\begin{proposition}[the radical, complete]\label{prop:radical}
For every unitary tuple and every $l\in(0,1)$,
$\Ker\Ht=K\oplus L$.
\end{proposition}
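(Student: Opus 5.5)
The inclusion $K\oplus L\subseteq\Ker\Ht$ and the directness of the sum are already Lemmas~\ref{lem:permker} and \ref{lem:radical}. So only $\Ker\Ht\subseteq K+L$ remains. I plan to prove it directly from the block formula \eqref{eq:blocks}, without the factorization through $G^{\mathrm{diag}}-1$, since that factorization is unavailable when some $g_j$ has eigenvalue $1$. The idea is to express the kernel equations entirely in terms of $x_j:=(g_j-1)v_j$, as in Lemma~\ref{lem:crossing}. Then I show that a kernel vector has the same $x$-data as some element of $L$; the difference has all $x_j=0$ and so lies in $K$.

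First I compute the $j$-th block row of $\Ht v$. Set $c:=\lambda^{1/2}-\lambda^{-1/2}=2i\sin\pi l\neq0$ and
\[
y_j:=\lambda^{-1/2}\textstyle\sum_{k\ge j}x_k+\lambda^{1/2}\sum_{k<j}x_k\qquad(1\le j\le n+1).
\]
Splitting the diagonal block as in the proof of Proposition~\ref{prop:pencil}, the $j$-th row of $\Ht v$ becomes $(g_j^{-1}-1)y_j-c\,x_j$. By construction $y_{j+1}-y_j=c\,x_j$, so the condition $(\Ht v)_j=0$ is equivalent to $(g_j^{-1}-1)y_j=y_{j+1}-y_j$, i.e. $y_{j+1}=g_j^{-1}y_j$. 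Moreover $y_1=\lambda^{-1/2}\sum_kx_k$ and $y_{n+1}=\lambda^{1/2}\sum_kx_k=\lambda y_1$. Iterating the recursion gives $P_n^{-1}y_1=\lambda y_1$, that is, $y_1\in\Ker(\lambda P_n-1)$, and every $y_j$ is determined by $y_1$.

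Next I build the candidate element of $L$. Put $u_j:=-c^{-1}y_{j+1}$ for $1\le j\le n$. The recursion $y_j=g_jy_{j+1}$ gives $u_{j-1}=g_ju_j$ for $2\le j\le n$. For the last condition, $\lambda P_nu_n=-c^{-1}\lambda P_n\lambda y_1=-c^{-1}\lambda y_1=-c^{-1}y_{n+1}=u_n$, using $\lambda P_n y_1=y_1$. Hence $u\in L$. As in Lemma~\ref{lem:crossing}(ii), with $u_0:=\lambda^{-1}u_n=-c^{-1}y_1$ one has $(g_j-1)u_j=u_{j-1}-u_j=c^{-1}(y_{j+1}-y_j)=x_j$ for $1\le j\le n$. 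Therefore $(g_j-1)(v_j-u_j)=0$ for every $j$, so $v-u\in K$ and $v\in K+L$. Combined with Lemma~\ref{lem:radical}, this gives $\Ker\Ht=K\oplus L$.

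The main care point is the bookkeeping in the block-row computation. One has to check that the diagonal term $\lambda^{-1/2}(g_j^{-1}-\lambda)x_j$ splits exactly into $\lambda^{-1/2}(g_j^{-1}-1)x_j$, which is absorbed into $(g_j^{-1}-1)y_j$, plus $-c\,x_j$, with the correct signs and branches of $\lambda^{\pm1/2}$. One also has to check the boundary identification $y_{n+1}=\lambda y_1$. Everything else is the telescoping already used in Lemma~\ref{lem:crossing}(ii). The only place $\lambda\neq1$ enters is $c\neq0$, which is consistent with the hypothesis.
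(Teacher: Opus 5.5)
Your argument is correct, and it takes a different route from the paper. The paper proves only the inclusion $K\oplus L\subseteq\Ker\Ht$ and the directness in-house (Lemmas~\ref{lem:permker}, \ref{lem:radical}). It imports the reverse inclusion $\Ker\Ht\subseteq K+L$ from \cite[Thm.~9, Step~2]{N-KLM}, remarking that the argument there uses no braid data and therefore applies at tuple level.

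You instead derive the reverse inclusion in a few lines from \eqref{eq:blocks}. Writing $(\Ht v)_j=(g_j^{-1}-1)y_j-c\,x_j$ with $y_{j+1}-y_j=c\,x_j$ turns the kernel equations into the transport law $y_{j+1}=g_j^{-1}y_j$, with the twisted boundary condition $y_{n+1}=\lambda y_1$. Then $u_j:=-c^{-1}y_{j+1}$ is an element of $L$ with the same $x$-data as $v$. The steps check out:
\begin{itemize}
\item the diagonal block splits with $\lambda^{-1/2}(1-\lambda)=-c$;
\item the boundary identity $y_{n+1}=\lambda y_1$ holds;
\item the case $j=1$ of $(g_j-1)u_j=u_{j-1}-u_j$ holds, via $g_1u_1=-c^{-1}y_1=u_0$.
\end{itemize}

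What your route buys:
\begin{itemize}
\item It is self-contained and never inverts $g_j-1$. It is therefore uniform in $K$ and across the walls, and it uses $\lambda\neq1$ only through $c\neq0$.
\item Read backwards, it also gives $\Ht L=0$, which Lemma~\ref{lem:radical} likewise delegates to \cite{N-KLM}. For $v\in L$ the sums defining $y_j$ evaluate to $-c\,v_{j-1}$, and these satisfy $y_{j+1}=g_j^{-1}y_j$.
\item It exhibits $v\mapsto y_1$ as an injection $\Ker\Ht/K\hookrightarrow\Ker(\lambda P_n-1)$, which gives the count $\dim\Ker\Ht=\kappa+r$ in one line.
\item A non-degenerate seed form $H$ only multiplies each block row on the left. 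Your computation therefore holds verbatim for the indefinite seed forms of Section~\ref{subsec:indefinite}, where the paper again only cites \cite{N-KLM}.
\end{itemize}
The paper's citation buys brevity and consistency with the companion paper. Its cost is that the reader must check in \cite{N-KLM} that the cited argument really is tuple-level.
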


\begin{proof}
The inclusion $\supseteq$, with directness, is
Lemmas~\ref{lem:permker} and \ref{lem:radical}. The reverse
inclusion $\Ker\Ht\subseteq K+L$ is \cite[Thm.~9, Step~2]{N-KLM}:
the proof there manipulates only the tuple $(g_1,\dots,g_n)$ --- no
braid data enters, as \cite{N-KLM} itself notes before
Corollary~13 there --- and assumes only $\lambda\neq1$ and the
non-degeneracy of the seed form, so it applies verbatim at tuple
level, on and off the walls.
\end{proof}

Two consequences are recorded for use below. First, since
$\Ht(l)=0_{K}\oplus\bar H(l)$ \emph{exactly}
(Lemma~\ref{lem:permker}), Proposition~\ref{prop:radical} says that
the compression $\bar H(l)$ of $\Ht(l)$ to $K^{\perp}$ is
invertible off the walls, and that at a wall $l_0$ its kernel has
dimension exactly $r$. Second, fix $l$ \emph{off the walls} ($\lambda\notin E$; at a wall
$\det\bar H(l_0)=0$ and the multiplicative asymptotic below fails).
For the regularized family
$\Ht(\varepsilon,l)$ of Lemma~\ref{lem:crossing}(i), the block
factorization \eqref{eq:haynsworth} in the decomposition
$K\oplus K^{\perp}$ gives, with
$A_{11}=\varepsilon\Gamma_K+O(\varepsilon^{2})$,
$A_{12}=O(\varepsilon)$ and $A_{22}\to\bar H(l)$,
\begin{equation}\label{eq:epsexp}
\det\Ht(\varepsilon,l)
=\varepsilon^{\kappa}\,\det\Gamma_K\,\det\bar H(l)\,
(1+O(\varepsilon)),
\qquad \det\Gamma_K=(4\pi\sin\pi l)^{\kappa}>0 ,
\end{equation}
so that, off the walls,
$\operatorname{sign}\det\bar H(l)=(-1)^{m}$: for sufficiently small
$\varepsilon>0$ the regularized tuple has all eigenangles in $(0,1)$
and lies in $\mathcal P^{\circ}$, so Theorem~\ref{thm:detN} applies to
it and gives
$\operatorname{sign}\det\Ht(\varepsilon)=(-1)^{m(\varepsilon)}
=(-1)^{m}$; the prefactors in \eqref{eq:epsexp} are positive.

\begin{proposition}[dropping the $a=0$ channels]\label{prop:dropK}
Let $\lambda\notin E$, so that $r=0$ and $\Ker\Ht=K$ by
\eqref{eq:radical}. Then
\[
\operatorname{sig}\Ht=(R-m,\;m):
\]
the non-degenerate form induced on the KLM quotient $V^{\oplus n}/K$
has signature $(R-m,m)$. Theorem~\ref{thm:rankN} is the case $K=0$; in
general the channels $a_j^{(k)}=0$ are simply dropped, contributing
$0$ to $\sum a$, the summand $\kappa$ to the corank, and nothing else.
\end{proposition}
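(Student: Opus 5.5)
The plan is to regularize the $a=0$ channels by the tuple-direction deformation of Lemma~\ref{lem:crossing}(i) and transfer the open-locus result, Theorem~\ref{thm:rankN}, back to $\varepsilon=0$ by an inertia count in the splitting $K\oplus K^{\perp}$.

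\textbf{Step 1: the regularized tuple.} Fix $l$ with $\lambda\notin E$ and put $g_j(\varepsilon)=g_j\exp(2\pi i\varepsilon\Pi_j)$ for small $\varepsilon>0$. The eigenvalue $1$ of $g_j$ moves to $e(\varepsilon)$, so every eigenangle now lies in $(0,1)$. The product $P_n(\varepsilon)$ tends to $P_n$, and $\lambda P_n$ has no eigenvalue $1$, so $\lambda\notin E$ persists for small $\varepsilon$. Hence $(g_j(\varepsilon);\lambda)\in\mathcal P^{\circ}$.

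\textbf{Step 2: $m$ is unchanged.} The sum of seed angles changes by exactly $\kappa\varepsilon$. By the determinant relation $e(m)=\lambda^N\det P_n\det(\lambda P_n)^{-1}$, the sum $\sum_k\beta_k(\varepsilon)$ moves continuously, with no $\beta_k$ crossing $0$ because we are off the walls. So $m(\varepsilon)$ is integer-valued and continuous in $\varepsilon\ge0$; its value at $\varepsilon=0$ is the $m$ of the $[0,1)$-convention, since the $\kappa$ zero angles contribute $0$. Therefore $m(\varepsilon)=m$. Theorem~\ref{thm:rankN} then gives $\operatorname{sig}\Ht(\varepsilon)=(nN-m,\,m)$ for all small $\varepsilon>0$.

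\textbf{Step 3: inertia in the splitting $K\oplus K^{\perp}$.} Write $\Ht(\varepsilon)$ in blocks $A_{11}$ (on $K$), $A_{12}$, $A_{22}$ (on $K^{\perp}$), as in \eqref{eq:epsexp}:
\begin{itemize}
\item $A_{22}\to\bar H(l)$, which is invertible by Proposition~\ref{prop:radical} since $r=0$;
\item $A_{12}=O(\varepsilon)$, because $\Ht(0)$ vanishes on the $K$-rows and $K$-columns (Lemma~\ref{lem:permker});
\item $A_{11}=\varepsilon\Gamma_K+O(\varepsilon^2)$, with $\Gamma_K=4\pi\sin(\pi l)I\succ0$ by Lemma~\ref{lem:crossing}(i).
\end{itemize}
Haynsworth \eqref{eq:haynsworth} gives $\operatorname{In}\Ht(\varepsilon)=\operatorname{In}S+\operatorname{In}A_{22}$ with Schur complement $S=\varepsilon\Gamma_K+O(\varepsilon^2)$. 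For small $\varepsilon$ we have $S\succ0$, contributing $(\kappa,0)$. Also $A_{22}$ has the same inertia as $\bar H(l)$ by continuity and invertibility. Hence
\[
\operatorname{sig}\bar H(l)=(nN-m-\kappa,\;m)=(R-m,\;m).
\]
Because $\Ht(l)=0_K\oplus\bar H(l)$ exactly, this is the signature of $\Ht$, and it is the signature of the induced form on $V^{\oplus n}/K$.

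The main obstacle is Step 3, specifically showing that the first-order term really dominates on $K$. This needs $A_{12}$ to be $O(\varepsilon)$ rather than merely $o(1)$: then $A_{12}A_{22}^{-1}A_{12}^{\dagger}=O(\varepsilon^2)$ cannot spoil $\varepsilon\Gamma_K$, and that is exactly what the permanent-kernel property supplies. A minor further point is that the positivity of $\Gamma_K$, and hence the count of $\kappa$ positive rather than negative directions, depends on taking $\varepsilon>0$. This matches the $[0,1)$-convention, under which the rotated angles $\varepsilon$ stay in $(0,1)$ without wrapping to $1^-$.
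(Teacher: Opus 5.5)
Your proposal is correct and matches the paper's proof essentially step for step: the $\varepsilon$-rotation of the eigenvalue-$1$ channels into $\mathcal P^{\circ}$, constancy of $m(\varepsilon)$ by continuity and integrality, Theorem~\ref{thm:rankN} at $\varepsilon>0$, and the Haynsworth count in $K\oplus K^{\perp}$, where $S(\varepsilon)=\varepsilon\Gamma_K+O(\varepsilon^{2})\succ0$ contributes exactly $\kappa$ positive directions. (One small wording point: the determinant relation gives the integrality of $m(\varepsilon)$, whereas the continuity of $\sum_k\beta_k(\varepsilon)$ comes from the continuity of the spectrum of $\lambda P_n(\varepsilon)$ together with $\lambda\notin E$.)
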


\begin{proof}
For $0<\varepsilon<1$ every eigenangle of $g_j(\varepsilon)$ lies in
$(0,1)$, so $K(\varepsilon)=0$. Since
$1\notin\operatorname{spec}\lambda P_n$ and
$P_n(\varepsilon)\to P_n$, there is $\varepsilon_1>0$ such that for
$\varepsilon\in[0,\varepsilon_1]$ we have
$\lambda\notin E(\varepsilon)$ and the angles
$\beta_k(\varepsilon)\in(0,1)$ vary continuously.
Theorem~\ref{thm:rankN} applied to $(g_j(\varepsilon))$ gives
$\operatorname{sig}\Ht(\varepsilon)
=\bigl(nN-m(\varepsilon),m(\varepsilon)\bigr)$, and
$m(\varepsilon)=Nl+\sum a(\varepsilon)-\sum\beta(\varepsilon)$ is
continuous and integer-valued on $(0,\varepsilon_1]$, hence constant,
with limit $m$ as $\varepsilon\to0^{+}$ (the $\kappa$ regularized angles
tend to $0$). Thus
$\operatorname{sig}\Ht(\varepsilon)\equiv(nN-m,\,m)$ there.

Decompose $V^{\oplus n}=K\oplus K^{\perp}$ and write
$\Ht(\varepsilon)$ in blocks accordingly. At $\varepsilon=0$:
$A_{11}(0)=0$, $A_{12}(0)=0$, and $A_{22}(0)$ is invertible; writing
$\sig\Ht(0)=:(n_+^{0},n_-^{0})$, so that
$\operatorname{In}\Ht(0)=(n_+^{0},n_-^{0},\kappa)$, one has
$\operatorname{In}A_{22}(0)=(n_+^{0},n_-^{0},0)$.
For small $\varepsilon$, $A_{22}(\varepsilon)$ remains invertible with
the same inertia, while
$A_{11}(\varepsilon)=\varepsilon\,\Gamma_K+O(\varepsilon^{2})$ with
$\Gamma_K\succ0$ (Lemma~\ref{lem:crossing}(i)) and
$A_{12}(\varepsilon)=O(\varepsilon)$; hence
$S(\varepsilon)=\varepsilon\bigl(\Gamma_K+O(\varepsilon)\bigr)\succ0$
for small $\varepsilon>0$, and \eqref{eq:haynsworth} gives
$\operatorname{In}\Ht(\varepsilon)=(n_+^{0}+\kappa,\;n_-^{0},\;0)$.
Comparing with $(nN-m,m,0)$ yields $n_-^{0}=m$ and
$n_+^{0}=nN-m-\kappa=R-m$.
\end{proof}

\begin{proposition}[wall crossing at $\lambda\in E$]\label{prop:wall}
Let $l_0\in(0,1)$ with $r:=r(l_0)\ge1$, the tuple otherwise arbitrary
($K$ may be nonzero). Then $\dim\Ker\Ht(l_0)=\kappa+r$,
\[
\operatorname{sig}\Ht(l_0)=\bigl(R-m(l_0),\;m(l_0)-r\bigr),
\]
and for all sufficiently small $\delta>0$,
\[
m(l_0+\delta)=m(l_0),\qquad m(l_0-\delta)=m(l_0)-r ,
\]
so that the one-sided inertias, supplied by
Proposition~\ref{prop:dropK}, are
$\bigl(R-m(l_0)+r,\,m(l_0)-r\bigr)$ below the wall and
$\bigl(R-m(l_0),\,m(l_0)\bigr)$ above it: exactly $r$ eigenvalues of
$\Ht(l)$ cross $0$ at $l_0$, in the strictly decreasing direction, and
the jump of $m$ equals the spectral flow $r$ of $\lambda P_n$ through
$1$.
\end{proposition}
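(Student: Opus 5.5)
The plan has three steps: compute the jumps of $m$ directly from the $[0,1)$-convention, read off the two one-sided inertias from Proposition~\ref{prop:dropK}, and then decide how many of the $r$ crossing eigenvalues sit on which side at $l_0$ using the crossing form $\Gamma_L$ of Lemma~\ref{lem:crossing}(ii).

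\emph{The jumps of $m$.} The seed angles $a_j^{(k)}$ do not depend on $l$, so only $Nl-\sum_k\beta_k(l)$ varies. Each $\beta_k(l)=\frc{l+\gamma_k}$ has slope $1$ except at its wrap point. Eigenvalues with $\beta_k(l_0)\neq0$ stay away from $0$ for small $\delta$, so their contribution $l-\beta_k$ is locally constant. The $r$ eigenvalues with $\beta_k(l_0)=0$ behave differently on the two sides: for $l=l_0+\delta$ one has $\beta_k=\delta$, so $l-\beta_k=l_0$ is continuous from the right, while for $l=l_0-\delta$ one has $\beta_k=1-\delta$, which lowers $m$ by $1$ each. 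This gives $m(l_0+\delta)=m(l_0)$ and $m(l_0-\delta)=m(l_0)-r$.

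\emph{The one-sided inertias.} For small $\delta\ne0$ the point $l_0\pm\delta$ is off the walls, since the wall set is finite. Proposition~\ref{prop:dropK} then gives the inertia there as $(R-m(l_0\pm\delta),\,m(l_0\pm\delta))$, which are the two displayed pairs.

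\emph{The inertia at $l_0$.} By Proposition~\ref{prop:radical}, $\dim\Ker\Ht(l_0)=\kappa+r$. By Lemma~\ref{lem:permker} we may drop $K$ for all $l$ simultaneously and work with the compression $\bar H(l)$ to $K^{\perp}$. Its kernel at $l_0$ is the projection of $L$, which has dimension $r$; the quadratic form restricted to it is unchanged by this projection, because the $K$-components are annihilated. Now decompose $K^{\perp}=L'\oplus L'^{\perp}$, where $L'$ is this kernel, and apply \eqref{eq:haynsworth} with $A_{22}(l)$ the compression to $L'^{\perp}$. At $l_0$ this block is invertible, with inertia $(n_+^0,n_-^0)=\sig\Ht(l_0)$. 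In the other block, $A_{11}(l)=(l-l_0)\Gamma_L+O((l-l_0)^2)$ and $A_{12}=O(l-l_0)$, because both vanish at $l_0$ (the kernel is the full radical, so $\bar H(l_0)L'=0$). The Schur complement is therefore $(l-l_0)(\Gamma_L+O(l-l_0))$, which is negative definite for $l>l_0$ and positive definite for $l<l_0$, since $\Gamma_L\prec0$ by Lemma~\ref{lem:crossing}(ii). Hence above the wall the inertia is $(n_+^0,\,n_-^0+r)$, and below it is $(n_+^0+r,\,n_-^0)$. Comparing with the one-sided inertias from Proposition~\ref{prop:dropK}, either side gives $n_+^0=R-m(l_0)$ and $n_-^0=m(l_0)-r$. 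The two comparisons must agree, and they do. This also shows that exactly $r$ eigenvalues cross $0$, all decreasing.

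The main obstacle is the block expansion. One must make sure $A_{11}$ carries no zeroth-order term and that $A_{12}$ is genuinely $O(l-l_0)$; both follow from $\bar H(l_0)L'=0$ together with smoothness of the pencil \eqref{eq:pencil}. One must also check that $\Gamma_L$ computed on $L$ transfers to its projection $L'$, which holds because $\Ht(l)K=0$ identically in $l$. The $O((l-l_0)^2)$ correction in the Schur complement is dominated by the linear term because $\Gamma_L$ is definite.
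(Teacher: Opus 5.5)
Your proposal is correct and follows the paper's proof essentially step for step. You take the one-sided values of $m$ from the $[0,1)$-convention and the one-sided inertias from Proposition~\ref{prop:dropK}, then apply a Haynsworth factorization whose Schur complement is $(l-l_0)\Gamma_L+O((l-l_0)^{2})$ with $\Gamma_L\prec0$, and compare the result on both sides of the wall. The only cosmetic difference is that you split off $K$ orthogonally first and work with the projection of $L$ to $K^{\perp}$, whereas the paper uses the ordered non-orthogonal decomposition $K,\,L,\,(K\oplus L)^{\perp}$ and notes that the $K$-rows vanish identically in $l$.
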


\begin{proof}
The walls form a finite subset of $(0,1)$ (the classes
$l\equiv-\gamma$ of the eigenangles $\gamma$ of $P_n$), so
Proposition~\ref{prop:dropK} applies at $l_0\pm\delta$ for small
$\delta>0$. The one-sided values of $m$ follow from the
$[0,1)$-convention: the $r$ crossing angles equal $\beta=\delta$ just
above and $\beta=1-\delta$ just below the wall, the remaining angles
vary continuously inside $(0,1)$, and $m$ is integer-valued, hence
locally constant on either side.

Decompose $V^{\oplus n}=(K\oplus L)\oplus(K\oplus L)^{\perp}$ and
write $\Ht(l)$ in blocks accordingly, $l=l_0+\delta$. By
Lemma~\ref{lem:permker}, the $K$-rows and $K$-columns of $A_{11}(l)$
and of $A_{12}(l)$ vanish \emph{identically in $l$}; by
\eqref{eq:radical} at $l_0$, the remaining blocks of $A_{11},A_{12}$
vanish at $\delta=0$, and $A_{22}(l_0)$ is invertible; writing
$\sig\Ht(l_0)=:(n_+^{0},n_-^{0})$, so that
$\operatorname{In}\Ht(l_0)=(n_+^{0},n_-^{0},\kappa+r)$, one has
$\operatorname{In}A_{22}(l_0)=(n_+^{0},n_-^{0},0)$ and
$n_+^{0}+n_-^{0}=R-r$. For small
$\lvert\delta\rvert$, $A_{22}$ remains invertible with the same
inertia. In the ordered decomposition $K,\,L,\,(K\oplus L)^{\perp}$
the relevant blocks satisfy: the $K$-rows of $A_{11}$ and of $A_{12}$
vanish identically in $\delta$ (Lemma~\ref{lem:permker}); and, by
smoothness together with $A_{LL}(0)=0$, $A_{L2}(0)=0$ (both from
\eqref{eq:radical} at $l_0$),
\[
A_{LL}(\delta)=\delta\,\Gamma_L+O(\delta^{2}),
\qquad
A_{L2}(\delta)=O(\delta),
\qquad\text{hence}\qquad
A_{L2}A_{22}^{-1}A_{2L}=O(\delta^{2}).
\]
The Schur complement in \eqref{eq:haynsworth} is therefore,
\emph{exactly},
\[
S(\delta)=0_{K}\oplus S_{L}(\delta),
\qquad
S_{L}(\delta)=\delta\,\Gamma_L+O(\delta^{2}),
\quad
\Gamma_L\prec0\ \text{(Lemma~\ref{lem:crossing}(ii))}.
\] Hence
$\operatorname{In}\Ht(l_0+\delta)=(n_+^{0},\,n_-^{0}+r,\,\kappa)$ for small
$\delta>0$ and $(n_+^{0}+r,\,n_-^{0},\,\kappa)$ for small $\delta<0$.
Comparing with Proposition~\ref{prop:dropK} on the two sides gives
$n_-^{0}=m(l_0-\delta)=m(l_0)-r$ and
$n_+^{0}=R-m(l_0+\delta)=R-m(l_0)$.
\end{proof}

\begin{corollary}[the form on the KLM quotient]\label{cor:quotsig}
For every $l\in(0,1)$, $\lambda\neq1$, the invariant Hermitian form
induced on the KLM quotient $V^{\oplus n}/(K\oplus L)$ is
non-degenerate, of signature
\[
\bigl(R-m(l),\;m(l)-r(l)\bigr),
\]
with $r(l)=0$ off the walls. In particular the compound degeneration
--- $K\neq0$ \emph{and} $\lambda\in E$ --- requires no simultaneous
two-parameter perturbation: by Lemma~\ref{lem:permker} the
$K$-degeneracy is permanent and invisible to the $l$-family, while the
$L$-degeneracy is transversal with definite crossing form; the two
never mix.
\end{corollary}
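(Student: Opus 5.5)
The plan is to assemble the corollary from the three results just proved, splitting according to whether $l$ lies on a wall. The radical is handled first and uniformly. By Proposition~\ref{prop:radical}, $\Ker\Ht(l)=K\oplus L(l)$ for every $l\in(0,1)$, with the sum direct by Lemma~\ref{lem:radical}. The form $\Ht(l)$ therefore descends to $V^{\oplus n}/(K\oplus L)$, and the descended form is non-degenerate by the general fact that a Hermitian form modulo its radical is non-degenerate. Its inertia pair is the pair $(n_+,n_-)$ of $\Ht(l)$ itself, because passing to the quotient by the radical only removes zero eigenvalues; concretely, one can compress to $(K\oplus L)^{\perp}$, as in the block decomposition already used in Proposition~\ref{prop:wall}. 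Hence it suffices to know $\sig\Ht(l)$ on the ambient space.

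Off the walls ($\lambda\notin E$), we have $r(l)=0$ and $L=0$, and Proposition~\ref{prop:dropK} gives $\sig\Ht=(R-m,m)=(R-m,m-r)$ directly. On a wall, $r=r(l_0)\ge1$, and Proposition~\ref{prop:wall} gives $\sig\Ht(l_0)=(R-m(l_0),\,m(l_0)-r)$. This holds with $K$ arbitrary, so the compound case $K\neq0$ together with $\lambda\in E$ is already covered there. As a consistency check, the dimensions add up: $(R-m)+(m-r)=R-r=nN-\kappa-r=\dim V^{\oplus n}/(K\oplus L)$. The statement that $r(l)=0$ off the walls is the definition of $E$.

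For the remark that no two-parameter perturbation is needed, I would point to the structure of the proof of Proposition~\ref{prop:wall}. The $K$-rows and $K$-columns vanish identically in $l$ by Lemma~\ref{lem:permker}, so the Schur complement splits exactly as $0_K\oplus S_L(\delta)$. The $L$-part then has definite crossing form $\Gamma_L\prec0$ by Lemma~\ref{lem:crossing}(ii). The $K$-degeneracy is resolved separately, at fixed $l$ off the wall, through Proposition~\ref{prop:dropK} with $\Gamma_K\succ0$. Each proposition thus uses only a one-parameter deformation.

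The main obstacle has already been dealt with upstream: it is the reverse inclusion $\Ker\Ht\subseteq K+L$. That inclusion is cited from \cite[Thm.~9]{N-KLM}, and it is what guarantees that the count of zero eigenvalues at $l_0$ is exactly $\kappa+r$, so that the wall crossing moves exactly $r$ eigenvalues. The only point that still needs care here is the descent step. I would check that the quotient form $[v],[w]\mapsto\langle\Ht v,w\rangle$ is well defined precisely because $K\oplus L$ is contained in the radical, and that it is non-degenerate precisely because the radical equals $K\oplus L$.
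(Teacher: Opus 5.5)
Your proposal is correct and follows essentially the paper's argument, which gives no separate proof and simply assembles the corollary. It takes the radical from Proposition~\ref{prop:radical}, the signature off the walls from Proposition~\ref{prop:dropK}, and the signature on the walls from Proposition~\ref{prop:wall}; the compound case is already covered there because $K$ is arbitrary and the Schur complement splits exactly as $0_K\oplus S_L(\delta)$ by Lemma~\ref{lem:permker}.
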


\begin{corollary}[definite windows of a general tuple]\label{cor:windows}
Let $\gamma_1,\dots,\gamma_N\in[0,1)$ be the eigenangles of $P_n$, put
$w:=\sum_{j,k}a_j^{(k)}-\sum_k\gamma_k$, let
$\tau_k:=1-\gamma_k$ for the $s$ indices with $\gamma_k\neq0$, and set
$\tau_{\min}:=\min_k\tau_k$, $\tau_{\max}:=\max_k\tau_k$ (with $s=0$
meaning $P_n=I$, no walls). Then
\[
m(l)=w+\#\{k:\tau_k\le l\},\qquad w\in\Z,\qquad 0\le w\le R-s ,
\]
so $m$ is a non-decreasing step function on $(0,1)$ with jumps at the
walls $E=\{e(\tau_k)\}$, rising from $w$ to $w+s$. Consequently the
form on the KLM quotient is
\begin{itemize}
\item[(i)] positive definite exactly on $0<l\le\tau_{\min}$ (all of
$(0,1)$ if $s=0$) if $w=0$, and never otherwise;
\item[(ii)] negative definite exactly on $\tau_{\max}\le l<1$ (all of
$(0,1)$ if $s=0$) if $w+s=R$, and never otherwise.
\end{itemize}
When $R=r(l)$ the quotient is zero and both statements hold vacuously.
\end{corollary}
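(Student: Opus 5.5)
The plan is to compute $m(l)$ explicitly from the $[0,1)$-convention. The lower bound on $w$ and the upper bound on $w+s$ then come from nonnegativity of the two entries of the signature in Corollary~\ref{cor:quotsig}. The definiteness criteria of Theorem~\ref{thm:main-intro} ($m=r$, resp.\ $m=R$) are then read off this step function.

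\emph{Step 1: the step function.} Since $\lambda P_n$ has eigenangles $\beta_k(l)=\frc{l+\gamma_k}$, I split the indices into two groups.
\begin{itemize}
\item[(a)] If $\gamma_k=0$, then $\beta_k=l\in(0,1)$, which never vanishes.
\item[(b)] If $\gamma_k\neq0$, then $l+\gamma_k\in(0,2)$. Hence $\beta_k=l+\gamma_k$ when $l<\tau_k$, and $\beta_k=l+\gamma_k-1$ when $l\ge\tau_k$. Moreover $\beta_k=0$ exactly when $l=\tau_k$.
\end{itemize}
Summing over $k$ gives $\sum_k\beta_k(l)=Nl+\sum_k\gamma_k-\#\{k:\tau_k\le l\}$. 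Substituting into the definition of $m(l)$ yields $m(l)=w+\#\{k:\tau_k\le l\}$. The integrality of $w$ is the determinant argument already used for $w_s$ in Theorem~\ref{thm:detN}: $e(\sum a)=\det P_n=e(\sum\gamma)$. The same case split also gives $r(l)=\#\{k:\tau_k=l\}$, and therefore
\[
m(l)-r(l)=w+\#\{k:\tau_k<l\}.
\]
So $m$ is non-decreasing, jumps exactly at the walls $e(\tau_k)$, and rises from $w$ to $w+s$.

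\emph{Step 2: the bounds $0\le w\le R-s$.} By Corollary~\ref{cor:quotsig}, both entries $R-m(l)$ and $m(l)-r(l)$ are inertia counts, hence nonnegative for every $l$.
\begin{itemize}
\item[(a)] Evaluating $m-r\ge0$ at any $l<\tau_{\min}$ (or at any $l$ if $s=0$), where $m=w$ and $r=0$, gives $w\ge0$.
\item[(b)] Evaluating $R-m\ge0$ at any $l>\tau_{\max}$ (or at any $l$ if $s=0$), where $m=w+s$, gives $w+s\le R$.
\end{itemize}
This is the only step where the earlier signature theorem enters non-trivially. Everything else is bookkeeping of the convention.

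\emph{Step 3: the windows.} By Corollary~\ref{cor:quotsig}, the induced form is positive definite iff its negative index $m-r$ vanishes. Using the formula from Step 1, this means $w+\#\{k:\tau_k<l\}=0$. Since $w\ge0$, this holds iff $w=0$ and no $\tau_k$ is strictly below $l$, i.e.\ iff $w=0$ and $l\le\tau_{\min}$ (all $l$ if $s=0$). This proves (i).

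The form is negative definite iff $R-m(l)=0$, i.e.\ $w+\#\{k:\tau_k\le l\}=R$. Because $w+\#\{\tau_k\le l\}\le w+s\le R$, equality forces both $w+s=R$ and $\tau_k\le l$ for every $k$, i.e.\ $l\ge\tau_{\max}$. Conversely, these two conditions give equality. This proves (ii).

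When $R=r(l)$ the quotient is zero. Its signature is $(0,0)$, and both statements hold vacuously. The main point requiring care is that walls with $\gamma_k\neq0$ have possible multiplicities, and that indices with $\gamma_k=0$ never create walls. Both are handled by the case split in Step 1, with multiplicities counted.
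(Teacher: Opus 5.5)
Your proposal is correct and follows essentially the same route as the paper: compute $m(l)$ from the $[0,1)$-convention, get $0\le w\le R-s$ from nonnegativity of the two entries of Corollary~\ref{cor:quotsig}, and read off the windows. The one difference is that your identity $m(l)-r(l)=w+\#\{k:\tau_k<l\}$ treats points on and off the walls uniformly in (i). The paper instead argues off the walls, at $l=\tau_{\min}$, and at later walls separately, so your version is a slightly cleaner presentation of the same argument.
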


\begin{proof}
Since $\beta_k(l)=\frc{l+\gamma_k}=l+\gamma_k-[\,l\ge1-\gamma_k\,]$
for $\gamma_k\neq0$ and $\beta_k(l)=l$ for $\gamma_k=0$, the
definition of $m(l)$ gives the displayed formula; integrality of $w$
is $m(l)\in\Z$ at $l<\tau_{\min}$. The bounds follow from
Corollary~\ref{cor:quotsig}: $m(l)-r(l)\ge0$ at $l<\tau_{\min}$ gives
$w\ge0$, and $R-m(l)\ge0$ at $l>\tau_{\max}$ gives $w+s\le R$. For
(i), positive definiteness is $m(l)=r(l)$; off the walls this is
$m(l)=0$, which forces $l<\tau_{\min}$ and $w=0$, while at the wall
$l=\tau_{\min}$ one has $m=w+r$ and definiteness again reads $w=0$; at
any later wall $m\ge w+r+1>r$. For (ii), negative definiteness is
$m(l)=R$, i.e.\ $w+\#\{\tau_k\le l\}=R$, which by $w+s\le R$ forces
all walls to lie at or below $l$ and $w+s=R$.
\end{proof}

The Hecke window of Corollary~\ref{cor:hecke} below is the instance
$g_j=g$, $\gamma$ the eigenangles of $g^{n}$; there $w=0$ or
$w+s=R$ happens exactly for $a\in(0,\tfrac1n)\cup(\tfrac{n-1}{n},1)$.

\begin{proposition}[$\lambda$-independence and transport of unitarizability]
\label{prop:transport}
Let the tuple come with unitary braid data, i.e.\ from a
representation $\rho\colon F_n\rtimes B_n\to U(N)$ for the inner
product of $V$ (so $s_i:=\rho(\sigma_i)$ satisfies
$s_i^{\dagger}s_i=I$), as in \cite[Definitions~8 and 10, Lemma~3]{N-KLM}.
Without unitarity of the braid part the invariant-form statements
below fail (Corollary~\ref{cor:main-braid} and the example after it). By
construction the operators $\rho^{\mathrm{LM}}(\sigma_i)$ on
$V^{\oplus n}$ do not involve $\lambda$; the parameter enters only
through the family of invariant forms $\Ht(\lambda)$ and through the
subspace $L(\lambda)$. Hence:
\begin{itemize}
\item[(i)] For $\lambda\notin E$ the KLM quotient is $V^{\oplus n}/K$
and the braid representation on it is one and the same representation
for every such $\lambda$; the pencil \eqref{eq:pencil} restricted to
$K^{\perp}$ is a family of invariant Hermitian forms for it, whose
inertia $(R-m(l),m(l))$ moves with $l$ while the representation does not.
\item[(ii)] If the form is definite for one non-resonant $\lambda_0$
--- by Corollary~\ref{cor:windows}, if $w=0$ or $w+s=R$ --- then the
representation of $B_n$ on $V^{\oplus n}/K$ is unitarizable, and so is
its quotient by $L(\lambda)$ for every $\lambda\neq1$, walls included.
\item[(iii)] If the KLM quotient is irreducible \emph{as a
$B_n$-representation}, then it is unitarizable if and only if the
canonical form is definite: an irreducible representation preserving
a Hermitian form has that form unique up to a real scalar, so any
$B_n$-invariant positive definite form is a real multiple of the
canonical one. The same argument applies to the quotient as a
representation of $F_n\rtimes B_n$, under which the canonical form
is likewise invariant \cite[Thm.~8]{N-KLM}: if the quotient is
irreducible as an $F_n\rtimes B_n$-representation --- the
irreducibility supplied by \cite[\S4.3]{N-KLM} when the input is
irreducible as an $F_n$-representation and the KLM quotient is
nonzero (for $N=1$, $g_j=1$ the input is irreducible but the
quotient is zero), which concerns the restriction to $F_n$
(equivalently, the middle convolution) --- then
it is unitarizable as an $F_n\rtimes B_n$-representation if and only
if the canonical form is definite. Irreducibility under
$F_n\rtimes B_n$ is weaker than irreducibility under $B_n$ and does
not suffice for the $B_n$-statement: a $B_n$-invariant form need not
be $F_n$-invariant.
\end{itemize}
\end{proposition}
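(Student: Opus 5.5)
The plan is to take the three items in order, using only that the operators $\rho^{\mathrm{LM}}(\sigma_i)$ on $V^{\oplus n}$ are $\lambda$-free, together with the invariance theorem \cite[Thm.~8]{N-KLM}, which at unitary input makes each $\Ht(\lambda)$ an invariant form. For (i), we first check from the defining formulas that $\rho^{\mathrm{LM}}(\sigma_i)$ preserves $K$. This holds because $K=\Ker\Ht(\lambda)\cap\Ker\Ht(\lambda')$ for any two non-resonant parameters, and each radical of an invariant form is an invariant subspace; alternatively, it follows directly from Lemma~\ref{lem:permker} combined with invariance. Off the walls the radical is exactly $K$ by Proposition~\ref{prop:radical}, so the quotient $V^{\oplus n}/K$ and the induced action on it do not depend on $\lambda$. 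Restricted to $K^{\perp}\cong V^{\oplus n}/K$, the pencil \eqref{eq:pencil} is therefore a family of invariant forms on this single representation. Its inertia is $(R-m(l),m(l))$ by Proposition~\ref{prop:dropK}.

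For (ii), suppose $\Ht(\lambda_0)$ is definite on $V^{\oplus n}/K$ for some $\lambda_0\notin E$. Then $\pm\Ht(\lambda_0)$ is a $B_n$-invariant inner product on $V^{\oplus n}/K$, so that representation is unitarizable. Now take any $\lambda\neq1$. The subspace $L(\lambda)$, taken modulo $K$, is the radical of the invariant form $\Ht(\lambda)$, and is therefore a $B_n$-invariant subspace of $V^{\oplus n}/K$. Because we have an invariant inner product, the orthogonal complement of $L(\lambda)$ with respect to $\pm\Ht(\lambda_0)$ is also invariant. The quotient by $L(\lambda)$ is isomorphic to that complement, and the restricted inner product makes it unitary. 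This covers the walls as well. The criterion $w=0$ or $w+s=R$ is simply Corollary~\ref{cor:windows}.

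For (iii), let $Q$ be the KLM quotient, assumed irreducible under the group in question, either $B_n$ or $F_n\rtimes B_n$. The canonical form $h$ on $Q$ is non-degenerate by Corollary~\ref{cor:quotsig} and invariant. Suppose $h'$ is any invariant Hermitian form on $Q$. Define $T:=h^{-1}h'$ through $h'(x,y)=h(Tx,y)$. Then $T$ commutes with the group action, so Schur's lemma gives $T=cI$, and Hermiticity of both forms forces $c\in\R$. Consequently, if $Q$ is unitarizable with invariant positive form $h'$, we get $h=c^{-1}h'$ with $c\neq0$, which is definite. The converse direction is immediate. For $F_n\rtimes B_n$ we use that $h$ is invariant under the whole group \cite[Thm.~8]{N-KLM}, and irreducibility comes from \cite[\S4.3]{N-KLM}, subject to the nonzero caveat in the statement. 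Finally, the remark that $F_n\rtimes B_n$-irreducibility is not enough for the $B_n$ case is justified because Schur's lemma must be applied with the same group that the competing form $h'$ is invariant under.

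The one step I expect to need care is the $\lambda$-independence of the quotient action in (i), in particular that $K$ is invariant in the first place. My first approach is the radical-intersection argument above, which needs only invariance and not an explicit formula for $\rho^{\mathrm{LM}}$.
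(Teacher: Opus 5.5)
Your proposal is correct and follows the paper's proof: (i) from the $\lambda$-independence of $\rho^{\mathrm{LM}}(\sigma_i)$ and of $K$; (ii) by taking the orthogonal complement of the invariant subspace $L(\lambda)$ with respect to the definite invariant form, which is the paper's ``completely reducible, hence a direct summand''; and (iii) by Schur's lemma applied to $h^{-1}h'$, together with the observation that the resulting scalar is real. The only difference is that you derive the invariance of $K$ and of $L(\lambda)$ from the fact that the radical of an invariant form is invariant, combined with Proposition~\ref{prop:radical}, whereas the paper takes this invariance from the construction via \cite[Lemma~3]{N-KLM}. Note that your ``alternative'' via Lemma~\ref{lem:permker} alone also needs the reverse inclusion $\Ker\Ht\subseteq K\oplus L$ from Proposition~\ref{prop:radical}.
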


\begin{proof}
(i) is the statement just made. For (iii), for either group, if $H$
and $H'$ are invariant Hermitian forms with $H'$ non-degenerate, then
$H'^{-1}H$ commutes with the representation and is a scalar by
Schur's lemma;
the scalar is real since both forms are Hermitian. For (ii), a finite-dimensional
representation preserving a definite form is unitarizable and
completely reducible; $L(\lambda)$ is an invariant subspace
\cite[Lemma~3]{N-KLM}, so the quotient by it is a direct
summand, hence unitarizable as well.
\end{proof}

\begin{remark}
Thus ``definite for some $\lambda$'' is a property of the braid
representation, whereas the signature $(R-m,m-r)$ is a property of the
canonical form at a given $\lambda$. The example $n=3$, $N=1$,
$g_j=e(\tfrac1{10})$ illustrates the distinction: $l=\tfrac15$ gives
signature $(3,0)$ and $l=\tfrac45$ gives $(2,1)$ for the same
representation. This is the precise sense in which the theorem
answers \cite[Problem~18]{N-KLM}: it locates, for each tuple, the
definite $\lambda$-window of the canonical form, and by (ii) the
existence of that window is equivalent to unitarizability of the
braid representation with respect to \emph{some} form in the pencil.
\end{remark}

\section{Examples and specializations}
\label{sec:BH}
\subsection{The Haraoka dictionary}\label{subsec:haraoka}

We now verify that the $N=1$ case of Theorem~\ref{thm:rankN} is
consistent with, and refines,
the rank-one results of Haraoka
[\emph{Finite monodromy of Pochhammer equation}, Ann.\ Inst.\ Fourier
\textbf{44} (1994), 767--810], hereafter \cite{Haraoka1994}.

\subsubsection*{Dictionary}
The Pochhammer system $\mathcal P(\lambda_{\mathrm H},\rho)$ of \cite[\S1.0]{Haraoka1994}
(we write $\lambda_{\mathrm H}=(\lambda_1,\dots,\lambda_n)$ for Haraoka's
exponents to avoid a clash with our $\lambda$) has the Euler integral
representation \cite[(1.9)]{Haraoka1994}
\[
z_j(x)=\int_{\Gamma_j}(x-s)^{\rho-1}(s-t_1)^{\lambda_1-\rho}\cdots
(s-t_n)^{\lambda_n-\rho}\,ds .
\]
Thus the seed rank-one local system has local monodromy
$e(\lambda_j-\rho)$ at $t_j$, and the convolution kernel contributes the
parameter $e(\rho-1)=e(\rho)$. In our normalization this is
\begin{equation}\label{eq:dict}
a_j=\frc{\lambda_j-\rho}\in(0,1),
\qquad
l=\frc{\rho}\in(0,1),
\end{equation}
where $\frc{\,\cdot\,}$ denotes the fractional part as in \cite{Haraoka1994}.
Throughout the rank-one dictionaries we write $A_s:=\sum_{j\le s}a_j$
for the partial sums of the local exponents ($A_0:=0$), so that
$A_n=\sum_j a_j$ and, for $N=1$, $m=\lfloor l+A_n\rfloor$ off the
walls. Under \eqref{eq:dict},
\[
l+A_n\equiv \rho+\sum_{j=1}^n(\lambda_j-\rho)
=\sum_j\lambda_j-(n-1)\rho=\rho' \pmod\Z ,
\]
so our set $E$ is precisely $\rho'\in\Z$. Haraoka's conditions,
verified against the print text, are: (1.11) (irreducibility, after
Misaki) $\lambda_j-\rho,\ \rho,\ \rho'\notin\Z$; and (1.12)
(genericity) $=$ (1.11) together with $\lambda_j\notin\Z$ for every
$j$. Our conditions $a_j\neq0$, $l\neq0$ and the wall condition
recover (1.11); the remaining clause $\lambda_j\notin\Z$ --- under
\eqref{eq:dict}, $a_j+l\notin\Z$ --- is \emph{not} implied by the
PI condition on ordered partial products and is assumed in addition
wherever the comparison invokes (1.12).

\subsubsection*{The invariant form}
Haraoka's Propositions~1.4--1.5 and Theorem~1.2 are stated for
rational generic parameters satisfying (1.12); the matrix identity
of Proposition~\ref{prop:haraoka-entrywise} below holds for all real
parameters satisfying the hypotheses and normalization of that
proposition, and the two statements are kept separate.
\cite[Proposition~1.4]{Haraoka1994} exhibits the invariant Hermitian matrix
$H_{\mathrm H}=\alpha\,(h_{st})$ with
\[
h_{ss}=4\sin\pi\lambda_s\,\sin\pi(\rho-\lambda_s),\qquad
h_{st}=\frac{(e_s-e_0)(e_t-e_0)\,e^{-\pi i\rho}}{e_s}\quad(s<t),
\]
and $h_{ts}=\overline{h_{st}}$, where $e_s=e(\lambda_s)$,
$e_0=e(\rho)$, $0<\frc\rho<1$; this branch-free normalization is
the unique one invariant under the generators (1.13). In the print,
(1.15) writes the denominator as $e_s\,e_0^{1/2}$, so the displayed
sign depends on the branch convention for $e_0^{1/2}$.

\begin{proposition}[entrywise comparison with the Haraoka form]
\label{prop:haraoka-entrywise}
Let $N=1$, $c_j=e(a_j)$ with $a_j$ and $l=\rho\in(0,1)$ as in
\eqref{eq:dict}, so that $c_j=e_j/e_0$ and $\lambda=e_0$, and take
$\lambda^{1/2}=e_0^{1/2}:=e^{\pi i\rho}$. Let $H_{\mathrm H}=(h_{st})$
be the branch-free matrix above (overall scalar $\alpha=1$) and
$E:=\operatorname{diag}(e_1,\dots,e_n)$. Then
\begin{equation}\label{eq:haraoka-congruence}
\Ht\;=\;-\,E^{*}\,\overline{H_{\mathrm H}}\,E ,
\end{equation}
an identity of matrices. In particular $\Ht$ and $H_{\mathrm H}$ are
congruent up to the sign $-1$. Hence, with $m$ and $r$ as in
Corollary~\ref{cor:quotsig} at $N=1$ (here $K=0$, since every
$a_j\in(0,1)$), both matrices have nullity $r$ and
\[
\operatorname{sig}\Ht=(n-m,\;m-r),\qquad
\operatorname{sig}H_{\mathrm H}=(m-r,\;n-m),
\]
with $r=0$ off the resonance walls; this holds without any
irreducibility hypothesis.
\end{proposition}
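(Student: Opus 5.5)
The plan is to verify \eqref{eq:haraoka-congruence} entry by entry, which is enough because both sides are Hermitian. Only the diagonal and the strictly upper entries need checking; the lower entries then follow by conjugate symmetry, since $E^{*}\overline{H_{\mathrm H}}E$ is Hermitian whenever $H_{\mathrm H}$ is. Everything is rewritten in the variables of \eqref{eq:blocks} at $N=1$:
\[
c_j=e(a_j),\qquad e_j=c_je_0,\qquad \lambda=e_0,\qquad \lambda^{1/2}=e^{\pi i\rho}.
\]
Two bookkeeping facts are used throughout: $E^{*}=\operatorname{diag}(\bar e_j)$ with $\bar e_je_j=1$, and the $(s,t)$ entry of the right-hand side is $-\bar e_s\,\overline{h_{st}}\,e_t$.

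For the off-diagonal entries, take $s<t$. Conjugating $h_{st}$ gives $(\bar e_s-\bar e_0)(\bar e_t-\bar e_0)e^{\pi i\rho}e_s$, so the right-hand entry is
\[
-(\bar e_s-\bar e_0)(\bar e_t-\bar e_0)\,e^{\pi i\rho}\,e_t .
\]
Substituting $e_j=c_je_0$ gives $\bar e_s-\bar e_0=\bar e_0(c_s^{-1}-1)$ and $(\bar e_t-\bar e_0)e_t=1-c_t$. The entry therefore collapses to $e^{-\pi i\rho}(c_s^{-1}-1)(c_t-1)$. This is exactly $\Ht_{st}=\lambda^{-1/2}(c_s^{-1}-1)(c_t-1)$ from \eqref{eq:blocks}. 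The one point to watch is the branch: it is precisely the choice $\lambda^{1/2}=e^{\pi i\rho}$, matched with the branch-free factor $e^{-\pi i\rho}$ in $h_{st}$, that makes the phases agree.

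For the diagonal, the right-hand entry is $-h_{ss}$, because $h_{ss}$ is real and $|e_s|=1$. On the left I would factor each term of $\Ht_{ss}=\lambda^{-1/2}(c^{-1}-\lambda)(c-1)$ with $e(x)-e(y)=2i\,e\bigl(\tfrac{x+y}2\bigr)\sin\pi(x-y)$:
\[
c^{-1}-\lambda=-2i\,e\bigl(\tfrac{l-a}{2}\bigr)\sin\pi(l+a),\qquad c-1=2i\,e\bigl(\tfrac a2\bigr)\sin\pi a .
\]
All phases cancel against $\lambda^{-1/2}$, leaving $\Ht_{ss}=4\sin\pi a_s\,\sin\pi(a_s+l)$.

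The step I expect to be the main obstacle is the fractional-part bookkeeping on the diagonal. Write $a_s=\lambda_s-\rho+k$ with $k\in\Z$ and $l=\rho$. Then $a_s+l=\lambda_s+k$ carries the same integer shift, so the two sign factors $(-1)^k$ cancel. This gives $\Ht_{ss}=4\sin\pi(\lambda_s-\rho)\sin\pi\lambda_s=-h_{ss}$, which completes the matrix identity. This cancellation is why the identity holds for all real parameters satisfying the normalization, resonant ones included, with no genericity assumption.

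Finally, I would read off the inertia statements. The matrix $E$ is invertible, so congruence by it preserves inertia; complex conjugation also preserves the inertia of a Hermitian matrix, since $\overline H$ has the same real spectrum. Hence $\operatorname{In}H_{\mathrm H}$ is $\operatorname{In}\Ht$ with $n_+$ and $n_-$ swapped, and the nullities agree. Here $K=0$ because every $a_j\in(0,1)$, so $R=n$. Corollary~\ref{cor:quotsig} then gives $\dim\Ker\Ht=r$ and $\operatorname{sig}\Ht=(n-m,\,m-r)$, and therefore $\operatorname{sig}H_{\mathrm H}=(m-r,\,n-m)$ with the same nullity $r$. No irreducibility enters anywhere, since Corollary~\ref{cor:quotsig} is a tuple-level statement.
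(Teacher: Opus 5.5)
Your proposal is correct and follows the paper's proof. You check the diagonal and the entries $s<t$ entrywise (the lower ones following by Hermiticity), with the branch $\lambda^{1/2}=e^{\pi i\rho}$ matching the factor $e^{-\pi i\rho}$ in $h_{st}$. You then read off the inertia via congruence by the unitary $E$, complex conjugation and the sign flip, together with Corollary~\ref{cor:quotsig} at $K=0$.

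The only difference is cosmetic and occurs on the diagonal. The paper factors directly in the variables $e_s$, with $e_s^{1/2}=e^{\pi i\lambda_s}$. You instead pass through $a_s$ and $l$ and cancel the two factors $(-1)^k$; both routes give $4\sin\pi\lambda_s\sin\pi(\lambda_s-\rho)$.
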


\begin{proof}
Both sides are Hermitian, so it suffices to compare the diagonal and
the entries $j<k$. From \eqref{eq:blocks} at $N=1$,
\[
\Ht_{jk}=\lambda^{-1/2}(\bar c_j-1)(c_k-1)
=e_0^{-1/2}\,\frac{e_0-e_j}{e_j}\cdot\frac{e_k-e_0}{e_0}
=-\,\frac{(e_j-e_0)(e_k-e_0)}{e_j\,e_0^{3/2}}\qquad(j<k),
\]
while $(E^{*}\overline{H_{\mathrm H}}E)_{jk}=\bar e_j\,\overline{h_{jk}}\,e_k$ with
$\overline{h_{jk}}=(\bar e_j-\bar e_0)(\bar e_k-\bar e_0)\,e_0^{1/2}/\bar e_j$;
using $\bar e_j-\bar e_0=(e_0-e_j)/(e_je_0)$ and the same for $k$,
\[
\bar e_j\,\overline{h_{jk}}\,e_k
=\frac{(e_0-e_j)(e_0-e_k)}{e_je_ke_0^{2}}\;e_0^{1/2}e_k
=\frac{(e_j-e_0)(e_k-e_0)}{e_j\,e_0^{3/2}}=-\Ht_{jk}.
\]
On the diagonal, $h_{ss}$ is real, so
$-(E^{*}\overline{H_{\mathrm H}}E)_{ss}=-h_{ss}
=4\sin\pi\lambda_s\,\sin\pi(\lambda_s-\rho)$, and
$\Ht_{ss}=\lambda^{-1/2}(\bar c_s-\lambda)(c_s-1)
=e_0^{-1/2}\,(1-e_s)(e_s-e_0)/e_s
=\bigl(e_s^{-1/2}-e_s^{1/2}\bigr)\bigl(e_s^{1/2}e_0^{-1/2}-e_s^{-1/2}e_0^{1/2}\bigr)
=4\sin\pi\lambda_s\,\sin\pi(\lambda_s-\rho)$,
with $e_s^{1/2}:=e^{\pi i\lambda_s}$ (the two half-powers occur only
in the product, which is branch-independent). Finally, complex
conjugation and the congruence by the unitary $E$ preserve inertia,
and the sign $-1$ swaps it.
\end{proof}

\subsubsection*{Recovering and refining Proposition 1.5}
For rational generic parameters, \cite[Proposition~1.5]{Haraoka1994}
states that $H_{\mathrm H}$ is definite iff
\begin{gather*}
\text{(1.16:i)}\ \ \frc\rho<\frc{\lambda_j}\ (\forall j),\ \
\sum_j\frc{\lambda_j}<(n-1)\frc\rho+1,\\
\text{or}\qquad
\text{(1.16:ii)}\ \ \frc{\lambda_j}<\frc\rho\ (\forall j),\ \
(n-1)\frc\rho<\sum_j\frc{\lambda_j}.
\end{gather*}
We claim this is exactly the definiteness clause of
Corollary~\ref{cor:quotsig} --- at $N=1$, $K=0$, $r=0$: $m\in\{0,n\}$
--- under \eqref{eq:dict}.
Indeed, $\frc{\lambda_j-\rho}=\frc{\lambda_j}-\frc\rho$ if
$\frc{\lambda_j}>\frc\rho$, and $=\frc{\lambda_j}-\frc\rho+1$ otherwise.
\begin{itemize}
\item[$m=0$:] If some $\frc{\lambda_j}<\frc\rho$, then already
$l+a_j=\frc\rho+\frc{\lambda_j}-\frc\rho+1>1$, so $m\ge1$. Hence $m=0$ forces
$\frc{\lambda_j}>\frc\rho$ for all $j$, and then
\[
l+A_n=\sum_j\frc{\lambda_j}-(n-1)\frc\rho<1
\]
is exactly (1.16:i).
\item[$m=n$:] Each step must cross an integer, which forces
$\frc{\lambda_j}<\frc\rho$ for all $j$ (if some $\frc{\lambda_j}>\frc\rho$,
then $a_j<1-\frc\rho$ and $l+A_n<n$, so $m<n$); then
\[
l+A_n=\sum_j\frc{\lambda_j}-(n-1)\frc\rho+n>n
\]
is exactly (1.16:ii).
\end{itemize}
Theorem~\ref{thm:rankN} (at $N=1$) refines \cite[Proposition~1.5]{Haraoka1994}: for
intermediate values $0<m<n$ it gives the full signature $(n-m,m)$, not merely
the failure of definiteness, and identifies $m$ as the integer $\lfloor l+A_n\rfloor$,
the winding number of the exponent sum. (Caution:
$\lfloor\varrho\rfloor$ computed with the canonical representative
of $\rho'$ differs from $m$ in general --- the correct relation is
Proposition~\ref{prop:kappa}.) This is the invariant that our general-rank theorem replaces
by the spectral flow of $\lambda g_1\cdots g_n$.

\subsubsection*{Finiteness via Galois twists}
Definiteness is, for \cite{Haraoka1994}, a means to an end: the subject of that
paper is finiteness of the monodromy group of the Pochhammer
equation. For rational parameters with common denominator $D$ the
monodromy group is defined over the cyclotomic field
$\mathbb Q(\zeta_D)$, and, by the method of Beukers--Heckman, it is finite
if and only if the invariant Hermitian form is definite in
\emph{every} embedding, i.e.\ for every Galois twist
$(\rho,\lambda_j)\mapsto(\Delta\rho,\Delta\lambda_j)$ with
$\Delta$ prime to $D$. Accordingly \cite[Theorem~1.2]{Haraoka1994} characterizes
finite monodromy by requiring (1.17:i or ii) for every such $\Delta$.
In our language this reads: for every such $\Delta$,
\[
m(\Delta):=\Bigl\lfloor \frc{\Delta\rho}
+\sum_{j=1}^n\frc{\Delta(\lambda_j-\rho)}\Bigr\rfloor\in\{0,\,n\},
\]
a winding-number reformulation of the Beukers--Heckman-type interlacing
condition. The winding number itself is now realized in full generality
by Theorem~\ref{thm:rankN} ($m=Nl+\sum_{i,k}a_i^{(k)}-\sum_k\beta_k$).

\subsection{The Beukers--Heckman dictionary}\label{subsec:bh}

We compare the $N=1$ case of Theorem~\ref{thm:rankN} --- in the Pochhammer coordinates of \S\ref{subsec:haraoka} --- with the signature and finiteness theory of
Beukers--Heckman [\emph{Monodromy for the hypergeometric function
${}_nF_{n-1}$}, Invent.\ Math.\ \textbf{95} (1989), 325--354], hereafter
\cite{BH}. The comparison has three levels: an exact combinatorial reformulation of the $N=1$ theorem in BH style (valid for all $n$); a literal
identification at $n=2$, where the Pochhammer system \emph{is} the Gauss
hypergeometric system; and a conditional reduction at general rank, where
\cite[Theorem~4.5]{BH} appears, on the subfamily of hypergeometric data
that arises from unitary seeds (made precise below), as the
$(n=2,\ K\neq0)$ instance of the rank-$N$ formula of
Corollary~\ref{cor:quotsig}, reducing the Beukers--Heckman signature
theorem on that subfamily to Corollary~\ref{cor:quotsig}.

\subsubsection{The Beukers--Heckman theorems}
For $\alpha=(\alpha_1,\dots,\alpha_r)$, $\beta=(\beta_1,\dots,\beta_r)$ with
$\alpha_j,\beta_k\in[0,1)$ and $\alpha_j\neq\beta_k$ for all $j,k$, let
$A,B\in GL_r(\C)$ be the companion matrices of
$\prod_j(z-e(\alpha_j))=z^{r}+a_{r-1}z^{r-1}+\dots+a_0$ and
$\prod_k(z-e(\beta_k))=z^{r}+b_{r-1}z^{r-1}+\dots+b_0$, that is,
\[
A=\begin{pmatrix}
0&0&\cdots&0&-a_0\\
1&0&\cdots&0&-a_1\\
0&1&\cdots&0&-a_2\\
\vdots&&\ddots&&\vdots\\
0&0&\cdots&1&-a_{r-1}
\end{pmatrix},
\qquad
B=\begin{pmatrix}
0&0&\cdots&0&-b_0\\
1&0&\cdots&0&-b_1\\
0&1&\cdots&0&-b_2\\
\vdots&&\ddots&&\vdots\\
0&0&\cdots&1&-b_{r-1}
\end{pmatrix},
\]
so that $\operatorname{spec}A=e(\alpha)$, $\operatorname{spec}B=e(\beta)$,
and let $\Gamma(\alpha;\beta):=\langle A,B\rangle$ be the associated
hypergeometric group. Since $A$ and $B$ differ only in their last
column, $A-B$ has rank one and $B^{-1}A$ is a pseudo-reflection. By
Levelt's theorem, $\Gamma(\alpha;\beta)$ is the monodromy group of
the generalized hypergeometric equation ${}_rF_{r-1}$ whose local
exponents at $0$ and $\infty$ are read off $\alpha$ and $\beta$,
with the pseudo-reflection at $1$. We use
Theorems~\ref{thm:BH45}--\ref{thm:BH48} below purely as statements
about the group $\langle A,B\rangle$ generated by the two companion
matrices, so that no convention about which of $0,\infty$ (or which
inverse) $\alpha$ and $\beta$ refer to enters their use; in the
dictionaries below we say explicitly which monodromy, or inverse
monodromy, each set records. Since all eigenvalues lie on the unit
circle, $\Gamma(\alpha;\beta)$ preserves a Hermitian form $F$, unique up to a
real scalar when the group is irreducible.

\begin{theorem}[{\cite[Thm.~4.5 and Cor.~4.7]{BH}}]\label{thm:BH45}
Sort $\alpha_1\le\cdots\le\alpha_r$ and set
$m_j:=\#\{k:\beta_k<\alpha_j\}$. Then the signature $(p,q)$ of $F$ satisfies
\[
|p-q|=\Bigl|\sum_{j=1}^{r}(-1)^{\,j+m_j}\Bigr| ;
\]
since $F$ is non-degenerate, $p+q=r$, so
$q\in\bigl\{\#\{j\mid(-1)^{j+m_j}=-1\},\ \#\{j\mid(-1)^{j+m_j}=+1\}\bigr\}$,
the residual swap being the scalar ambiguity of $F$. In particular $F$ is
definite iff the sets $\{\alpha_j\}$ and $\{\beta_k\}$ interlace on the unit
circle (strictly: repetitions inside $\alpha$ or inside $\beta$ are allowed
by the signature formula, but then the sets do not interlace).
\end{theorem}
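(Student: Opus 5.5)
The plan is to reproduce the Beukers--Heckman argument directly. I would build an explicit $\Gamma(\alpha;\beta)$-invariant Hermitian form in an eigenbasis of $A$ and read off the sign of each diagonal entry. I would first treat the case where the $\alpha_j$ are pairwise distinct, then remove that assumption by a continuity argument.

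\emph{Step 1 (reduction to a diagonal form).} Assume $\alpha_1<\dots<\alpha_r$. Choose an eigenbasis $A e_j=e(\alpha_j)e_j$. If $A^{*}FA=F$, then $F(e_i,e_j)\bigl(\overline{e(\alpha_i)}e(\alpha_j)-1\bigr)=0$, so $F$ is diagonal, $F=\operatorname{diag}(f_1,\dots,f_r)$ with $f_j\in\R$.

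\emph{Step 2 (using $B$).} Since $A-B$ has rank one, write $B=A-u\,\xi^{T}$ with $\xi=(0,\dots,0,1)$ in companion coordinates. Equivalently $B^{-1}A$ is the identity plus a rank-one term, with special eigenvalue $c:=\det A/\det B$. Invariance under $B$ is then equivalent to the reflection $B^{-1}A$ preserving $F$. The standard computation shows that the reflection vector $v$ satisfies $F(\cdot,v)\propto\xi$, so $F^{-1}\bar\xi\propto v$. Expanding $\xi$ and $v$ in the $e_j$-basis via the characteristic polynomials, one gets $v_j\propto\prod_k\bigl(e(\alpha_j)-e(\beta_k)\bigr)$ and $\xi_j\propto1/\prod_{i\neq j}\bigl(e(\alpha_j)-e(\alpha_i)\bigr)$; these are the Lagrange-interpolation coefficients of $\prod_k(z-e(\beta_k))$ at the nodes $e(\alpha_i)$. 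This yields, up to one global nonzero scalar,
\[
f_j\;\propto\;\frac{\prod_{k}\bigl(e(\alpha_j)-e(\beta_k)\bigr)}{\prod_{i\ne j}\bigl(e(\alpha_j)-e(\alpha_i)\bigr)}.
\]
Using $e(x)-e(y)=2i\,e\bigl(\tfrac{x+y}2\bigr)\sin\pi(x-y)$, I would check that the phase $\prod_k e\bigl(\tfrac{\alpha_j+\beta_k}{2}\bigr)\big/\prod_{i\neq j}e\bigl(\tfrac{\alpha_j+\alpha_i}{2}\bigr)$ is independent of $j$. Its $j$-dependent part is $e(\alpha_j)^{r/2}/e(\alpha_j)^{(r-1)/2}\cdot e(\alpha_j)^{-1/2}$, which is constant. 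Hence
\[
f_j\;=\;c_0\,\frac{\prod_k\sin\pi(\alpha_j-\beta_k)}{\prod_{i\ne j}\sin\pi(\alpha_j-\alpha_i)}\qquad (c_0\in\R^{\times}).
\]
This bookkeeping of phases and of the normalization of $\xi$ versus $v$ is the step I expect to be the main obstacle. I would first verify it in the case $r=1$, and then at $r=2$ against the explicit rank-one computation of Proposition~\ref{prop:haraoka-entrywise}.

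\emph{Step 3 (signs).} Since $\alpha_j,\beta_k\in[0,1)$ and $\alpha_j\neq\beta_k$, the factor $\sin\pi(\alpha_j-\beta_k)$ is negative exactly when $\beta_k>\alpha_j$, which happens for $r-m_j$ values of $k$. Likewise $\sin\pi(\alpha_j-\alpha_i)$ is negative exactly when $i>j$, for $r-j$ values of $i$. Therefore $\operatorname{sign}f_j=\operatorname{sign}(c_0)\,(-1)^{j+m_j}$. It follows that $p-q=\pm\sum_j(-1)^{j+m_j}$, that $p+q=r$ because every $f_j\neq0$, and that the remaining ambiguity is exactly the sign of $c_0$, i.e.\ the real-scalar ambiguity of $F$.

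\emph{Step 4 (definiteness).} $F$ is definite iff $j+m_j$ has constant parity in $j$, i.e.\ $m_{j+1}-m_j$ is odd for every $j$. Since $m_j$ is non-decreasing and bounded by $r$, with $r$ differences available, this forces each consecutive gap between $\alpha_j$ and $\alpha_{j+1}$ to contain exactly one $\beta_k$, with the wrap-around gap handled the same way; this is interlacing on the circle.

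\emph{Repeated $\alpha$'s.} When some $\alpha_j$ coincide, I would perturb them to distinct values keeping $\alpha_j\neq\beta_k$. The $m_j$ and the ordering-based sum are unchanged for a suitable tie-breaking order. Non-degeneracy of $F$ along the path then keeps the inertia locally constant, in the same spirit as the continuity arguments of Section~\ref{sec:degenerate}. Here I would have to check that $F$ depends continuously on the data and stays non-degenerate; I expect to need irreducibility, which follows from $\alpha_j\neq\beta_k$, for uniqueness of $F$ along the path.
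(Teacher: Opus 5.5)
The paper gives no proof of this statement; it quotes it from \cite{BH}. Your route is essentially the original Beukers--Heckman argument: diagonalize $F$ in an eigenbasis of $A$, determine the diagonal entries from the rank-one difference $A-B$, count signs of sines, and reach repeated $\alpha$'s by continuity. Steps~1, 3 and 4 are correct. The continuity step also works as you anticipate. By irreducibility, the real solution space of $A^{*}FA=F$, $B^{*}FB=F$ is a line at every parameter value, so it varies continuously. A nonzero solution is non-degenerate because its radical is an invariant subspace.

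There is, however, a genuine error in Step~2, at the point you flagged. In companion coordinates $A-B=u\,\xi$, where $u=q-p$ ($p,q$ the two characteristic polynomials) and $\xi$ is the last-coordinate functional. In the Lagrange eigenbasis of $A$ this gives $u_j=\prod_k(e(\alpha_j)-e(\beta_k))$ and $\xi_j=1/\prod_{i\neq j}(e(\alpha_j)-e(\alpha_i))$. But $u\,\xi$ is not a pseudo-reflection, and $u$ is not the reflection vector of $B^{-1}A=I+B^{-1}u\,\xi$. From $Aw-u\,\xi(w)=u$ one gets $w:=B^{-1}u=c\,A^{-1}u$, so $w_j=c\,u_j/e(\alpha_j)$. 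Write $F(x,y)=\sum_j f_j\bar x_jy_j$. Invariance forces $F(\ker\xi,w)=0$, i.e.\ $f_jw_j=\nu\,\bar\xi_j$ for some scalar $\nu$. Since $f_j$ is real, this gives
\[
f_j=\frac{\bar\nu\,c}{|w_j|^{2}}\cdot
\frac{\prod_k\bigl(e(\alpha_j)-e(\beta_k)\bigr)}{e(\alpha_j)\prod_{i\neq j}\bigl(e(\alpha_j)-e(\alpha_i)\bigr)} .
\]

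Your quotient lacks two factors:
\begin{itemize}
\item The positive factor $|w_j|^{-2}$ is harmless for signs.
\item The factor $e(\alpha_j)^{-1}$ is not harmless.
\end{itemize}
Your phase check hides the missing $e(\alpha_j)^{-1}$ by an arithmetic slip. The $j$-dependent phase of the numerator is $e(r\alpha_j/2)$, and that of the denominator $\prod_{i\ne j}e(\tfrac{\alpha_j+\alpha_i}{2})$ is $e((r-1)\alpha_j/2)\,e(-\alpha_j/2)$. The quotient therefore carries the phase $e(\alpha_j)$, not $1$. So the quotient as you wrote it is not real up to a global scalar. Already at $r=2$ the two entries have ratio in $e(\alpha_1-\alpha_2)\R$, which is not real unless $2(\alpha_1-\alpha_2)\in\Z$. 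A direct check at $r=2$ detects this. Proposition~\ref{prop:haraoka-entrywise} is not the right test, since it concerns the Pochhammer form rather than the companion presentation.

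With $e(\alpha_j)^{-1}$ restored, the phase is the constant $2i\,e\bigl(\tfrac12(\sum_k\beta_k-\sum_i\alpha_i)\bigr)$. Your real formula $f_j=c_0\prod_k\sin\pi(\alpha_j-\beta_k)/\prod_{i\ne j}\sin\pi(\alpha_j-\alpha_i)$ then holds up to positive factors, which is all Step~3 uses. With this correction the argument is complete.
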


\begin{theorem}[{\cite[Thm.~4.8]{BH}}]\label{thm:BH48}
Suppose $\alpha_j,\beta_k\in\mathbb Q$, with common denominator $D$, and the
parameters are disjoint. Then $\Gamma(\alpha;\beta)$ is finite
iff for every $\Delta$ prime to $D$ the sets $\{\frc{\Delta\alpha_j}\}$ and
$\{\frc{\Delta\beta_k}\}$ interlace.
\end{theorem}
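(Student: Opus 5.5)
This criterion is quoted from \cite{BH}, and we would follow the standard Galois-twist argument rather than anything specific to the KLM construction. The proof combines four ingredients: Theorem~\ref{thm:BH45}, irreducibility of $\Gamma(\alpha;\beta)$ for disjoint parameters, uniqueness of the invariant form up to a real scalar, and the fact that a discrete subgroup of a compact group is finite.

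\emph{Setup.} Let $\zeta=e(1/D)$ and $\mathcal O=\mathbb Z[\zeta]$.
\begin{itemize}
\item[(a)] The coefficients $a_i,b_i$ of the two characteristic polynomials are elementary symmetric functions of the roots of unity $e(\alpha_j)$ and $e(\beta_k)$. Hence $A,B\in M_r(\mathcal O)$.
\item[(b)] $\det A=\pm a_0$ and $\det B=\pm b_0$ are roots of unity. Hence $A,B\in GL_r(\mathcal O)$, and so $\Gamma(\alpha;\beta)\subseteq GL_r(\mathcal O)$.
\item[(c)] For $\Delta$ prime to $D$, let $\sigma_\Delta\in\operatorname{Gal}(\mathbb Q(\zeta)/\mathbb Q)$ be the automorphism $\zeta\mapsto\zeta^\Delta$. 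It sends $e(\alpha_j)\mapsto e(\frc{\Delta\alpha_j})$ and likewise for the $\beta_k$. Applying $\sigma_\Delta$ entrywise therefore sends $\Gamma(\alpha;\beta)$ isomorphically onto $\Gamma(\frc{\Delta\alpha};\frc{\Delta\beta})$.
\item[(d)] Disjointness is preserved by $\sigma_\Delta$, since it is injective on roots of unity.
\item[(e)] By Levelt's theorem (equivalently \cite[Prop.~3.3]{BH}), disjoint parameters give an irreducible group. So each twisted group is irreducible, and its invariant Hermitian form $F_\Delta$ is unique up to a real scalar.
\end{itemize}

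\emph{Finite $\Rightarrow$ interlacing.} If $\Gamma$ is finite, so is every twist $\sigma_\Delta(\Gamma)$. Averaging any inner product over this finite group produces an invariant positive definite form. By the uniqueness in (e), this form is a real multiple of $F_\Delta$, so $F_\Delta$ is definite. Theorem~\ref{thm:BH45} then gives that $\{\frc{\Delta\alpha_j}\}$ and $\{\frc{\Delta\beta_k}\}$ interlace.

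\emph{Interlacing $\Rightarrow$ finite.} Conversely, suppose interlacing holds for every $\Delta$. Theorem~\ref{thm:BH45} makes each $F_\Delta$ definite, so each $\sigma_\Delta(\Gamma)$ is conjugate into a compact unitary group $U(F_\Delta)$. The embeddings of $\mathbb Q(\zeta)$ into $\C$ are exactly the $\sigma_\Delta$, up to complex conjugation; complex conjugation preserves definiteness, since the conjugate form is invariant for the conjugate group. Now consider the diagonal map
\[
\mathcal O\hookrightarrow\prod_{\Delta\bmod D,\ (\Delta,D)=1}\C,
\]
whose image is a lattice. It induces an embedding of $GL_r(\mathcal O)$ as a discrete subgroup of $\prod_\Delta GL_r(\C)$. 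Under this embedding $\Gamma$ lands in the compact set $\prod_\Delta U(F_\Delta)$. A discrete subset of a compact set is finite, so $\Gamma$ is finite.

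The step needing most care is the uniqueness in (e), since both directions rely on it. It requires irreducibility for every twist, which rests on Levelt's theorem: the group generated by two matrices with disjoint spectra and $B^{-1}A$ a pseudo-reflection is irreducible. I would prove this directly. Suppose $W$ is a proper nonzero invariant subspace. Because $A-B$ has rank one, either $W\subseteq\ker(A-B)$ or the image of $A-B$ lies in $W$. In each case one produces a common eigenvector of $A$ and $B$, or applies the same argument to the dual representation. A common eigenvector gives an eigenvalue of $A$ equal to one of $B$, contradicting disjointness. The remaining inputs — the lattice embedding and the Galois action on the companion matrices — are routine.
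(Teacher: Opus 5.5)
Your proof is correct and follows the Galois-conjugation argument of Beukers--Heckman, which the paper cites without reproducing. In one direction, finiteness gives each conjugate group $\Gamma(\frc{\Delta\alpha};\frc{\Delta\beta})$ a definite invariant form, and irreducibility, uniqueness of the form and Theorem~\ref{thm:BH45} then give interlacing. In the other direction, definiteness of every conjugate form puts $\Gamma\subseteq GL_r(\mathbb Z[\zeta])$ inside a compact subset of $\prod_\Delta GL_r(\C)$, where it is discrete and hence finite.

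Two wording points need fixing. First, the diagonal image of $\mathbb Z[\zeta]$ in $\prod_\Delta\C$ is a discrete subgroup of rank $\varphi(D)$, not a full lattice. It is discrete because a nonzero algebraic integer has norm of absolute value at least $1$. Second, finiteness does not follow from ``a discrete subset of a compact set''; it follows because a discrete subgroup is closed, and a closed discrete set meets a compact set in only finitely many points.
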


\subsubsection{Combinatorial form of the $N=1$ theorem}
Theorem~\ref{thm:rankN} at $N=1$ describes the signature through a
single winding number, whereas Theorem~\ref{thm:BH45} describes it
through position counts (how many $\beta_k$ lie below each
$\alpha_j$). The following proposition translates the former into the
latter shape, so that the two can be compared term by term.

\begin{proposition}[the $N=1$ winding number as a position count]
\label{prop:kappa}
Let $N=1$, $K=0$, $\lambda\notin E$, and put the Pochhammer exponents
$(\lambda_1,\dots,\lambda_n;\rho)$ into Theorem~\ref{thm:rankN}
through the dictionary \eqref{eq:dict}, i.e.\ substitute
\[
a_j=\frc{\lambda_j-\rho}\quad(j=1,\dots,n),\qquad l=\frc\rho ,
\]
so that the winding number of the theorem is $m=\lfloor l+A_n\rfloor$.
Then, with $\varrho:=\sum_j\frc{\lambda_j}-(n-1)\frc{\rho}$ (the
canonical representative of $\rho'$; the letter $r$ is reserved for
$\dim L$),
\[
m=\#\{j\mid\frc{\lambda_j}<\frc{\rho}\}+\lfloor \varrho\rfloor ,
\]
and the signature of the invariant form is $(n-m,\,m)$.
\end{proposition}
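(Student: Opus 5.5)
The proof is a direct bookkeeping computation on top of Theorem~\ref{thm:rankN} at $N=1$. There the winding number is $m=l+A_n-\beta$ with $\beta=\frc{l+A_n}$, so $m=\lfloor l+A_n\rfloor$, and the signature $(n-m,m)$ is exactly that theorem. It therefore remains only to rewrite $l+A_n$ in the Pochhammer coordinates and take the floor.

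First I use the elementary identity already quoted in the Haraoka discussion. For each $j$,
\[
\frc{\lambda_j-\rho}=\frc{\lambda_j}-\frc\rho+\epsilon_j,
\qquad
\epsilon_j:=\begin{cases}1,&\frc{\lambda_j}<\frc\rho,\\ 0,&\frc{\lambda_j}\ge\frc\rho.\end{cases}
\]
This holds because $\frc{\lambda_j}-\frc\rho\in(-1,1)$ and the fractional part adds $1$ exactly when this difference is negative.

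The equality case $\frc{\lambda_j}=\frc\rho$ would give $a_j=0$. That is excluded by $K=0$, so every $a_j\in(0,1)$ and the case split is clean.

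Summing over $j$ and adding $l=\frc\rho$ gives
\[
l+A_n=\frc\rho+\sum_j\frc{\lambda_j}-n\frc\rho+\#\{j\mid\frc{\lambda_j}<\frc\rho\}
=\varrho+\#\{j\mid\frc{\lambda_j}<\frc\rho\}.
\]
Since the count is an integer, $\lfloor l+A_n\rfloor=\#\{j\mid\frc{\lambda_j}<\frc\rho\}+\lfloor\varrho\rfloor$, which is the claimed formula.

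The only point needing care is the hypothesis $\lambda\notin E$. It says $l+A_n\notin\Z$, equivalently $\varrho\notin\Z$. This guarantees that $m=\lfloor l+A_n\rfloor$ agrees with the definition $m=l+A_n-\frc{l+A_n}$ used in Theorem~\ref{thm:rankN} with $r=0$, with no wall convention entering. I do not expect a genuine obstacle: the main step is the wrap-around bookkeeping for $\frc{\lambda_j-\rho}$, and after that the statement is a restatement of Theorem~\ref{thm:rankN}.
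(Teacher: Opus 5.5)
Your proposal is correct and matches the paper's proof step for step. The steps are the wrap-around identity $\frc{\lambda_j-\rho}=\frc{\lambda_j}-\frc\rho+[\frc{\lambda_j}<\frc\rho]$, summation to $l+A_n=\varrho+\#\{j\mid\frc{\lambda_j}<\frc\rho\}$, taking floors, and citing Theorem~\ref{thm:rankN} at $N=1$, $K=0$, $r=0$ for the signature. One small precision: $x-\frc{x}=\lfloor x\rfloor$ holds for every real $x$, so the hypothesis $\lambda\notin E$ is not needed for the floor formula itself. It is needed only to place you in $\mathcal P^{\circ}$, where the signature is $(n-m,m)$ rather than the on-wall value $(n-m,m-1)$ of Corollary~\ref{cor:quotsig}.
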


\begin{proof}
$\frc{\lambda_j-\rho}=\frc{\lambda_j}-\frc{\rho}+[\frc{\lambda_j}<\frc\rho]$,
so $l+A_n=\frc\rho+\sum_j\frc{\lambda_j}-n\frc\rho
+\#\{j\mid\frc{\lambda_j}<\frc{\rho}\}=\varrho+\#\{j\mid\frc{\lambda_j}<\frc{\rho}\}$;
the count being an integer, taking $\lfloor\,\cdot\,\rfloor$ gives the
display. The signature is Theorem~\ref{thm:rankN} at $N=1$, $K=0$,
$r=0$.
\end{proof}

Thus $m$ counts the local exponents $\lambda_j$ lying on the arc
$(0,\rho)$, corrected by the winding of $\rho'$ (via $\varrho$):
exactly the shape of the counts $m_j$ in Theorem~\ref{thm:BH45}
(formally, $\frc\rho$ in the role of an $\alpha_j$ and the
$\frc{\lambda_j}$ in that of the $\beta_k$ --- a resemblance of shape,
not yet a dictionary). We emphasize that for $n\ge3$ the
Pochhammer monodromy group is \emph{not} a hypergeometric group (every finite
local monodromy is a pseudo-reflection), and the naive substitution
$\alpha=\{\frc{\lambda_j}\}$, $\beta=\{\frc\rho^{\times(n-1)},\frc{\rho'}\}$
into Theorem~\ref{thm:BH45} fails already at $n=2$; Proposition~\ref{prop:kappa}
is the correct general-$n$ statement, and the two theories meet literally only
through the cases below.

\subsubsection{$n=2$: the Gauss case}
For $n=2$ the Pochhammer system $\mathcal P(\lambda_1,\lambda_2,\rho)$ is the
Gauss hypergeometric system, with pseudo-reflections at $t_1,t_2$ and
$\operatorname{spec}_\infty=\{e(-\rho),e(-\rho')\}$. Presenting the monodromy
group as $\Gamma(\alpha;\beta)$ with the reflection placed at $t_2$, the
dictionary is
\begin{equation}\label{eq:dict-bh}
\alpha=\bigl\{\frc{\rho},\ \frc{\rho'}\bigr\},
\qquad
\beta=\bigl\{0,\ \frc{\lambda_1}\bigr\},
\end{equation}
i.e.\ $\alpha$ is the spectrum of the \emph{inverse} monodromy at $\infty$ and
$\beta$ that at $t_1$; the choices $(\lambda_1\leftrightarrow\lambda_2)$ and
the global sign flip $(\alpha,\beta)\mapsto(-\alpha,-\beta)$ give the same
result.

\begin{proposition}[the Gauss case of the $N=1$ theorem]\label{prop:gauss-bh}
Let $n=2$, $N=1$, and substitute $a_j=\frc{\lambda_j-\rho}$, $l=\frc\rho$
into Theorem~\ref{thm:rankN} as in Proposition~\ref{prop:kappa}, so
that $m=\#\{j\mid\frc{\lambda_j}<\frc\rho\}+\lfloor\varrho\rfloor\in\{0,1,2\}$
and $\operatorname{sig}\Ht=(2-m,m)$. Then, for $\Gamma(\alpha;\beta)$
with $(\alpha,\beta)$ given by \eqref{eq:dict-bh}, this is the
signature of Theorem~\ref{thm:BH45} (up to the scalar swap):
\[
\bigl|2-2m\bigr|=\Bigl|\sum_{j=1}^{2}(-1)^{j+m_j}\Bigr| .
\]
In particular $\Ht$ is definite ($m\in\{0,2\}$) iff $\alpha$ and
$\beta$ interlace, and \cite[Thm.~1.2]{Haraoka1994} at $n=2$ is precisely the
Galois scan of Theorem~\ref{thm:BH48}.
\end{proposition}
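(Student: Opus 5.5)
Since $\alpha$ and $\beta$ have two elements each, Theorem~\ref{thm:BH45} makes $\bigl|\sum_{j=1}^{2}(-1)^{j+m_j}\bigr|$ equal to $2$ when the sets interlace and $0$ otherwise. The left side $|2-2m|$ is $2$ for $m\in\{0,2\}$ and $0$ for $m=1$. So the displayed identity is equivalent to the single statement
\[
m\in\{0,2\}\iff \alpha\text{ and }\beta\text{ interlace on the circle},
\]
and the plan is to prove this equivalence by an elementary case analysis in the fractional parts.

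Write $x:=\frc{\lambda_1}$, $y:=\frc{\lambda_2}$, $p:=\frc\rho=l$, and $\rho'=\lambda_1+\lambda_2-\rho$, so that $\frc{\rho'}=\frc{x+y-p}$. First I check that the standing hypotheses make all four points of $\alpha\cup\beta$ distinct, so that interlacing is unambiguous:
\begin{itemize}
\item[] $p\neq0$ because $l\in(0,1)$;
\item[] $\frc{\rho'}\neq0$ because $\lambda\notin E$;
\item[] $p\neq x$ because $a_1\neq0$;
\item[] $\frc{\rho'}\neq x$ because $\frc{\rho'}=x$ forces $y-p\in\Z$, i.e.\ $a_2=0$, excluded by $K=0$.
\end{itemize}
The remaining coincidences are harmless for the count $m_j$. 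Interlacing of $\{p,\frc{\rho'}\}$ with $\{0,x\}$ then means that exactly one of $p,\frc{\rho'}$ lies in the arc $(0,x)$. Equivalently, with the sorting convention of Theorem~\ref{thm:BH45}, it means $m_1\not\equiv m_2+1\pmod 2$ after accounting for the index shift. I will verify this translation directly from the definition of $m_j=\#\{k:\beta_k<\alpha_j\}$, noting that $0$ always counts.

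Next I compute $m$. By Proposition~\ref{prop:kappa}, $m=[x<p]+[y<p]+\lfloor x+y-2p+p\rfloor$, that is,
\[
m=[x<p]+[y<p]+\lfloor x+y-p\rfloor .
\]
Here $\lfloor x+y-p\rfloor\in\{-1,0,1\}$ and $\frc{\rho'}=x+y-p-\lfloor x+y-p\rfloor$. I split into the four cases given by the signs of $x-p$ and $y-p$:
\begin{itemize}
\item[] If $x,y>p$, then $m=\lfloor x+y-p\rfloor\in\{0,1\}$. Here $p\notin(0,x)$, so interlacing requires $\frc{\rho'}<x$, i.e.\ $x+y-p<1+x$ with the floor equal to $0$. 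This is $y-p<\ldots$, which I must check carefully against $m=0$.
\item[] If $x,y<p$, the analogous check matches $m=2$.
\item[] In the mixed cases, $m=1+\lfloor x+y-p\rfloor$ with the floor in $\{-1,0\}$ or $\{0,1\}$, and I must show non-interlacing exactly when $m=1$.
\end{itemize}
Each case is a two-line inequality comparison, and this bookkeeping is where I expect the only real risk: sign slips in the interplay of the floor with the position of $\frc{\rho'}$ relative to $x$ and $0$. To guard against these I will organise the argument on the unit circle rather than with inequalities. Starting from $0$ and moving counterclockwise, $m$ counts how many times the walk $0\to p\to p+a_1\to p+a_1+a_2$ wraps past an integer. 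Since $p+a_1\equiv x$ and $p+a_1+a_2\equiv\frc{\rho'}$, each step has length less than $1$, and the claim reduces to the cyclic order of $0,p,x,\frc{\rho'}$: wrapping $0$ or $2$ times is equivalent to the endpoints alternating.

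Finally, the statement about \cite[Thm.~1.2]{Haraoka1994} at $n=2$ follows by applying the equivalence just proved to every Galois twist $(\rho,\lambda_j)\mapsto(\Delta\rho,\Delta\lambda_j)$. The dictionary \eqref{eq:dict-bh} commutes with multiplication by $\Delta$ followed by taking fractional parts. Haraoka's condition ``$m(\Delta)\in\{0,2\}$ for all $\Delta$'' therefore becomes ``$\{\frc{\Delta\alpha_j}\}$ and $\{\frc{\Delta\beta_k}\}$ interlace for all $\Delta$'', which is the criterion of Theorem~\ref{thm:BH48}. The disjointness hypothesis of that theorem is exactly the distinctness established in the first step, applied to each twist.
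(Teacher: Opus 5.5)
Your opening reduction is the same as the paper's. For $r=2$ the Beukers--Heckman sum is $\pm2$ under interlacing and $0$ otherwise, so everything rests on showing that $m\in\{0,2\}$ holds iff exactly one of $p,\frc{\rho'}$ lies in the arc $(0,x)$. Your distinctness checks are the right ones. That equivalence is the whole content of the proposition, however, and as written the proposal does not establish it.

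\begin{itemize}
\item[] The one case you begin is wrong. If $x,y>p$ then $0<p<x$, so $p$ \emph{does} lie in $(0,x)$, and interlacing requires $\frc{\rho'}>x$, not $\frc{\rho'}<x$. Carried through, your version would give ``interlacing iff $m=1$'', the opposite of the claim.
\item[] The mixed cases are misdescribed. For $x<p<y$ or $y<p<x$ the number $x+y-p$ lies strictly between $x$ and $y$, so the floor is exactly $0$ and $m=1$ always. What must be shown is that both $\alpha$-points then lie on the same side of $x$; this is how the paper disposes of these cases.
\item[] The parity translation is reversed. Since $m_j=1+[x<\alpha_j]$, interlacing means $m_1\neq m_2$, i.e.\ $m_1\equiv m_2+1\pmod 2$.
\end{itemize}

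The circle-walk picture you turn to is sound, and it organizes the casework more cleanly than the paper's split by the position of $x,y$ relative to $p$. But the statement ``wrapping $0$ or $2$ times is equivalent to the endpoints alternating'' is asserted, not proved. It takes two lines:
\begin{itemize}
\item[] the step $0\to p$ never wraps;
\item[] the step $p\to x$, of length $a_1\in(0,1)$, wraps iff $x<p$;
\item[] the step $x\to\frc{\rho'}$, of length $a_2\in(0,1)$, wraps iff $\frc{\rho'}<x$ (equality is excluded by $a_2\neq0$).
\end{itemize}
Hence
\[
m=[x<p]+[\frc{\rho'}<x].
\]
Since $p,\frc{\rho'}\neq0$ and $p\neq x$, the number of $\alpha$-points in $(0,x)$ is
\[
[p<x]+[\frc{\rho'}<x]=1-[x<p]+[\frc{\rho'}<x].
\]
This equals $1$ iff the two indicators agree, i.e.\ iff $m\in\{0,2\}$. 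The repetition cases $x=0$ and $p=\frc{\rho'}$ come out automatically as $m=1$ and non-interlacing. With this inserted, the proof is complete.

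Your Galois paragraph matches the paper's one-line appeal to Theorem~\ref{thm:BH48}. You should add one point: multiplication by $\Delta$ prime to $D$ preserves non-integrality of $\lambda_j-\rho$, $\rho$ and $\rho'$. So every twist again satisfies $K=0$, $l\neq0$, $\lambda\notin E$, and the distinctness you checked.
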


\begin{proof}
Write $x=\frc{\lambda_1}$, $y=\frc{\lambda_2}$, $p=\frc\rho$, so that
$\frc{\rho'}=\frc{x+y-p}$ and, by Proposition~\ref{prop:kappa},
$m=[x<p]+[y<p]+\lfloor x+y-p\rfloor$. Since $0\in\beta$, both
$m_j\in\{1,2\}$, and Theorem~\ref{thm:BH45} gives $(p,q)=(1,1)$ if
$m_1=m_2$ and a definite form otherwise; i.e.\ the form is definite
iff exactly one of $p$, $\frc{x+y-p}$ lies below $x$. Now check the
admissible cases. If $x<p<y$ or $y<p<x$ (one exponent below $p$),
then $\lfloor x+y-p\rfloor=0$, $m=1$, and $\frc{x+y-p}=x+y-p$ lies on
the same side of $x$ as $p$ does. If $p<x,y$, then either $x+y-p<1$,
giving $m=0$ with $p<x<x+y-p$, or $x+y-p\ge1$, giving $m=1$ with
$p$ and $x+y-p-1$ both below $x$. If $x,y<p$, then either $x+y-p>0$,
giving $m=2$ with $x+y-p<x<p$, or $x+y-p<0$, giving $m=1$ with $p$ and
$x+y-p+1$ both above $x$. Thus $m\in\{0,2\}$ exactly in the two
definite cases and $m=1$ in the four cases with $(p,q)=(1,1)$, which
is the displayed identity. The last assertion is Theorem~\ref{thm:BH48}
applied to every Galois twist.
\end{proof}

\subsubsection{General rank: Beukers--Heckman as a specialization}
Let now the seed be a unitary representation of rank $N$ of $F_n$, with
eigenangles $a_i^{(k)}\in(0,1)$ of $g_i$ and
$\beta_1,\dots,\beta_N\in(0,1)$ the eigenangles of $\lambda g_1\cdots g_n$,
so that $m=Nl+\sum_{i,k}a_i^{(k)}-\sum_k\beta_k$ and
$\operatorname{sig}\Ht=(nN-m,\,m)$ by Theorem~\ref{thm:rankN}, with the
degenerate extension Propositions~\ref{prop:dropK}--\ref{prop:wall}.
The hypergeometric case is the following specialization.

\begin{proposition}[the $n=2$ pseudo-reflection seed]\label{prop:bh-seed}
Let $n=2$, $g_1\in U(N)$ with $1\notin\operatorname{spec}g_1$ and
eigenangles $a_1^{(k)}\in(0,1)$, and
$g_2$ a unitary pseudo-reflection with special eigenangle $c\in(0,1)$,
so that $\kappa=\dim K=N-1$. Substituting into
Corollary~\ref{cor:quotsig} (with $\lambda\notin E$): the KLM quotient
has dimension $R=N+1$, and its form has signature
\[
\operatorname{sig}=(N+1-m',\,m'),\qquad
m'=Nl+\sum_k a_1^{(k)}+c-\sum_k\beta_k ,
\]
the $N-1$ vanishing eigenangles of $g_2$ contributing $0$ to the
exponent sum (``drop the $a=0$ channels'').
\end{proposition}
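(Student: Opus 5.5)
This is a direct substitution into Corollary~\ref{cor:quotsig} (equivalently Proposition~\ref{prop:dropK}, since $\lambda\notin E$ gives $r=0$). The plan has three steps: compute the kernel data $K$, $\kappa$, $R$; then evaluate the winding integer $m$; then read off the signature.

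\emph{Kernel data.} Since $1\notin\operatorname{spec}g_1$, we have $W_1=0$. A unitary pseudo-reflection $g_2$ equals the identity on a hyperplane and acts by $e(c)\neq1$ on its orthogonal line, because $c\in(0,1)$. Hence $W_2=\Ker(g_2-1)$ is exactly that hyperplane, so $\kappa_2=N-1$. This gives $K=0\oplus W_2$, $\kappa=N-1$, and
\[
R=nN-\kappa=2N-(N-1)=N+1 .
\]
Because $r=0$, the KLM quotient is $V^{\oplus 2}/K$, of dimension $R=N+1$.

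\emph{Winding integer.} In the $[0,1)$-convention the eigenangles of $g_2$ are $c$, taken once, and $0$, taken $N-1$ times. Therefore
\[
\sum_{j,k}a_j^{(k)}=\sum_k a_1^{(k)}+c+(N-1)\cdot0 .
\]
The eigenangles $\beta_k\in(0,1)$ of $\lambda P_2=\lambda g_1g_2$ enter unchanged. Hence $m(l)$ as defined in Section~\ref{sec:prelim} equals $m'$ exactly; the vanishing angles drop out.

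\emph{Conclusion.} Corollary~\ref{cor:quotsig} with $r=0$ gives signature $(R-m,\,m)=(N+1-m',\,m')$. There is no real obstacle here. The only points to check are that the pseudo-reflection's fixed hyperplane is precisely $\Ker(g_2-1)$, which uses $c\notin\Z$, and that the convention $a=0$ for eigenvalue-$1$ channels is the one under which Proposition~\ref{prop:dropK} was proved. Both hold by the definitions in Section~\ref{sec:prelim}.
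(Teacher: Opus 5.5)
Your proposal is correct and matches the paper's proof: it substitutes directly into Corollary~\ref{cor:quotsig} with $n=2$, $\kappa=N-1$, $r=0$, and the $N-1$ eigenvalue-one channels of $g_2$ contribute $0$ to the exponent sum, so $m=m'$. Your explicit checks, that $\Ker(g_2-1)$ is exactly the fixed hyperplane because $c\notin\Z$ and that $R=2N-(N-1)=N+1$, simply spell out what the paper's one-line proof leaves implicit.
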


\begin{proof}
Corollary~\ref{cor:quotsig} with $n=2$, $\kappa=N-1$, $r=0$: the
exponent sum $\sum'$ runs over the nonzero eigenangles, which are the
$a_1^{(k)}$ and $c$.
\end{proof}

The comparison with Theorem~\ref{thm:BH45} goes through the
middle-convolution output data
\begin{equation}\label{eq:dict-bh-N}
\alpha=\{\frc{l+\mu_k}\}_{k}\cup\{\frc{l}\},
\qquad
\beta=\{\frc{l+a_1^{(k)}}\}_{k}\cup\{0\},
\end{equation}
with $e(\mu_k)$ the spectrum of $g_1g_2$. Theorem~\ref{thm:BH45}
requires $\alpha_j\neq\beta_k$ for all $j,k$; under the hypotheses of
Proposition~\ref{prop:bh-seed} and $\lambda\notin E$ this is
equivalent to
$\operatorname{spec}g_1\cap\operatorname{spec}(g_1g_2)=\varnothing$
(the remaining coincidences $\frc{l+\mu_k}=0$, $\frc{l+a_1^{(k)}}=\frc l$
and $\frc l=0$ are excluded by $\lambda\notin E$,
$1\notin\operatorname{spec}g_1$ and $l\in(0,1)$), and it does not
follow from those hypotheses alone: for
$g_1=\operatorname{diag}(e(\tfrac15),e(\tfrac25))$,
$g_2=\operatorname{diag}(e(\tfrac3{10}),1)$, $l=\tfrac1{10}$ the angle
$\tfrac12$ occurs in both sets. For disjoint output data, the
identification of the local monodromy data of the middle convolution
with the presentation $\Gamma(\alpha;\beta)$ would identify the
quotient signature of Proposition~\ref{prop:bh-seed} with the
signature of Theorem~\ref{thm:BH45}, signed refinement
$q'=\#\{j\mid(-1)^{j+m_j}=\mp1\}$ included; that identification is
not carried out here (see \emph{Scope} below).

\paragraph{Scope.} The output dictionary above does not reach all
Beukers--Heckman data. Since $g_1$ and $g_1g_2$ are unitary for a
positive definite form and $g_2$ is a pseudo-reflection, the spectra
of $g_1$ and of $g_1g_2$ interlace on the circle whenever they are
disjoint --- the standard constraint on unitary rank-one
perturbations, cf.\ \cite[Thm.~3.14]{Heckman-lectures}. For
completeness: write $g_2=1+(e(c)-1)vv^{\dagger}$ with
$\lVert v\rVert=1$, $c\in(0,1)$, and diagonalize $g_1$ with
eigenvalues $e(\theta_k)$ and $v=(v_k)$ in that basis. If
$g_1g_2x=e(\varphi)x$ then $(e(\varphi)-g_1)x=(e(c)-1)(v^{\dagger}x)\,g_1v$;
for $e(\varphi)\notin\operatorname{spec}g_1$ this forces
$v^{\dagger}x\neq0$, and applying $v^{\dagger}$ to
$x=(e(c)-1)(v^{\dagger}x)(e(\varphi)-g_1)^{-1}g_1v$ gives
$1=(e(c)-1)\sum_k|v_k|^{2}\,e(\theta_k)/(e(\varphi)-e(\theta_k))$.
Using $1/(e(y)-1)=-\tfrac12-\tfrac i2\cot\pi y$ and
$\sum_k|v_k|^{2}=1$, the real parts agree and the imaginary parts
give the secular equation
$\sum_k|v_k|^{2}\cot\pi(\varphi-\theta_k)=\cot\pi c$. Disjointness
first forces the spectrum of $g_1$ to be simple (otherwise an
eigenspace of dimension at least two contains a nonzero vector
orthogonal to $v$, which is a common eigenvector of $g_1$ and
$g_1g_2$); it then also forces every $v_k\neq0$ (if $v_k=0$ then
$e(\theta_k)$ is a common eigenvalue). Hence on each arc between consecutive
$\theta_k$ the left side decreases strictly from $+\infty$ to
$-\infty$ and the equation has exactly one root: the $N$ eigenangles
of $g_1g_2$ interlace those of $g_1$
--- and the rotation by $l$ does not affect interlacing. Hence, for
disjoint output data arising from a unitary seed, deleting the angle
$\frc l$ from $\alpha$ and $0$ from $\beta$ leaves two interlacing
$N$-element sets. We call the image of these seeds the
\emph{unitary-seed subfamily}; we have proved that it is contained
in the interlacing data just described, not that it exhausts them. It
is a proper subfamily of all Beukers--Heckman data: $\alpha=(\tfrac1{10},\tfrac2{10},\tfrac3{10})$,
$\beta=(0,\tfrac6{10},\tfrac7{10})$ is disjoint BH data with circular
pattern $AAABBB$, and no deletion of one $\alpha$ and one $\beta$
makes the rest interlace.

Consequently, \emph{on the unitary-seed subfamily} (disjoint output
data satisfying the hypotheses of Theorem~\ref{thm:BH45}),
\cite[Thm.~4.5]{BH} would be the $(n=2,\ K\neq0)$ instance of the
rank-$N$ formula on $V^{\oplus n}/(K\oplus L)$
(Corollary~\ref{cor:quotsig}): since that formula is now proven, the
Beukers--Heckman signature theorem on this subfamily \emph{reduces}
to the identification of the middle-convolution local data with the
$\Gamma(\alpha;\beta)$-presentation used above. That identification is classical in substance (Katz,
Dettweiler--Reiter, and Levelt's rigidity of hypergeometric groups)
but it is not written out; until it is, we claim a reinterpretation and a reduction, not a new
proof. Recovering the full Beukers--Heckman family would require a
different seed (a form of indefinite signature, or non-unitary input),
and is not attempted here.
\subsection{The Hecke and Temperley--Lieb specialization}\label{sec:hecke}

\begin{corollary}[the scalar Hecke family]\label{cor:hecke}
Let $g\in U(N)$ with $\operatorname{spec}(g)\subseteq\{1,\mu\}$,
$\mu=e(a)$, $a\in(0,1)$, $d:=\dim\Ker(g-\mu)\ge1$,
$\kappa_1:=\dim\Ker(g-1)=N-d$, and $c\in U(1)$. Consider the scalar input
$\rho_{g,c}$ of \cite[Lem.~4.1]{N-Hecke}
--- $\rho(x_j)=g$ for all $j$, $\rho(\sigma_i)=cI_N$ --- whose KLM
quotient carries the module structure classified in
\cite[Thm.~4.4]{N-Hecke}, together with the invariant Hermitian form of
\cite[\S5]{N-KLM}; unitarity of the input ($g\in U(N)$, $\lvert
c\rvert=1$) is what makes the form braid-invariant, but $c$ does not
enter $\Ht$, hence not the signature. Put $j:=\lfloor l+na\rfloor$.
\begin{itemize}
\item[(i)] (Non-resonant: $l+na\notin\Z$.) Then $r=0$, and the form
on the $nd$-dimensional quotient has signature
\[
\bigl(d(n-j),\;dj\bigr);
\]
it is definite if and only if $l+na<1$ or $l+na>n$.
\item[(ii)] (Resonant: $l+na=J\in\{1,\dots,n\}$.) Then $r=d$, and the
form on the $(n-1)d$-dimensional quotient has signature
\[
\bigl(d(n-J),\;d(J-1)\bigr);
\]
it is definite if and only if $J\in\{1,n\}$.
\end{itemize}
The dimensions match \cite[Thm.~4.4(i)]{N-Hecke} with $\ell=r$.
\end{corollary}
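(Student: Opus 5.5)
The plan is to specialize Corollary~\ref{cor:quotsig} to the tuple $g_1=\dots=g_n=g$ and evaluate its spectral data by hand. The braid part $cI_N$ is unitary, so Corollary~\ref{cor:main-braid} identifies the canonical form on the KLM quotient with the induced form of Corollary~\ref{cor:quotsig}. Since $c$ does not occur in \eqref{eq:blocks}, it drops out of the signature.

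The first step is to compute the data entering $m(l)$. Each seed has $d$ eigenangles equal to $a$ and $N-d$ equal to $0$. Hence $\kappa=n(N-d)$, $R=nd$, and $\sum_{j,k}a_j^{(k)}=nda$. The product is $P_n=g^n$; because $g$ is unitary it is diagonalizable, so $P_n$ has eigenvalue $e(na)$ with multiplicity $d$ and $1$ with multiplicity $N-d$. The spectrum of $\lambda P_n$ therefore gives $d$ angles $\frc{l+na}$ and $N-d$ angles $\frc{l}=l$. Substituting,
\[
m(l)=Nl+nda-d\frc{l+na}-(N-d)l=d\bigl(l+na-\frc{l+na}\bigr)=d\lfloor l+na\rfloor .
\]
The resonance multiplicity $r(l)$ counts the angles equal to $0$. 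Since $l\in(0,1)$, the $1$-eigenvalues never contribute, so $r=d$ exactly when $l+na\in\Z$ and $r=0$ otherwise. Note that $l+na\in(0,n+1)$ and $l+na>0$, so a resonance has $J:=l+na\in\{1,\dots,n\}$.

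Inserting these values into $(R-m,\,m-r)$ gives the two cases. Off resonance the signature is $(d(n-j),dj)$ on a quotient of dimension $nd$. At the resonance $l+na=J$ we have $m=dJ$ and $r=d$, giving $(d(n-J),d(J-1))$ on a quotient of dimension $(n-1)d$.

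For definiteness, recall $d\ge1$. In the non-resonant case the form is definite exactly when $j=0$ or $j=n$. Because $l+na\in(0,n+1)$, this means $l+na<1$ or $l+na>n$. In the resonant case the form is definite exactly when $J=n$ or $J=1$.

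The final check is the dimension match with \cite[Thm.~4.4(i)]{N-Hecke}, namely $\ell=r$. It amounts to comparing $R-r\in\{nd,(n-1)d\}$ with the dimensions recorded there, and I expect this to be a bookkeeping comparison with the cited classification. The only step that needs care is the spectrum of $\lambda P_n$ and the $[0,1)$ convention at resonance. That convention is exactly what gives $m=dJ$ rather than $d(J-1)$, and Corollary~\ref{cor:quotsig} already builds it in, so no genuine obstacle is expected.
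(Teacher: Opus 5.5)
Your proposal is correct and matches the paper's proof: specialize Corollary~\ref{cor:quotsig} to the constant tuple $g_j\equiv g$, read off $\kappa=n\kappa_1$, $R=nd$, and the spectrum of $\lambda g^{n}$ ($l$ with multiplicity $\kappa_1$, $\frc{l+na}$ with multiplicity $d$), obtaining $r\in\{0,d\}$ and $m=d\lfloor l+na\rfloor$ (equal to $dJ$ on the walls by the $[0,1)$-convention), then read definiteness off the vanishing of one entry of $(R-m,\,m-r)$. As in the paper, the dimension match with the cited classification is asserted rather than derived.
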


\begin{proof}
For the constant tuple $g_j\equiv g$ one has
$K=\Ker(g-1)^{\oplus n}$, $\kappa=n\kappa_1$, $R=nN-n\kappa_1=nd$, and $P_n=g^{n}$.
The eigenangles of $\lambda g^{n}$ are $l$ with multiplicity $\kappa_1$
(nonzero, since $\lambda\neq1$) and $\frc{l+na}$ with multiplicity
$d$; hence $r(l)=d$ exactly when $l+na\in\Z$, and
\[
m(l)=(\kappa_1+d)\,l+nda-\bigl(\kappa_1l+d\,\frc{l+na}\bigr)
=d\bigl(l+na-\frc{l+na}\bigr)=d\,\lfloor l+na\rfloor ,
\]
the $[0,1)$-convention giving $m=dJ$ on the walls. Everything is now
Corollary~\ref{cor:quotsig}; definiteness is the vanishing of one
entry of $\bigl(R-m,\,m-r\bigr)$.
\end{proof}

\begin{remark}[consequences]\label{rem:hecke-consequences}
(a) \emph{Existence of definite parameters.} The family admits a
definite window in $\lambda$ if and only if
$a\in(0,\tfrac1n)\cup(\tfrac{n-1}n,1)$: for $a<\tfrac1n$ the window
$0<l\le1-na$, positive definite (at the resonant endpoint $l=1-na$,
$J=1$, the definite quotient has dimension $d(n-1)$); for
$a>\tfrac{n-1}n$ the window $n-na\le l<1$, negative definite
(endpoint $J=n$) --- uniformly in $d$ and $\kappa_1$.

(b) \emph{Burau/Squier.} For $N=1$, $g=t=e(a)$, $c=1$, at the
resonant parameter $\lambda=t^{-n}$ (so $l=1-\frc{na}$, requiring
$t^{n}\neq1$) one has $J=\lceil na\rceil$ and the quotient is the
reduced Burau representation \cite[Ex.~4.10]{N-Hecke} --- here $K=0$ (the
seed $t\neq1$ has no eigenvalue-one channel: $\kappa_1=0$), and it
is the one-dimensional \emph{resonance} kernel $L$ that appears at
$\lambda=t^{-n}$; passing to the quotient by $L$ is the classical
reduction from unreduced to reduced Burau ---; part (ii) gives
definiteness exactly for $a\in(0,\tfrac1n)\cup(\tfrac{n-1}n,1)$,
recovering the classical definiteness interval of the Squier form
\cite{Squier}.
The same interval reappearing in (a) says that, for the whole scalar
family, definiteness is possible precisely on the Burau window.

(c) \emph{Always-indefinite families.} In the double-resonant $n=2$
family of \cite[Thm.~4.4(iv)(b)]{N-Hecke}
($\operatorname{spec}(g)\setminus\{1\}=\{\nu,-\nu\}$,
$\lambda=\nu^{-2}$), the two channels resonate simultaneously with
consecutive values $\{J,J+1\}=\{1,2\}$, so the form has signature
$(m_{\nu},m_{-\nu})$ (in one order or the other) and is indefinite
whenever both multiplicities are positive --- which the family
requires. Note the resulting distinction between definiteness of the
canonical form and unitarizability: the two channels carry definite
forms of opposite signs, and flipping the sign on one channel --- the
channels are preserved by the braid action --- produces a
positive-definite invariant Hermitian form, so this $H_2(1)$-family
\emph{is} unitarizable although $\Ht$ is indefinite. For $\mu=-1$ ($a=\tfrac12$): non-resonantly
$j=\lfloor l+\tfrac n2\rfloor\in\{1,\dots,n-1\}$, and resonantly
($n$ odd, $\lambda=-1$) $J=\tfrac{n+1}2\in\{2,\dots,n-1\}$ for
$n\ge3$; the form is indefinite for every $n\ge2$ and every
admissible $\lambda$. This is forced structurally: there the braid
generators carry a Jordan block \cite[Rem.~4.7]{N-Hecke}, and an operator
preserving a definite Hermitian form is unitary, hence semisimple ---
so here, unlike in the $\{\nu,-\nu\}$ family, \emph{no} invariant
definite form exists at all; the signature quantifies the failure.

(d) \emph{Position.} The corollary answers, for the scalar family,
the definiteness question raised in \cite[Rem.~4.12 and \S6]{N-Hecke} as the
specialization of \cite[Problem~18]{N-KLM}; through
\cite[Rem.~4.12]{N-Hecke}, the definite windows of (a) yield unitary
Temperley--Lieb representations.
\end{remark}
\subsection{A non-rigid example: the four-punctured sphere in rank
two}\label{subsec:nonrigid}

Methods that reconstruct the tuple from its local conjugacy classes
--- the Katz algorithm foremost --- apply when rigidity holds, i.e.\
when those classes determine the tuple up to simultaneous
conjugation; this concerns reconstruction from local data, not the
Hodge-theoretic formulas for middle convolution as such. The smallest case where they do not is
classical: $n=3$, $N=2$, generic regular classes, i.e.\ local systems
on the four-punctured sphere in rank two --- the fourth puncture, at
$\infty$, carrying the monodromy $P_3^{-1}$, $P_3=g_1g_2g_3$.
Recall that the \emph{relative character variety} is the space of
tuples $(g_1,\dots,g_n)$ with prescribed conjugacy classes at every
puncture (including $P_n^{-1}$ at $\infty$), up to simultaneous
conjugation; its points are the isomorphism classes of
representations of the free group
$F_n=\pi_1(\mathbb A^1\smallsetminus\{t_1,\dots,t_n\})$ with
fixed boundary data. The form $\Ht$ itself is defined at the level
of tuples, not of conjugacy classes, but its inertia is invariant
under simultaneous (unitary) conjugation, so signature statements
descend to the variety. Since
$m(l)$ depends on the tuple only through $\operatorname{spec}g_j$ and
$\operatorname{spec}(\lambda P_3)$, Corollary~\ref{cor:quotsig} has
the following immediate consequence: \emph{the signature is constant
on every relative character variety with fixed boundary conjugacy
classes at all four punctures, even when that variety has positive
dimension} --- here the relative character variety is the cubic
surface of the Painlev\'e~VI family, of complex dimension two, and
the statement lives on its unitary locus, of real dimension two. Rigidity of the signature persists exactly
where rigidity of the representation fails. Wall-crossing occurs
\emph{transversally} to these relative character varieties: moving
the boundary class at $\infty$ (equivalently
$\operatorname{spec}P_3$) at fixed $\lambda$, or moving $\lambda$.
Concretely, fixing only the classes of $g_1,g_2,g_3$ yields a
three-dimensional partial moduli space fibered by the relative
character varieties; it is partitioned into chambers by the value of
$m$, each chamber a union of fibers, separated by the walls
$\lambda^{-1}\in\operatorname{spec}P_3$, across which the signature
jumps by the crossing multiplicity --- wall-crossing in the boundary
data, at fixed $\lambda$.

A concrete instance: fix eigenangles
$a_1=(0.15,\,0.55)$, $a_2=(0.25,\,0.70)$, $a_3=(0.10,\,0.80)$ and
$l=0.30$, and the two-parameter slice $g_1=D_1$,
$g_2=R(\theta)D_2R(\theta)^{T}$,
$g_3=PR(\varphi)D_3R(\varphi)^{T}P^{\dagger}$ ($R$ a real rotation,
$P=\operatorname{diag}(1,i)$). The slice is transverse to the
fibration ($\operatorname{spec}P_3$ varies along it) and realizes
both chambers $m\in\{2,3\}$; along a fixed $\varphi$ it crosses one
wall \emph{in the boundary data, at fixed $\lambda$}, with on-wall
inertia $(3,2,1)=(R-m,\,m-r,\,r)$ exactly as in
Proposition~\ref{prop:wall}. No step of
this computation diagonalizes anything beyond the two spectra
$\operatorname{spec}g_j$ and $\operatorname{spec}(\lambda P_3)$.

\section{Discussion}\label{sec:discussion}

\subsection{Indefinite seed forms}\label{subsec:indefinite}
The invariance theory of \cite{N-KLM} allows the seed representation
to be unitary relative to an indefinite non-degenerate form $H$ of
signature $(p,q)$, with $\Ht_{jk}=\lambda^{jk}H(g_j^{-1}-\lambda_{jk}I)(g_k-1)$
and $\Ker\Ht=K+L$ unchanged \cite[Thm.~9]{N-KLM}. Nothing in the
present paper addresses that case. What survives and what is lost
can be stated precisely. If $g^{\dagger}Hg=H$, the Cayley transform
$C(g)$, where defined, is no longer Hermitian but satisfies
$C(g)^{\dagger}H=HC(g)$, i.e.\ $HC(g)$ is Hermitian; and the pencil
identity of Proposition~\ref{prop:pencil} persists in the form
$\Ht_H(l)=\sin(\pi l)\,B_H+\cos(\pi l)\,D^{\dagger}HD$
with $B_H=\Ht_H(\tfrac12)$, so
$\frac{d}{dl}\bigl[\Ht_H(l)/\sin\pi l\bigr]
=-\pi\csc^{2}(\pi l)\,D^{\dagger}HD$
. What is lost is
the sign: for indefinite $H$ the right-hand side is in general not
semidefinite, so the Loewner monotonicity that drives
Lemma~\ref{lem:crossing} and Corollary~\ref{cor:windows} is no
longer available; and an element of $U(p,q)$ need not have its
spectrum on the unit circle, so the eigenangles $\Theta$ of
Section~\ref{sec:cayley} are not defined either. For tuples whose elements and ordered product are
\emph{elliptic} (diagonalizable with unimodular spectrum, e.g.\ the
finite-order elements of $U(p,q)$) the eigenangles still make sense
and one may ask whether a signed version of the winding number $m$,
weighting each eigenangle by the sign of $H$ on the corresponding
eigenline, computes the signature. We leave this to future work.

\section*{Acknowledgements}
The author thanks Shunya Adachi, Michael Dettweiler, Yoshishige
Haraoka, Yasunori Okada and Stefan Reiter for valuable discussions
and comments. This work was supported by JST SPRING, Grant Number
JPMJSP2109.

This manuscript was prepared with the assistance of AI language
models, which were used for literature search, for drafting and
editing the exposition, and for writing code used in internal
consistency checks. All mathematical statements and their proofs are
the author's own and have been verified by the author.


\begin{thebibliography}{99}
\bibitem{Squier} C.~C. Squier, The Burau representation is unitary,
Proc. Amer. Math. Soc. 90 (1984), 199--202.
\bibitem{BH} F. Beukers, G. Heckman, Monodromy for the hypergeometric
function ${}_nF_{n-1}$, Invent. Math. 95 (1989), 325--354.
\bibitem{Haraoka1994} Y. Haraoka, Finite monodromy of Pochhammer
equation, Ann. Inst. Fourier (Grenoble) 44 (1994), 767--810.
\bibitem{KZ} V.~G. Knizhnik, A.~B. Zamolodchikov, Current algebra and
Wess--Zumino model in two dimensions, Nuclear Phys. B 247 (1984),
83--103.
\bibitem{Haraoka2020} Y. Haraoka, Multiplicative middle convolution
for KZ equations, Math. Z. 294 (2020), 1787--1839.
\bibitem{DS} M. Dettweiler, C. Sabbah, Hodge theory of the middle
convolution, Publ. Res. Inst. Math. Sci. 49 (2013), 761--800.
\bibitem{HJ} R.~A. Horn, C.~R. Johnson, Matrix Analysis, 2nd ed.,
Cambridge Univ. Press, 2013.
\bibitem{Wall} C.~T.~C. Wall, Non-additivity of the signature,
Invent. Math. 7 (1969), 269--274.
\bibitem{Jones} V.~F.~R. Jones, Hecke algebra representations of
braid groups and link polynomials, Ann. of Math. (2) 126 (1987),
335--388.
\bibitem{Wenzl} H. Wenzl, Hecke algebras of type $A_n$ and
subfactors, Invent. Math. 92 (1988), 349--383. %
\bibitem{FLW} M. Freedman, M. Larsen, Z. Wang, The two-eigenvalue
problem and density of Jones representation of braid groups, Comm.
Math. Phys. 228 (2002), 177--199. %
\bibitem{LR} M.~J. Larsen, E.~C. Rowell, Unitary braid
representations with finite image, Algebr. Geom. Topol. 8 (2008),
no.~4, 2063--2079.
\bibitem{GLPS} N. Geer, A.~D. Lauda, B. Patureau-Mirand, J. Sussan,
Density and unitarity of the Burau representation from a
non-semisimple TQFT, arXiv:2402.13242. %
\bibitem{Scherich} N. Scherich, Discrete real specializations of
sesquilinear representations of the braid groups, Algebr. Geom.
Topol. 23 (2023), 2009--2028.
\bibitem{HN} K. Hiroe, H. Negami, Long--Moody construction of braid
representations and Katz middle convolution, arXiv:2303.05770.
\bibitem{N-KLM} H. Negami, Long--Moody construction of braid group
representations and Haraoka's multiplicative middle convolution for
KZ-type equations, arXiv:2503.14840v3 (2026). %
\bibitem{N-Hecke} H. Negami, Hecke algebra representations from the
Katz--Long--Moody construction, arXiv:2608.01727v2 (2026).
\bibitem{Heckman-lectures} G. Heckman, Tsinghua lectures on
hypergeometric functions, lecture notes, December 8, 2015.
\url{https://www.math.ru.nl/~heckman/tsinghua.pdf}
\end{thebibliography}
\end{document}